\newcommand{\meqref}[1]{\text{Eq}.~\eqref{#1}}
\newcommand{\mref}[1]{Sec.\,\,\!$\ref{#1} $}
\newcommand{\mfig}[1]{Fig.\,\,\!$\ref{#1} $}
\DeclarePairedDelimiter\floor{\lfloor}{\rfloor}
\def\paragraph{\@startsection{paragraph}{4}%
	\z@\z@{-\fontdimen2\font}%
	{\normalfont\bfseries}}
\newlength\shlength
\newcommand\xshlongvec[2][0]{\setlength\shlength{#1pt}%
	\stackengine{-5.6pt}{$#2$}{\smash{$\kern\shlength%
			\stackengine{7.55pt}{$\mathchar"017E$}%
			{\rule{\widthof{$#2$}}{.57pt}\kern.4pt}{O}{r}{F}{F}{L}\kern-\shlength$}}%
	{O}{c}{F}{T}{S}}
\newcommand{\RN}[1]{%
	\textup{\uppercase\expandafter{\romannumeral#1}}%
}
\newtheorem{thm}{Theorem}[subsection]
\newtheorem{lem}[thm]{Lemma}
\newtheorem{cor}[thm]{Corollary}
\newtheorem{defn}[thm]{Definition} 
\newtheorem{example}[thm]{Example} 
\newtheorem{remark}[thm]{Remark}
\def\CC{{\mathbb C}}
\def\<{\langle}
\def\>{\rangle}
\numberwithin{equation}{section}
\newcommand{\abs}[1]{\lvert#1\rvert}
\begin{document}

 \title{A quantum approach for digital signal processing}

	\author[1]{Alok Shukla \thanks{Corresponding author.}}
	\author[2]{Prakash Vedula}
	\affil[1]{School of Arts and Sciences, Ahmedabad University, India}
	\affil[1]{alok.shukla@ahduni.edu.in}
	\affil[2]{School of Aerospace and Mechanical Engineering, University of Oklahoma, USA}
	\affil[2]{pvedula@ou.edu}
	

	\maketitle


	\date{}
	
	\maketitle

\begin{abstract}
We propose a novel quantum approach to signal processing, including a quantum algorithm for low-pass and high-pass filtering, based on the sequency-ordered Walsh-Hadamard transform. We present quantum circuits for performing the sequency-ordered Walsh-Hadamard transform, as well as quantum circuits for low-pass, high-pass, and band-pass filtering. Additionally, we provide a proof of correctness for the quantum circuit designed to perform the sequency-ordered Walsh-Hadamard transform.
The performance and accuracy of the proposed approach for signal filtering were illustrated using computational examples, along with corresponding quantum circuits, for DC, low-pass, high-pass, and band-pass filtering. 
Our proposed algorithm for signal filtering has a reduced gate complexity and circuit depth of $O (\log_2 N)$, compared to at least $O ((\log_2 N )^2)$ associated with Quantum Fourier Transform (QFT) based filtering (excluding state preparation and measurement costs). In contrast, classical Fast Fourier Transform (FFT) based filtering approaches have a complexity of $O (N \log_2 N )$. This shows that our proposed approach offers a significant improvement over QFT-based filtering methods and classical FFT-based filtering methods. Such enhanced efficiency of our proposed approach holds substantial promise across several signal processing applications by ensuring faster computations and efficient use of resources via reduced circuit depth and lower gate complexity.
\end{abstract}

\section{Introduction}\label{sec:intro}

Signal processing is an important field that deals with the manipulation and analysis of signals to extract
meaningful information or improve their quality. DC, low-pass and high-pass filtering are essential techniques in signal processing. These techniques are used in various applications such as audio processing,
communication systems, medical signal analysis, and more, enabling the extraction of relevant information
from complex signals.

Traditionally, a low-pass filter eliminates high-frequency components (based on Fourier decomposition), allowing only low-frequency components (DC and low-frequency signals) to pass through, which is useful in removing noise from signals. On the other hand, a high-pass filter allows high-frequency components to pass through, eliminating low-frequency noise or baseline drift from signals. Similar to frequency filtering, one can use sequency filtering based on Walsh basis functions in signal processing applications. Sequency filtering using Walsh basis functions involves selectively preserving the desired sequency components of a signal represented in the Walsh domain while eliminating the unwanted sequency components. The Walsh domain refers to the representation of a signal obtained using the Walsh-Hadamard transform. The notion of sequency and the Walsh-Hadamard transform will also be briefly reviewed in \mref{sec:Sequency-ordered-WH-transforms}.

The paper introduces a new quantum approach for signal filtering using the Walsh-Hadamard transform. It presents quantum circuits for low-pass, high-pass, and band-pass filtering, along with a quantum circuit for performing the sequency-ordered Walsh-Hadamard transform. The correctness of the approach is proven and its performance and accuracy are demonstrated through computational examples. The algorithm is computationally efficient, with gate complexity and circuit depth of $O (\log_2 N)$, making it more efficient compared to alternative quantum filtering methods based on QFT and classical FFT-based approaches. 
The proposed approach shows promising potential for faster and more resource-efficient signal processing applications.

We note that usually the Walsh-Hadamard transform in natural order appears in quantum algorithms (for example, Deutsch-Jozsa algorithm \cite{deutsch1992rapid}, Bernstein–Vazirani algorithm \cite{bernstein1993quantum} and its probabilistic generalization \cite{shukla2023generalization}, Simon's algorithm \cite{simon1997power}, Grover's algorithm \cite{grover1996fast}, Shor's Algorithm \cite{shor1999polynomial}, etc.), often to get a uniform superposition of quantum states at the beginning of the quantum algorithm. Walsh basis functions in natural order have also been used in the solution of non-linear ordinary differential equations \cite{SHUKLA2022127708}. However, for image and signal processing applications, the use of sequency-ordered Walsh-Hadamard transforms is often more convenient (see ~\cite{kuklinski1983fast, zarowski1985spectral}) compared to natural-ordered Walsh-Hadamard transforms. Sequency-ordered Walsh-Hadamard transforms offer a better representation of a signal in the transformed domain and they have better energy compaction properties. The hybrid classical-quantum algorithm approach for image processing discussed in \cite{Shukla2022} uses the Walsh-Hadamard transforms in sequency ordering. We note that in \cite{Shukla2022}, the quantum Walsh-Hadamard transform is performed in natural order, followed by classical computations to obtain the transforms in sequency ordering. In this and other applications, for example, cryptography~\cite{lu2016walsh}, solution of non-linear ordinary differential equations and partial differential equations~\cite{beer1981walsh,ahner1988walsh,gnoffo2014global, gnoffo2015unsteady, gnoffo2017solutions},
wherein the Walsh-Hadamard transform in sequency ordering is needed, it is desirable to have a quantum circuit to compute the Walsh-Hadamard transforms directly in sequency ordering.

We discuss Walsh series representation of a signal in \mref{Sec:fun_representation} and a quantum circuit that performs the sequency-ordered Walsh-Hadamard transform is given in \mref{Sec:seq_WH}. We also provide proof of correctness \mref{sec:Lemma}, establishing that the quantum circuit given in  \mref{Sec:seq_WH} produces the desired sequency-ordered Walsh-Hadamard transformed state. This circuit will be a key component in our approach to low-pass, high-pass and band-pass filtering of signals. In  \mref{Sec:low-pass-high-pass}, we present a quantum algorithm, Algorithm \ref{alg_filtering}, for low-pass and high-pass filtering in the sequency domain. 
We also provide computational examples, along with quantum circuits, to illustrate the application of our filtering algorithm for low-pass and high-pass filtering in \mref{Sec:low-pass-high-pass}. 
Our quantum approaches for DC and band-pass filtering using Walsh-Hadamard transforms are discussed in \mref{Sec:DC} and \mref{Sec:bandpass}, respectively. 
Quantum circuits and illustrative computational examples for DC and band-pass filtering are also provided in \mref{Sec:DC} and \mref{Sec:bandpass}.
Filtered signals and corresponding spectra obtained from our proposed quantum approach in all the cases (i.e., DC, low-pass, high-pass and band-pass filtering) match the expected results. 

The complexity analysis for our approach is presented in \mref{Sec:complexity}. 
The gate complexity and circuit depth of the quantum circuit based on Algorithm \ref{alg_filtering} are shown to be $O (\log_2 N)$, which is a significant improvement over existing quantum filtering methods based on Quantum Fourier Transform (QFT). Traditional QFT-based methods require at least $O((\log_2 N)^2)$ gates and circuit depth, while classical Fast Fourier Transform (FFT)-based approaches have a complexity of $O (N \log_2 N)$. Considering the discussion above, our proposed approach holds promise for faster and more resource-efficient computations in various signal-processing applications.

\subsection{Notation} \label{sec:notation}Before proceeding further, we fix some convenient notations used in the rest of the paper.  
	\begin{itemize}
		\item $ \oplus $ : $ x \oplus y $ will denote $ x + y \mod 2 $.
	    \item $ s \cdot x $ : For $ s = s_{n-1}\,s_{n-2}\,\ldots \, s_1\, s_0 $ and $ x =  x_{n-1}\,x_{n-2} \,\ldots \, x_1\, x_0 $ with $ s_i,\, x_i \in \{0,1\}$, $ s \cdot x $ will denote the bit-wise dot product of  $s $ and $ x $ modulo $ 2 $, i.e.,  $s \cdot x :=  s_{0}x_{0} + s_{1}x_{1}+ \ldots + s_{n-1}x_{n-1} \pmod 2 $.
	    \item  $ s(m) $ : For  $ s =  s_{n-1}\,s_{n-2}\,\ldots\, s_2\,s_1\,s_0 $ and $ 1 \leq m \leq n $, $ s(m) $ denotes the string formed by keeping only the $ m $ least significant bits of $ s $, i.e., $ s(m) = s_{m-1}\,s_{m-2}\,\ldots\, s_2\,s_1\,s_0 $. 
	    \item On a few occasions,  by abuse of notations, a non-negative integer $ s $, such that  $ s = \sum_{j=0}^{n-1} \, s_j 2^j $, will be used to represent the $ n $-bit string
	    $ s_{n-1}\,s_{n-2}\,\ldots\, s_2\,s_1\,s_0 $.   
	\end{itemize}

\section{Walsh-Hadamard transforms} \label{sec:Sequency-ordered-WH-transforms}
In this section, a brief review of the Walsh basis functions in both sequency and natural orderings will be provided. A review of Walsh-Hadamard transforms in sequency and natural orderings will also be presented. Subsequently, a quantum circuit for performing the Walsh-Hadamard transform in sequency ordering will be described. Interested readers may refer to \cite{beauchamp1975walsh} for further details on Walsh basis functions, Walsh-Hadamard transforms, and their applications. 

\subsection{Walsh basis functions in sequency and natural orderings} \label{Sec:WH_SN}
Walsh basis functions $ W_k (t) $ for $  k =0,~1,~2, ~\ldots~ N-1 $  in sequency order are defined as follows
\begin{align}
	W_0(t) &= 1 \quad \text{for } 0 \leq t \leq 1,  \nonumber\\
	W_{2k} (t) &= W_k(2t) + (-1)^k W_k (2t -1 ), \nonumber \\
	W_{2k+1} (t) &= W_k(2t) - (-1)^k W_k (2t -1 ), \nonumber\\
	W_k(t) &= 0 \quad \text{for } t < 0 \text{ and } t >1, \label{eq_def_Walsh}
\end{align}
where $ N $ is an integer of the form $ N = 2^n$. 
For $ N =8 $, Walsh basis functions in sequency order are shown in \mfig{fig_walsh_sequency}. 
The number of sign changes (or zero-crossings) for Walsh basis functions increases as the orders of the functions increase. It is worth noting that the number of sign changes varies in accordance with the order of the respective Walsh basis function, as illustrated in \mfig{fig_walsh_sequency} for the case of $N=8$.
For instance, the functions $W_0(t)$ to $W_7(t)$ exhibit a progressively increasing number of sign changes from $0$ to $7$, respectively. 
Analogous to the concept of frequency in Fourier analysis, the concept of sequency, as defined below, plays a fundamental role in signal analysis based on Walsh basis functions.

\begin{figure}
	\centering
	\begin{subfigure}{0.24\textwidth}
		\centering
		\includegraphics[width=\linewidth]{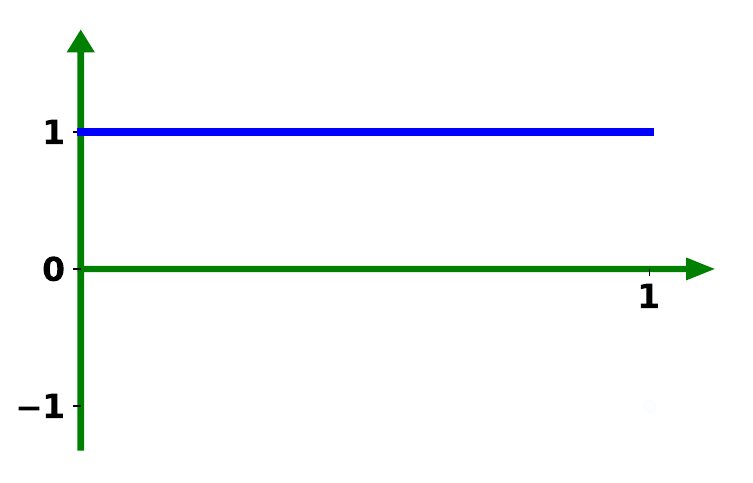}
		\caption{$W_0(t)$}
		\label{fig:WS0}
	\end{subfigure}
	\begin{subfigure}{0.24\textwidth}
		\centering
		\includegraphics[width=\linewidth]{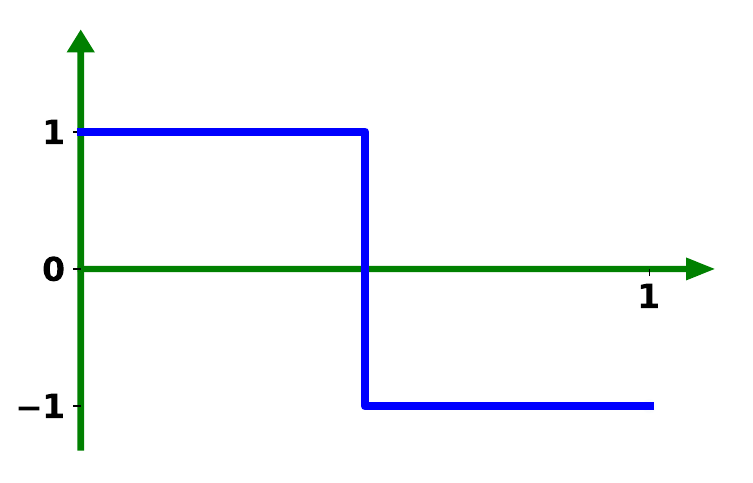}
		\caption{$W_1(t)$}
		\label{fig:WS1}
	\end{subfigure}
	\begin{subfigure}{0.24\textwidth}
		\centering
		\includegraphics[width=\linewidth]{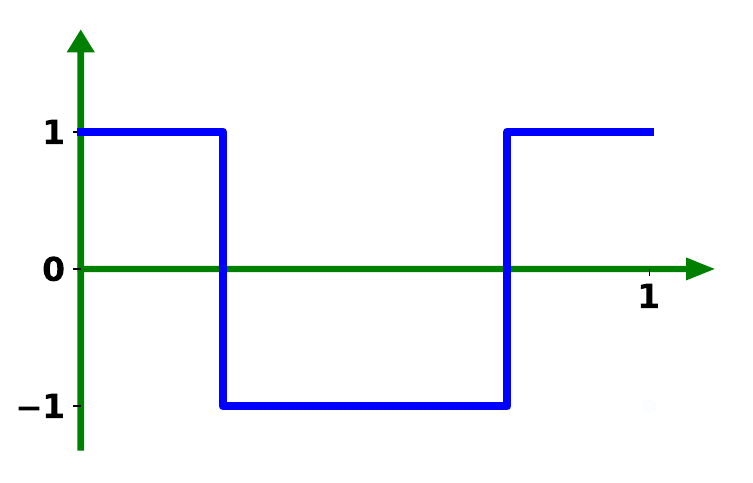}
		\caption{$W_2(t)$}
		\label{fig:WS2}
	\end{subfigure}
	\begin{subfigure}{0.24\textwidth}
		\centering
		\includegraphics[width=\linewidth]{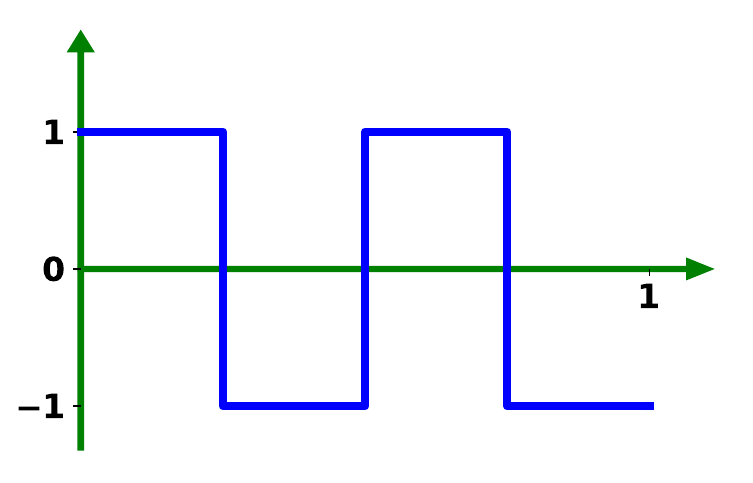}
		\caption{$W_3(t)$}
		\label{fig:WS3}
	\end{subfigure}
	\\
	\begin{subfigure}{0.24\textwidth}
		\centering
		\includegraphics[width=\linewidth]{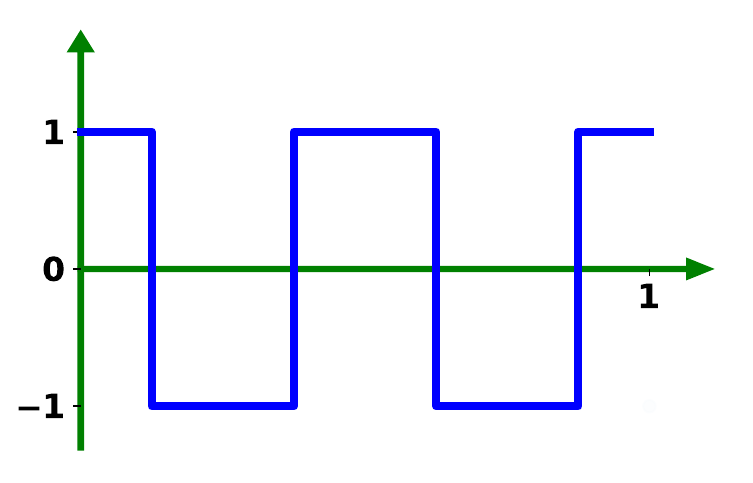}
		\caption{$W_4(t)$}
		\label{fig:WS4}
	\end{subfigure}
	\begin{subfigure}{0.24\textwidth}
		\centering
		\includegraphics[width=\linewidth]{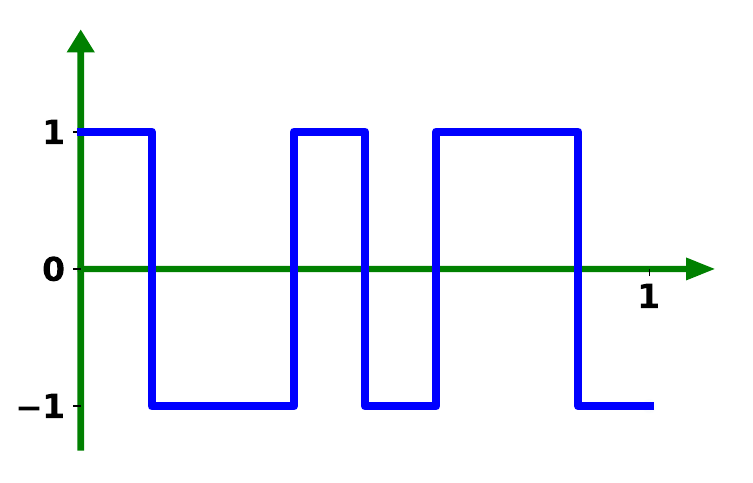}
		\caption{$W_5(t)$}
		\label{fig:WS5}
	\end{subfigure}
	\begin{subfigure}{0.24\textwidth}
		\centering
		\includegraphics[width=\linewidth]{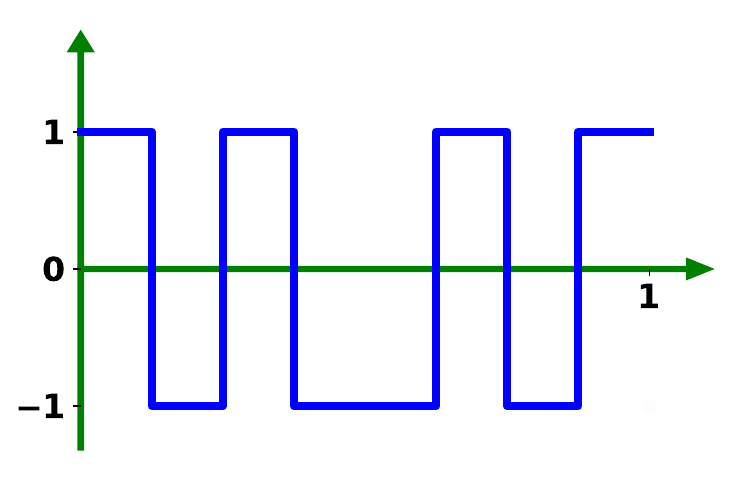}
		\caption{$W_6(t)$}
		\label{fig:WS6}
	\end{subfigure}
	\begin{subfigure}{0.24\textwidth}
		\centering
		\includegraphics[width=\linewidth]{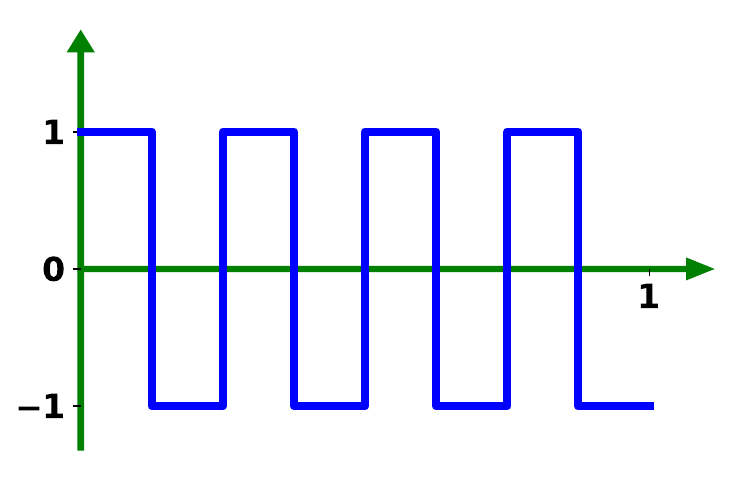}
		\caption{$W_7(t)$}
		\label{fig:WS7}
	\end{subfigure}
 \caption{Sequency-ordered Walsh basis functions for $N=8$.}
	\label{fig_walsh_sequency}
\end{figure}

\begin{defn} \label{def:seq}
The sequency for any vector 
		\begin{equation}\label{eq:def_sequence}
			 \bm{\mathcal{S}} = 	\left[\, F(0) \,\,\,  F(1) \,\,\, \cdots \cdots \,\,\, F(N-1) \,\right]^T,
		\end{equation}
  where $F(k) \in \{1,-1\}$ for all $k$, is 
  defined as 	\begin{equation}\label{Eq:defn_zero_crossings}
	\frac{1}{2} \sum_{k=0}^{N-2} \, \abs{   F(k+1) - F(k)}.
\end{equation}
We note that the sequency of the vector $ \bm{\mathcal{S}}$ represents the number of zero-crossings or sign changes in $ \bm{\mathcal{S}}$. We define the sequency of the row vector $ \bm{\mathcal{S}}^T$ to be the same as the sequency of the column vector $\bm{\mathcal{S}}$.
\end{defn}

A vector of length $ N $ can be obtained by sampling a Walsh basis function (ref.~\mref{Sec:fun_representation}). The sequency-ordered Walsh-Hadamard transform matrix  $H_N^S$ is obtained by arranging the vectors obtained from the sampling of Walsh basis functions as the columns of a matrix. The vectors are arranged in increasing order of sequency. The Walsh-Hadamard transform matrix of order $ N=8 $ in sequency order is
\begin{align*}
H^S_8 = 
	\frac{1}{\sqrt{8}} \,
	\begin{pmatrix*}[r]
		1 & 1 & 1 & 1 & 1 & 1 & 1 & 1  \\
		1 & 1 & 1 & 1 & -1 & -1 & -1 & -1  \\
		1 & 1 & -1 & -1 & -1 & -1 & 1 & 1  \\
		1 & 1 & -1 & -1 & 1 & 1 & -1 & -1  \\
		1 & -1 & -1 & 1 & 1 & -1 & -1 & 1  \\
		1 & -1 & -1 & 1 & -1 & 1 & 1 & -1  \\
		1 & -1 & 1 & -1 & -1 & 1 & -1 & 1  \\
		1 & -1 & 1 & -1 & 1 & -1 & 1 & -1  \\
	\end{pmatrix*}.
\end{align*} 
In contrast, the Walsh-Hadamard transform matrix of order $ N =2^n $ in natural order is given by $ H_N = H^{\otimes n}$,
where 
\begin{align*}
	H = \frac{1}{\sqrt{2}} \, \begin{pmatrix*}[r]
		1 & 1  \\
		1 & -1  \\
	\end{pmatrix*}.
\end{align*}

\begin{figure}
\centering
\begin{subfigure}{0.24\textwidth}
\centering
    \includegraphics[width=\linewidth]{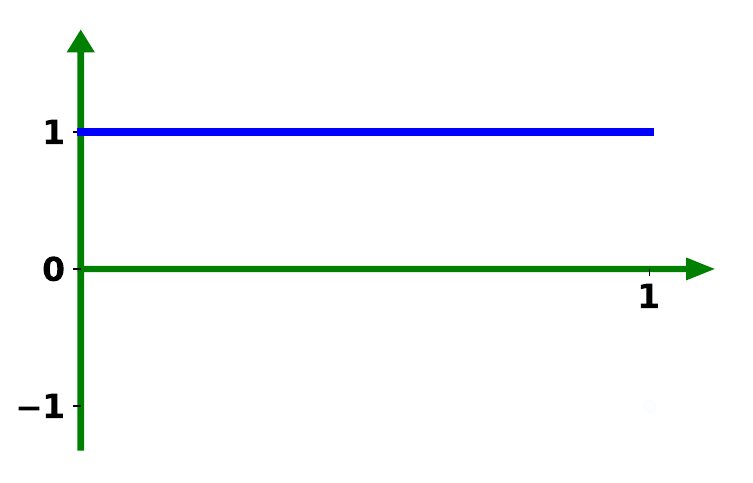}
    \caption{$\widetilde{W}_0(t)$}
    \label{fig:W0}
\end{subfigure}
\begin{subfigure}{0.24\textwidth}
\centering
    \includegraphics[width=\linewidth]{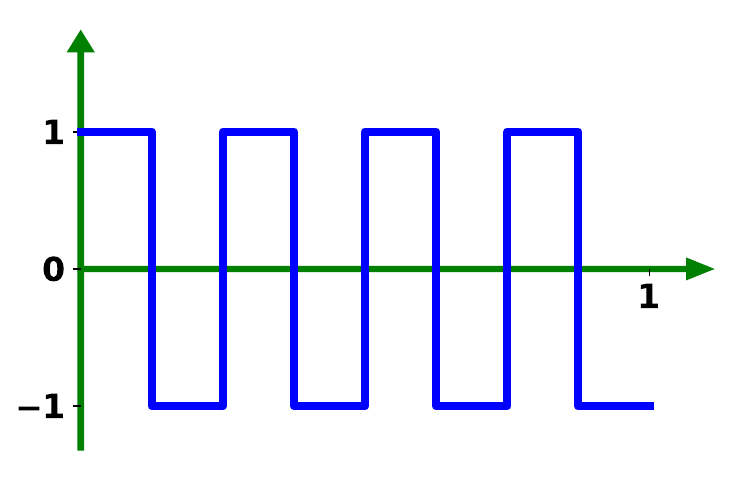}
    \caption{$\widetilde{W}_1(t)$}
    \label{fig:W1}
\end{subfigure}
\begin{subfigure}{0.24\textwidth}
\centering
    \includegraphics[width=\linewidth]{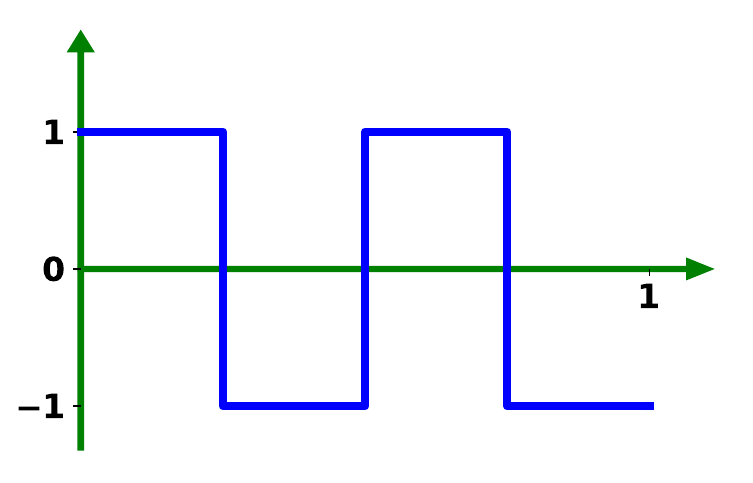}
    \caption{$\widetilde{W}_2(t)$}
    \label{fig:W2}
\end{subfigure}
\begin{subfigure}{0.24\textwidth}
\centering
    \includegraphics[width=\linewidth]{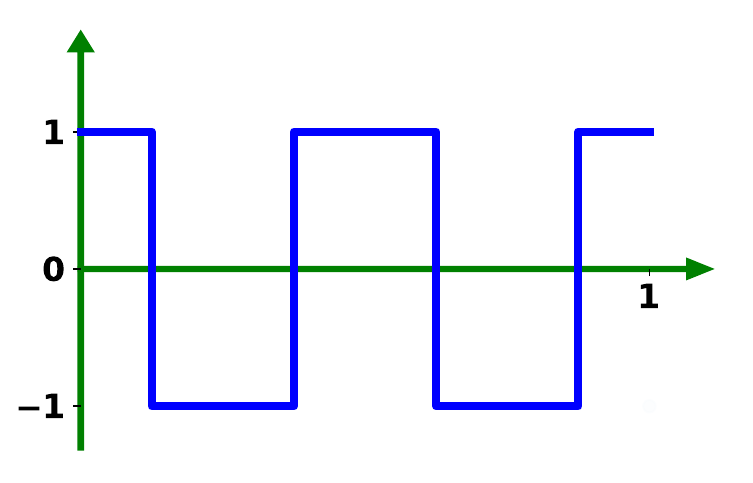}
    \caption{$\widetilde{W}_3(t)$}
    \label{fig:W3}
\end{subfigure}
\\
\begin{subfigure}{0.24\textwidth}
\centering
    \includegraphics[width=\linewidth]{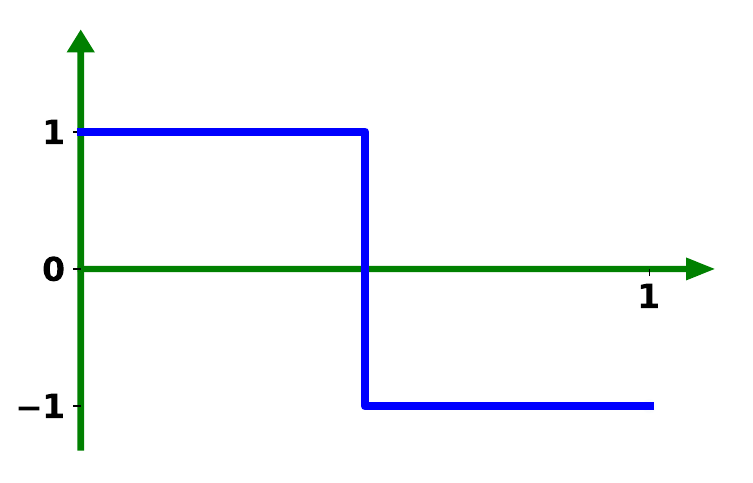}
    \caption{$\widetilde{W}_4(t)$}
    \label{fig:W4}
\end{subfigure}
\begin{subfigure}{0.24\textwidth}
\centering
    \includegraphics[width=\linewidth]{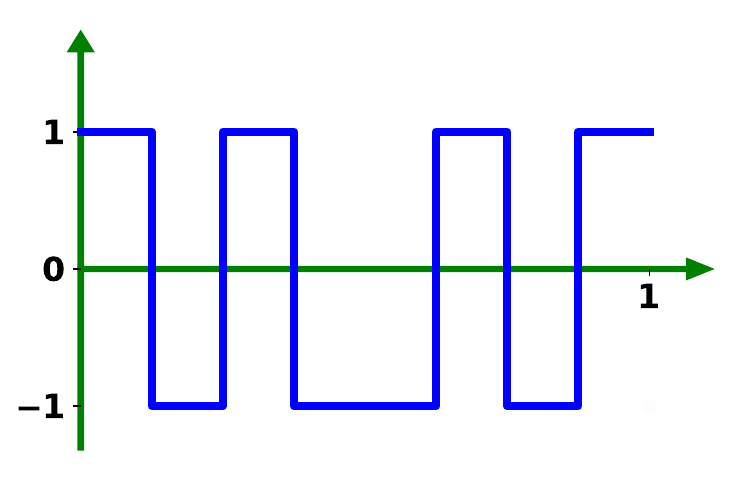}
    \caption{$\widetilde{W}_5(t)$}
    \label{fig:W5}
\end{subfigure}
\begin{subfigure}{0.24\textwidth}
\centering
    \includegraphics[width=\linewidth]{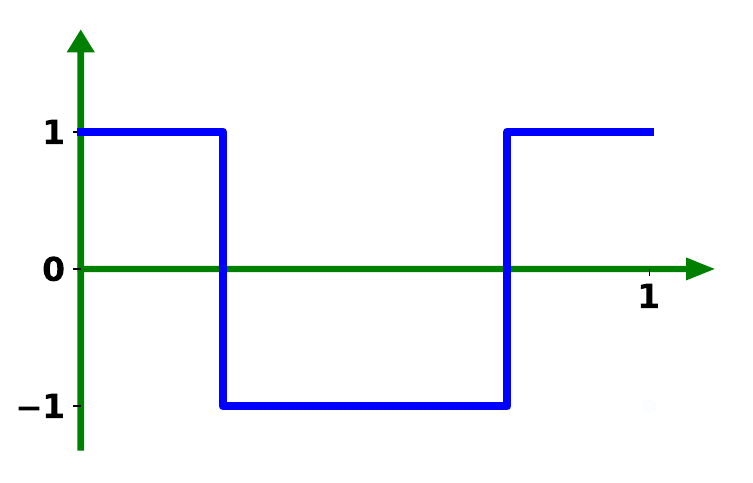}
    \caption{$\widetilde{W}_6(t)$}
    \label{fig:W6}
\end{subfigure}
\begin{subfigure}{0.24\textwidth}
\centering
    \includegraphics[width=\linewidth]{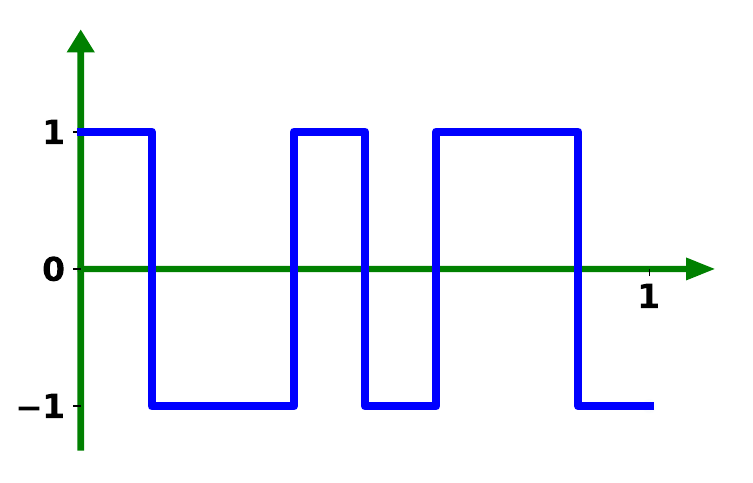}
    \caption{$\widetilde{W}_7 (t)$}
    \label{fig:W7}
\end{subfigure}
\caption{Walsh Hadamard basis functions in natural ordering for $N=8$.}
 \label{fig_walsh_natural}
\end{figure}

The Walsh-Hadamard matrix in natural order for $ N=8 $ is 
\begin{align*}
	H_8 =	\frac{1}{\sqrt{8}} \,
	\begin{pmatrix*}[r]
		1 & 1 & 1 & 1 & 1 & 1 & 1 & 1  \\
		1 & -1 & 1 & -1 & 1 & -1 & 1 & -1  \\
		1 & 1 & -1 & -1 & 1 & 1 & -1 & -1  \\
		1 & -1 & -1 & 1 & 1 & -1 & -1 & 1  \\
		1 & 1 & 1 & 1 & -1 & -1 & -1 & -1  \\
		1 & -1 & 1 & -1 & -1 & 1 & -1 & 1  \\
		1 & 1 & -1 & -1 & -1 & -1 & 1 & 1  \\
		1 & -1 & -1 & 1 & -1 & 1 & 1 & -1  \\
	\end{pmatrix*}.
\end{align*}
Walsh basis functions (or Walsh-Hadamard basis functions) in natural order can be obtained using the rows/columns of the Walsh-Hadamard matrix $ H_N$. 
Let  $\widetilde{{\bf W}}_{k}$  denote the $k$-th column of the matrix $H_N$. Then $\widetilde{{\bf W}}_{k}$ is the $k$-th Walsh basis vector in natural ordering. From this discrete basis vector, the corresponding Walsh basis function $\widetilde{W}_{k}(t)$ can be obtained. 
For $ N =8 $, Walsh basis functions in natural ordering are shown in \mfig{fig_walsh_natural}.

\begin{table}[]
	\centering
	\begin{tabular}{@{}c@{}}
		\toprule
		Index, $k$ \hspace{3cm} $ \widetilde{{\bf W}}_{k}^T $ \hspace{3cm} Sequency \\ 
     \midrule
  	$\begin{matrix*}[r]
		0 \hspace{2cm} [1 & 1 & 1 & 1 & 1 & 1 & 1 & 1 ] \hspace{2cm} 0 \\ 
		1 \hspace{2cm} [1 & -1 & 1 & -1 & 1 & -1 & 1 & -1]  \hspace{2cm} 7 \\ 
		2 \hspace{2cm} [1 & 1 & -1 & -1 & 1 & 1 & -1 & -1 ] \hspace{2cm} 3 \\ 
		3 \hspace{2cm} [1 & -1 & -1 & 1 & 1 & -1 & -1 & 1 ] \hspace{2cm} 4 \\ 
		4 \hspace{2cm} [1 & 1 & 1 & 1 & -1 & -1 & -1 & -1  ] \hspace{2cm} 1 \\ 
		5 \hspace{2cm} [1 & -1 & 1 & -1 & -1 & 1 & -1 & 1 ] \hspace{2cm} 6 \\ 
		6 \hspace{2cm} [1 & 1 & -1 & -1 & -1 & -1 & 1 & 1  ] \hspace{2cm} 2\\ 
		7 \hspace{2cm} [1 & -1 & -1 & 1 & -1 & 1 & 1 & -1  ] \hspace{2cm} 5
	\end{matrix*}$ 
    \\ \bottomrule
	\end{tabular}
  \caption{Sequency (or number of sign changes) for Walsh basis vectors in natural ordering  $\widetilde{{\bf W}}_{k} $ for $k=0$ to $k=7$. The middle column shows the transpose of each of the Walsh basis vectors in natural ordering.}
	\label{tab:zero-crosings-table}
\end{table}

Table \ref{tab:zero-crosings-table} shows the sequencies (ref.~\ref{def:seq}) for the Walsh basis vectors $\widetilde{{\bf W}}_{k} $  (in natural ordering) for $k=0$ to $k=7$. The second column in the table shows the rows of the matrix $H_8$  (which represent the Walsh basis vectors in natural ordering  $\widetilde{{\bf W}}_{k}^T $ for $k=0$ to $k=7$ for $N=8$). The last column shows the number of sign changes in the corresponding  $\widetilde{{\bf W}}_{k} $ (i.e., the sequency of $\widetilde{{\bf W}}_{k} $).  

Walsh-Hadamard transforms in natural and sequency orders can also be defined in terms of their actions on the computational basis vectors. 
Let $ N=2^n $ be a positive integer. Let $ V$ be the $ N $ dimensional complex vector space generated by the computational basis states $ \{ \ket{0}, \, \ket{1}, \, \ldots \,,\, \ket{N-1} \} $.  We note that the Walsh-Hadamard transform in natural order can be defined as a linear transformation $ H_N : V \to V $ such that the action of $ H_N = H^{\otimes n}$ on the computational basis state $ \ket{j} $, with $ 0 \leq j \leq N-1 $ is given by
\begin{equation}\label{eq:natual:hadamard}
	H_N \, \ket{j} = \frac{1}{\sqrt{N}} \sum_{k=0}^{N-1} \, (-1)^{k \cdot j} \, \ket{k}. 
\end{equation}
Here $ k \cdot j $ denotes the bit-wise dot product of $ k $ and $ j$. 
It is clear that the $(k,j)$-th element of the corresponding matrix of this transformation will be given by $(-1)^{k \cdot j}$.
It can be checked that this matrix is symmetric. Further, this matrix is the same as the Walsh-Hadamard transform matrix $H_N$ and the vector formed by the entries in the $s$-th column 
is the $s$-th Walsh basis vector in natural order $\widetilde{{\bf W}}_{s}$. Equivalently,  
\begin{align} \label{eq:defWS}
  \widetilde{{\bf W}}_{s}  =  [F(0),\, F(1),\, \ldots\, ,\, F(N-1)]^T, 
\end{align}
where $F(k) = (-1)^{k \cdot s}$, for $k=0,\, 1, \ldots\,,\, N-1$.

Next, we note that the Walsh-Hadamard transform in sequency order can be defined as a linear transformation $ H^S_N: V \to V $ acting on the basis state $ \ket{j} $, with $ 0 \leq j \leq N-1 $, as follows (see \cite{beauchamp1975walsh}). 
\begin{equation}\label{eq:sequency:hadamard}
	H_N^S \, \ket{j} = \frac{1}{\sqrt{N}} \sum_{k=0}^{N-1} \, (-1)^{ \sum_{r=0}^{n-1} \, k_{n-1-r} (j_r \oplus j_{r+1}) } \, \ket{k},
\end{equation}
where $ k = k_{n-1}\,k_{n-2}\,\ldots \, k_1\, k_0 $ and $ j =  j_{n-1}\,j_{n-2} \,\ldots \, j_1\, j_0 $, are binary representations of $ k $ and $ j $ respectively, with $ k_i,\, j_i \in \{0,1\} $ for $ i=0,\,1,\, \ldots ,\,n-1$, and $ j_{n} = 0 $.

\section{Walsh series representation of a signal} \label{Sec:fun_representation}

Any piecewise continuous complex-valued function (or signal) $f(t)$ defined on the interval $[0,1]$ can be approximated through a two-step process: first by discretizing the function and then expressing the discretized version as a linear combination of Walsh basis functions. 
One common method of discretizing $f(t)$ on $[0,1]$ is as follows: the interval $[0,1]$ is uniformly divided into $N$ subintervals, and then the function $f(t)$ is discretized by sampling it at the midpoints of these subintervals. This process yields a vector ${\bf{f}} = [f_0, \, f_1,\, \ldots, f_{N-1} ]^T$, where $f_k = f(t_k)$ and $t_k= \frac{2k+1}{2N}$ represents the midpoint of the $k$-th subinterval.
By following this approach, one can approximate the original function $f(t)$ using the discrete vector ${\bf{f}}$ and then express this discretized version as a linear combination of discretized Walsh basis functions. In the following, we will consider the discretized version ${\bf W}_k$ of the Walsh basis function $ W_k (t) $  as defined in \eqref{eq_def_Walsh}. We note that ${\bf W}_k$ is the $k$-th column of the Walsh-Hadamard matrix $H^S_N$ (ref.~\meqref{eq:sequency:hadamard}) in sequency order.
       
        Let us consider the complex vector space $\CC^N $. 
        Let $  \langle \, \cdot \, | \, \cdot \,  \rangle $ be the (Hermitian) inner product on $\CC^N $ defined as
        \begin{align}
             \langle \, {\bf{f}} \, | \, {\bf{g}}  \, \rangle  = \frac{1}{\sqrt{N}} \sum_{k=0}^{N-1} \, \overline{f_k} \,  g_k.
        \end{align}
        Two distinct Walsh basis functions are orthogonal and satisfy the following.
        \begin{align}
            \langle {\bf{W_j}} | {{\bf W}_k} \rangle = \sqrt{N} \delta_{jk},
        \end{align}
        where   
        $$
\delta_{jk}=\begin{cases}
			1, & \text{if $j=k$}\\
            0, & \text{otherwise.}
		 \end{cases}
$$
The vector  $\bf{f}$, which is a discretized version of the function $f(t)$, can be expressed as a linear combination of discretized Walsh basis functions as follows
\begin{align} \label{Eq:Walsh_Fourier_Series}
     {\bf{f}}  = \frac{1}{\sqrt{N}} \,  \sum_{k=0}^{N-1} \hat{f}_k \, {\bf W}_k,  \end{align}
where  $\hat{f}_k =  \langle {\bf W}_k | \bf{f} \rangle$.
The vector $ {\hat{\bf{f}}} = [\hat{f}_0,\, \hat{f}_1,\, \ldots,\, \hat{f}_{N-1}]^{T}$ can be computed by performing the Walsh-Hadamard transform (in sequency order) as follows
       \begin{align}
   {\hat{\bf{f}}} = H_N^S \bf{f}.
       \end{align}
We observe that \meqref{Eq:Walsh_Fourier_Series} provides the Walsh series representation of the discrete signal $\bf{f}$. The coefficient $\hat{f}_k$ contains information about the contribution of the $k$-th sequency component $\bf{W}_k$ in the discrete signal $\bf{f}$. The vector $ {\hat{\bf{f}}} = [\hat{f}_0,\, \hat{f}_1,\, \ldots,\hat{f}_{N-1}]^{T}$ represents the sequency spectrum of $\bf{f}$. Similar to how a time-domain signal is analyzed using its frequency spectrum in Fourier analysis, one can utilize the sequency spectrum to analyze the signal via the Walsh analysis approach.

\begin{remark}
    The above formulation of the Walsh series representation of signals is similar to the formulation using Fourier basis functions. Analogous to \meqref{Eq:Walsh_Fourier_Series}, the vector ${\bf f}$ can be expressed as
\begin{align} \label{Eq:Fourier_Series}
     {\bf{f}}  = \frac{1}{\sqrt{N}} \,  \sum_{k=0}^{N-1} \hat{F}_k \, {\bf U}_k,  \end{align}
where  ${\bf U}_k = \left[ \left. \exp \left( i\frac{2 \pi  }{N} kn\right) \,\, \right|  \,  n = 0,\, 1,\,\ldots\, , \, N-1  \right]^T $ and $\hat{F}_k =  \langle {\bf U}_k | \bf{f} \rangle$.
The vector $ {\hat{\bf{F}}} = [\hat{F}_0,\, \hat{F}_1,\, \ldots,\, \hat{F}_{N-1}]^{T}$ is the discrete Fourier transform of the vector ${\bf f}$.
\end{remark}

\begin{figure}[htp]
	\centering
	\begin{subfigure}{0.32\textwidth}
		\centering
		\includegraphics[width=\linewidth]{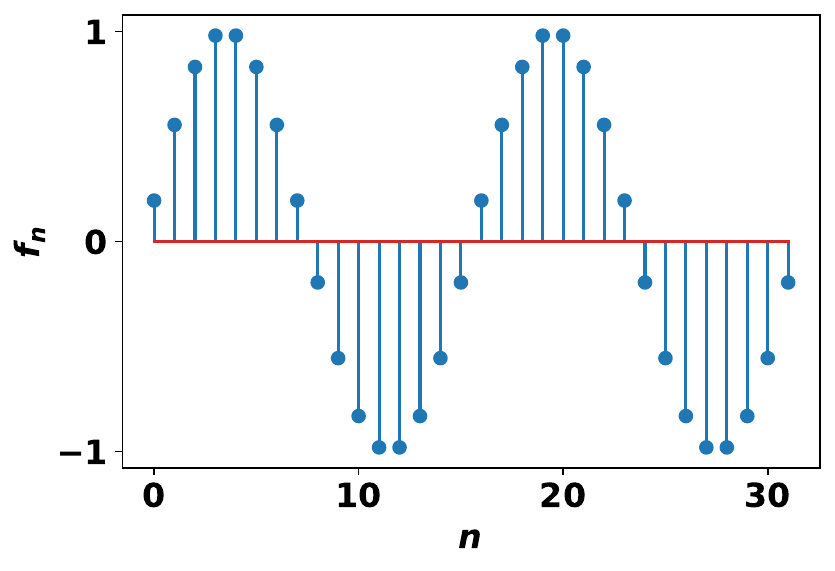}
		\caption{Signal: Sinusoidal}
		\label{Sig:1}
	\end{subfigure}
	\begin{subfigure}{0.32\textwidth}
		\centering
		\includegraphics[width=\linewidth]{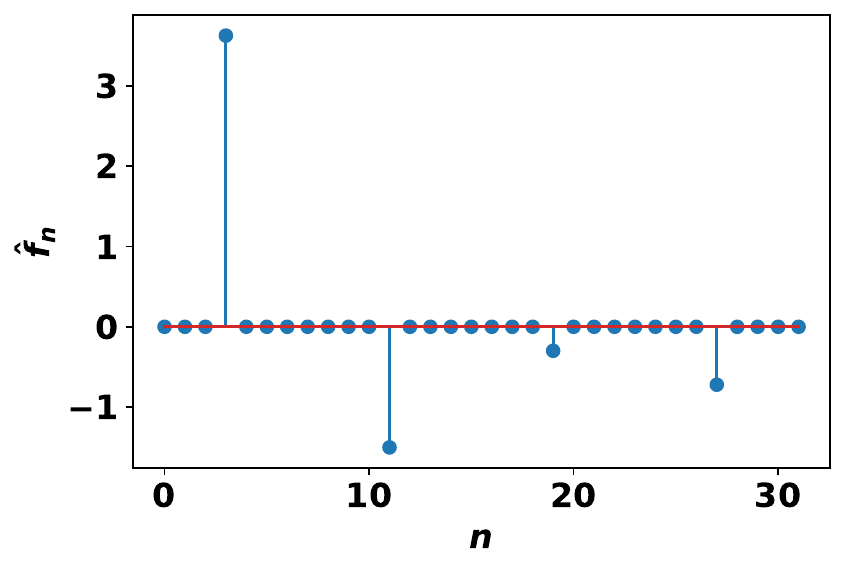}
		\caption{Sequency spectrum}
		\label{Seq:1}
	\end{subfigure}
	\begin{subfigure}{0.32\textwidth}
		\centering
		\includegraphics[width=\linewidth]{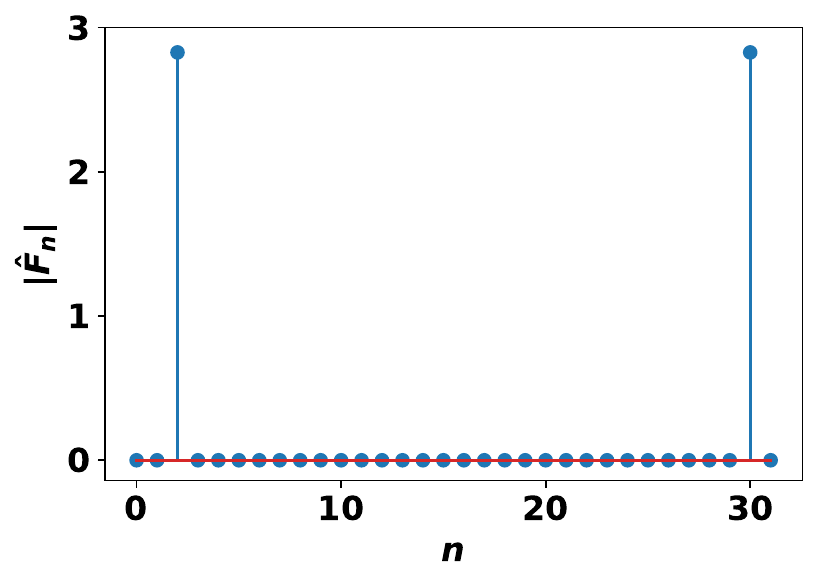}
		\caption{Frequency spectrum}
		\label{Freq:1}
	\end{subfigure}
		\\ 
  \vspace{0.15cm}
	\begin{subfigure}{0.32\textwidth}
		\centering
		\includegraphics[width=\linewidth]{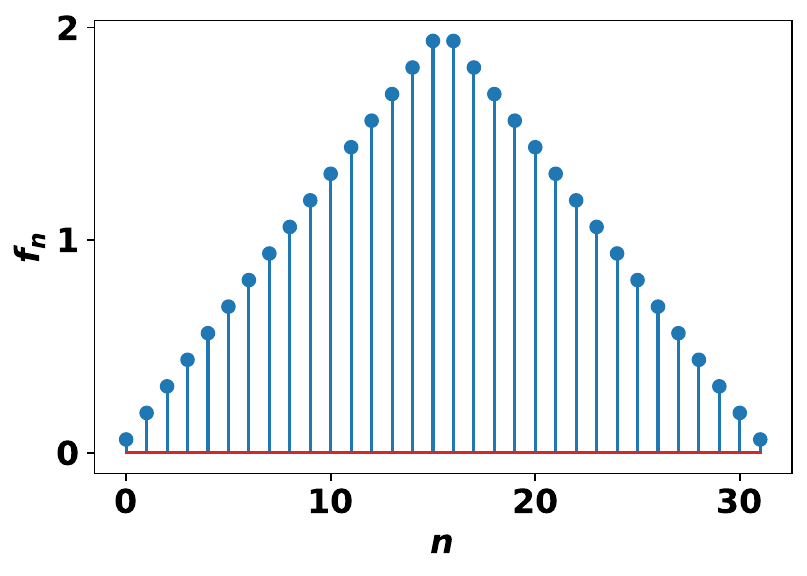}
		\caption{Signal: Triangular}
		\label{Sig:2}
	\end{subfigure}
	\begin{subfigure}{0.32\textwidth}
		\centering
		\includegraphics[width=\linewidth]{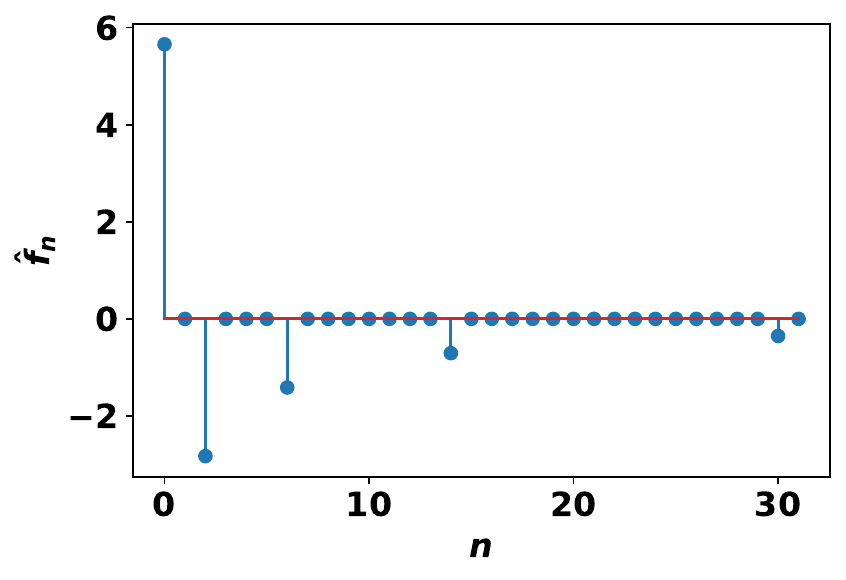}
		\caption{Sequency spectrum}
		\label{Seq:2}
	\end{subfigure}
	\begin{subfigure}{0.32\textwidth}
		\centering
		\includegraphics[width=\linewidth]{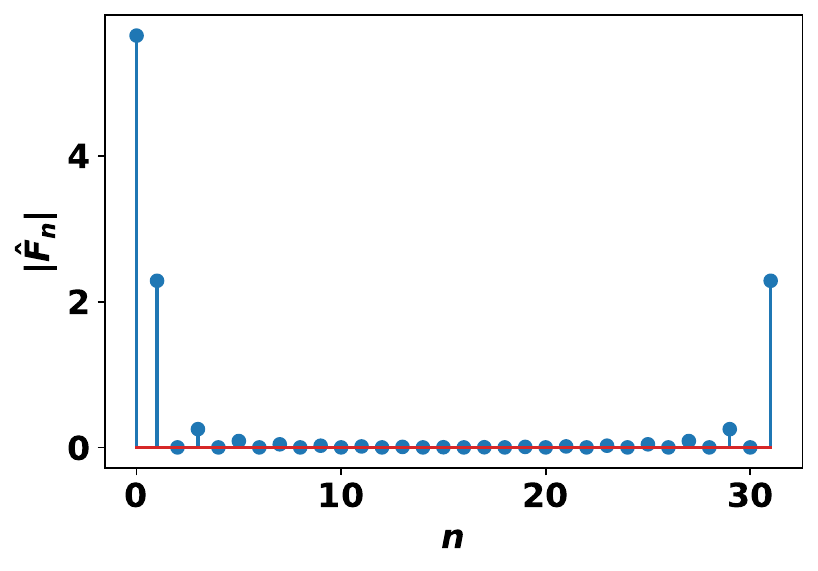}
		\caption{Frequency spectrum}
		\label{Freq:2}
	\end{subfigure}
 \\
   \vspace{0.15cm}
	\begin{subfigure}{0.32\textwidth}
		\centering
		\includegraphics[width=\linewidth]{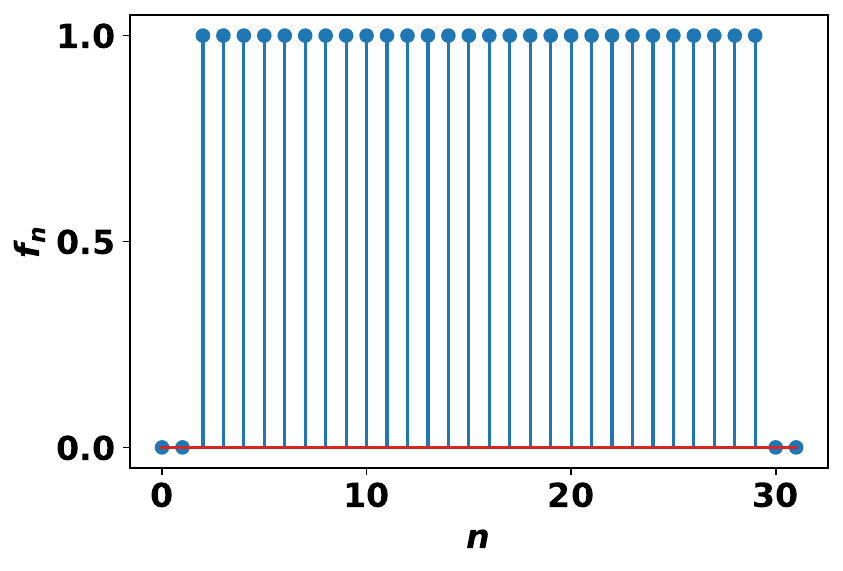}
		\caption{Signal: Rectangular}
		\label{Sig:3}
	\end{subfigure}
	\begin{subfigure}{0.32\textwidth}
		\centering
		\includegraphics[width=\linewidth]{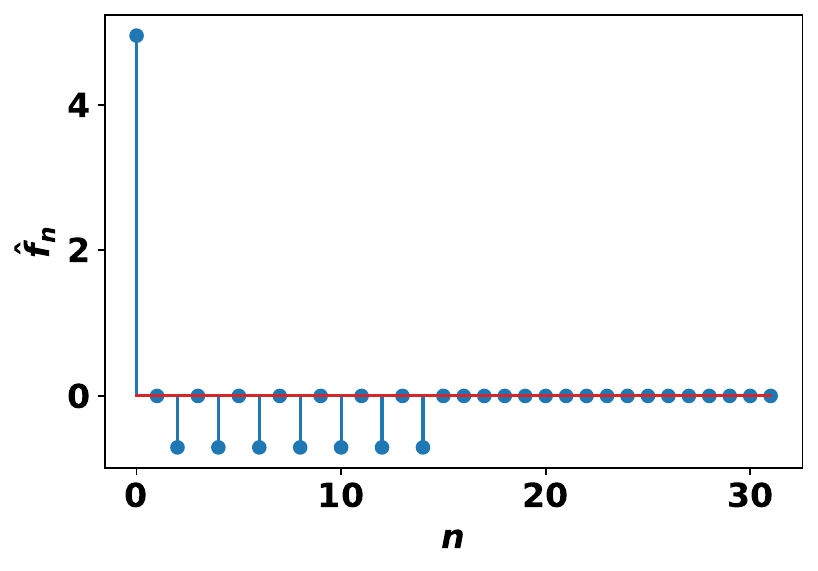}
		\caption{Sequency spectrum}
		\label{Seq:3}
	\end{subfigure}
	\begin{subfigure}{0.32\textwidth}
		\centering
		\includegraphics[width=\linewidth]{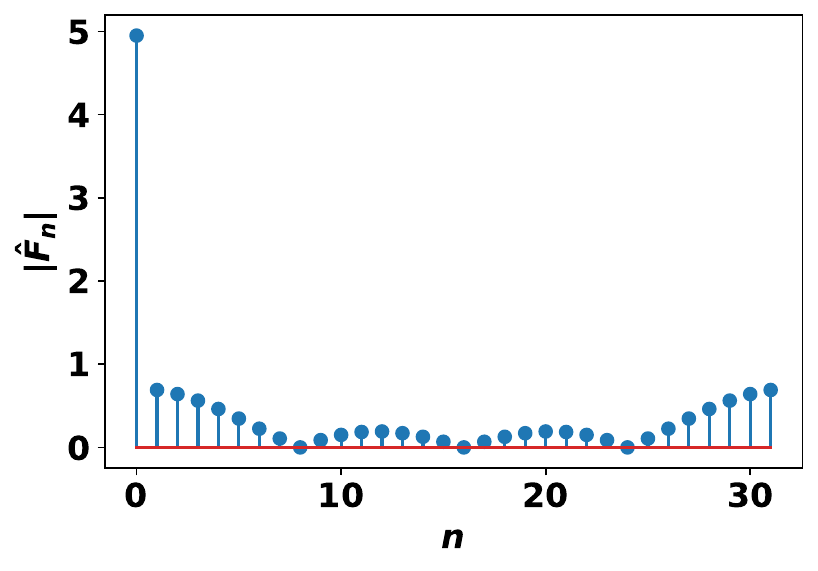}
		\caption{Frequency spectrum}
		\label{Freq:3}
	\end{subfigure}
 \\
   \vspace{0.15cm}
	\begin{subfigure}{0.32\textwidth}
		\centering
		\includegraphics[width=\linewidth]{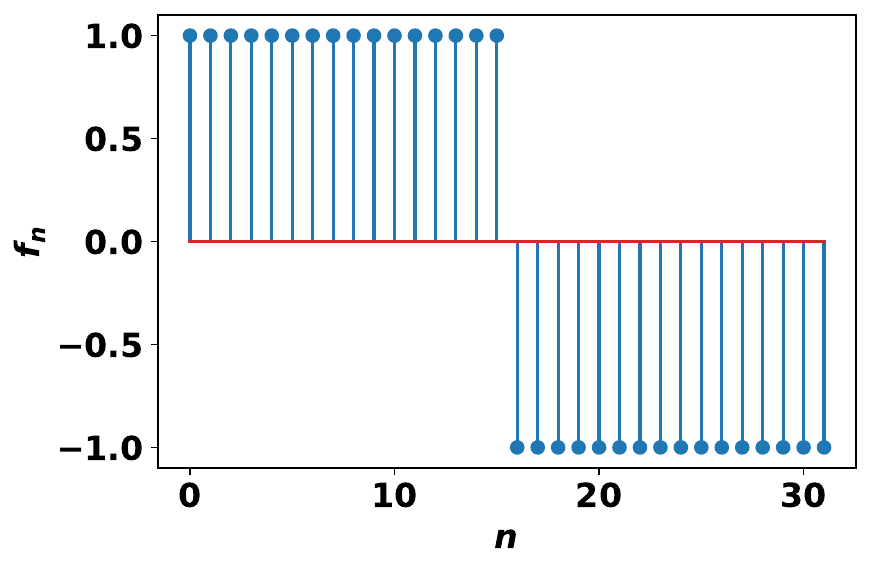}
		\caption{Signal: Square}
		\label{Sig:4}
	\end{subfigure}
	\begin{subfigure}{0.32\textwidth}
		\centering
		\includegraphics[width=\linewidth]{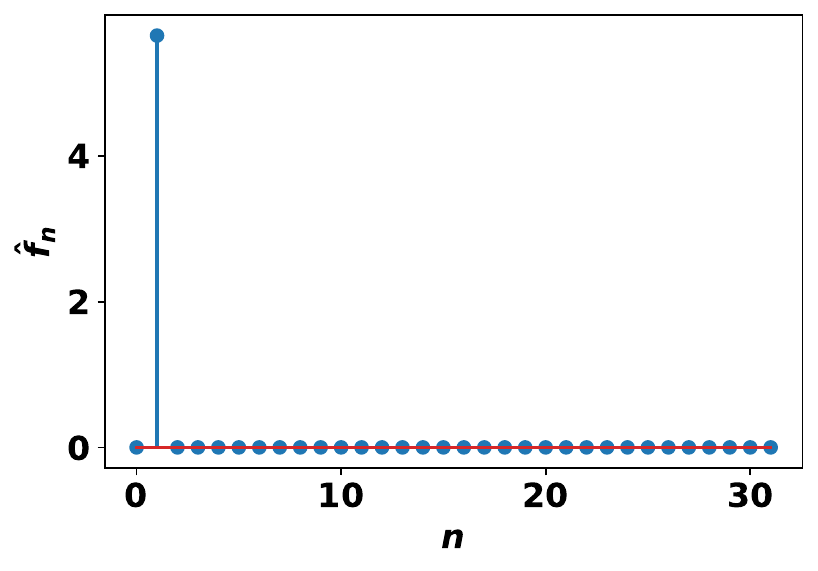}
		\caption{Sequency spectrum}
		\label{Seq:4}
	\end{subfigure}
	\begin{subfigure}{0.32\textwidth}
		\centering
		\includegraphics[width=\linewidth]{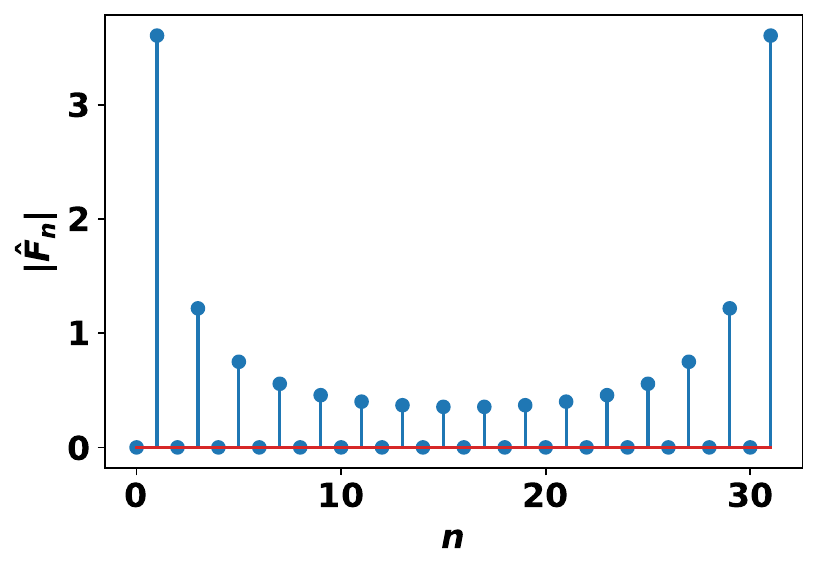}
		\caption{Frequency spectrum}
		\label{Freq:4}
	\end{subfigure}

 \caption{
 Signal waveforms (sinusoidal, triangular, rectangular, square) with corresponding sequency and Fourier spectra (along each row).
}
	\label{fig_sequency_spectrum}
\end{figure}

\begin{figure}[htp]
	\centering
	\begin{subfigure}{0.32\textwidth}
		\centering
		\includegraphics[width=\linewidth]{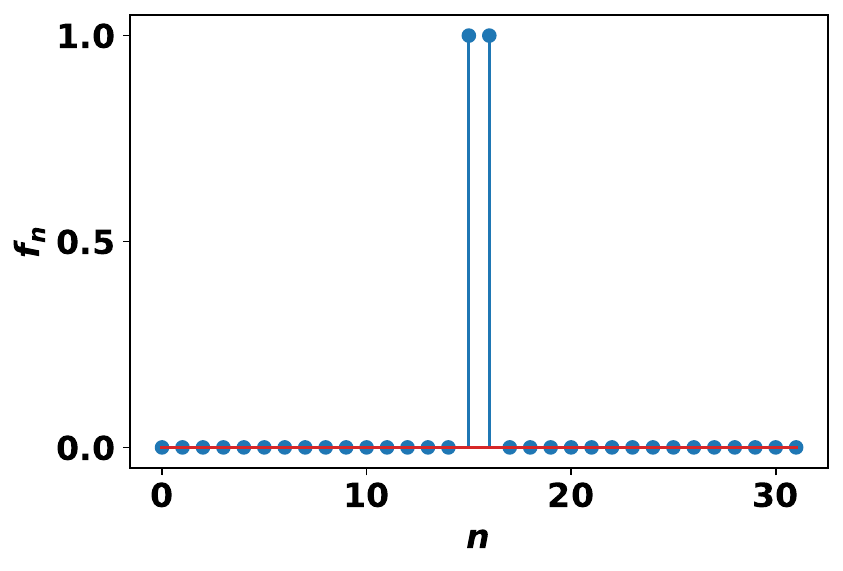}
		\caption{Signal}
		\label{fig:1}
	\end{subfigure}
	\begin{subfigure}{0.32\textwidth}
		\centering
		\includegraphics[width=\linewidth]{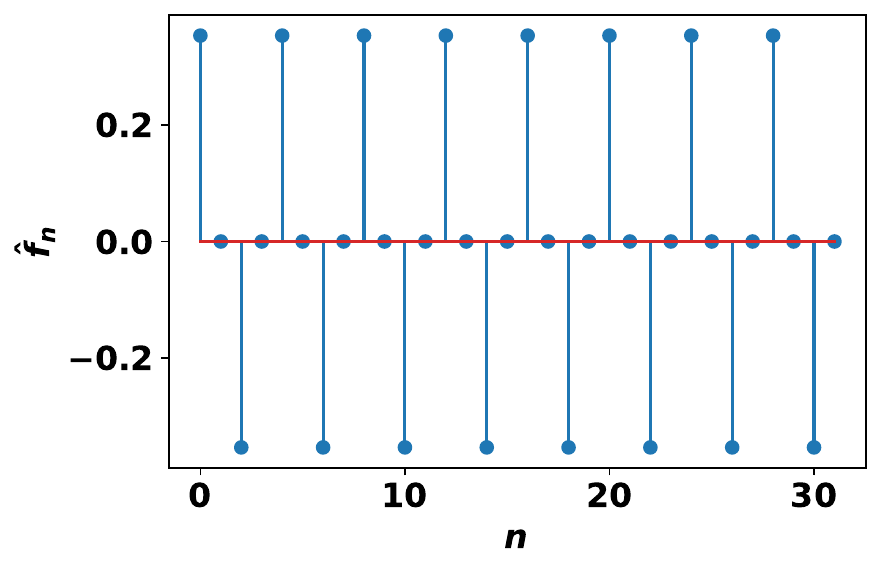}
		\caption{Sequency spectrum}
		\label{fig:2}
	\end{subfigure}
	\begin{subfigure}{0.32\textwidth}
		\centering
		\includegraphics[width=\linewidth]{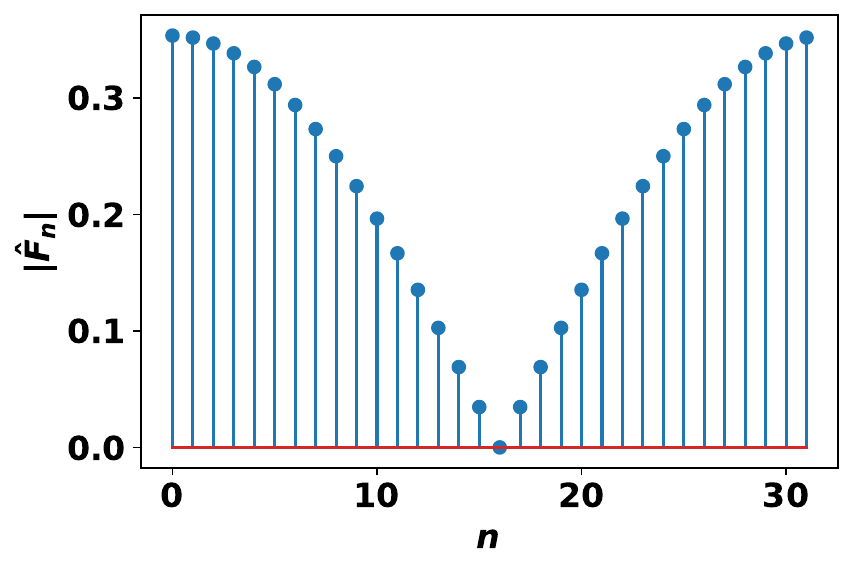}
		\caption{Frequency Spectrum}
		\label{fig:3}
	\end{subfigure}
		\\
    \vspace{0.15cm}
	\begin{subfigure}{0.32\textwidth}
		\centering
		\includegraphics[width=\linewidth]{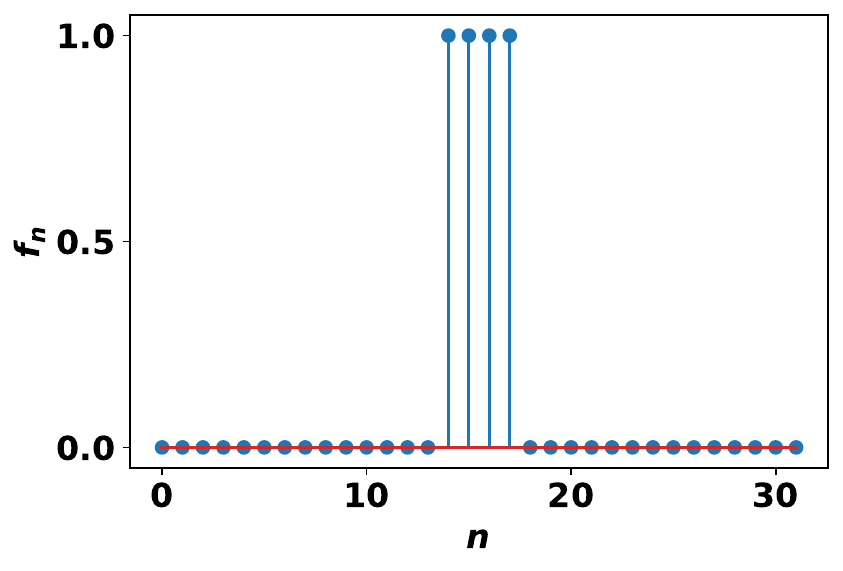}
		\caption{Signal}
		\label{fig:4}
	\end{subfigure}
	\begin{subfigure}{0.32\textwidth}
		\centering
		\includegraphics[width=\linewidth]{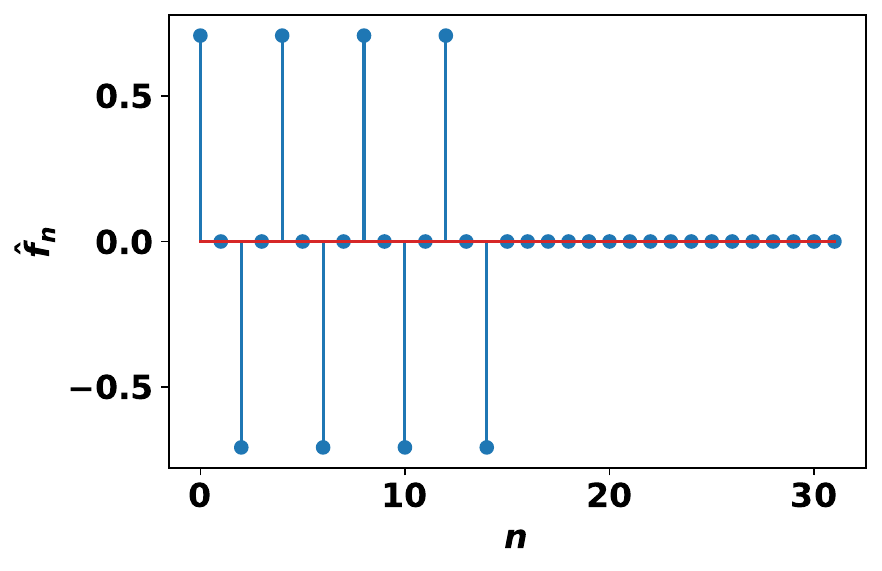}
		\caption{Sequency spectrum}
		\label{fig:5}
	\end{subfigure}
	\begin{subfigure}{0.32\textwidth}
		\centering
		\includegraphics[width=\linewidth]{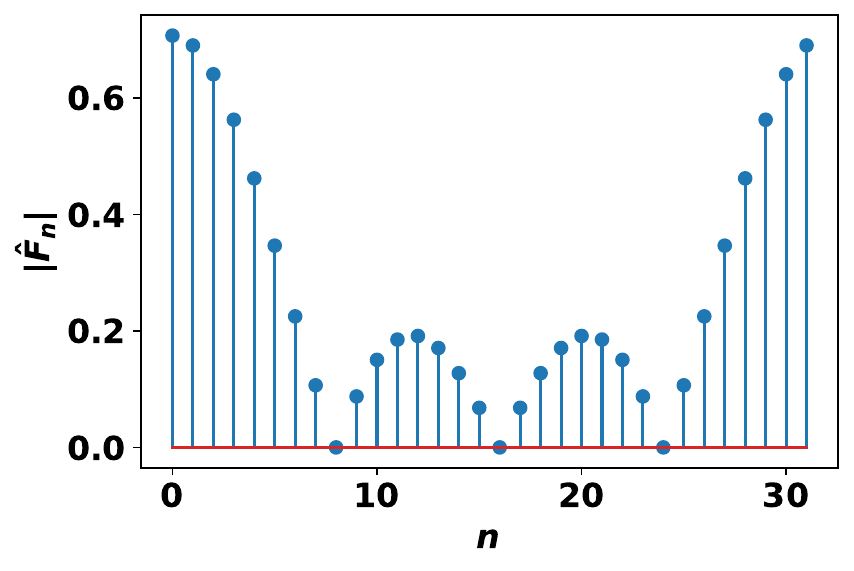}
		\caption{Frequency Spectrum}
		\label{fig:6}
	\end{subfigure}
 \\
   \vspace{0.15cm}
	\begin{subfigure}{0.32\textwidth}
		\centering
		\includegraphics[width=\linewidth]{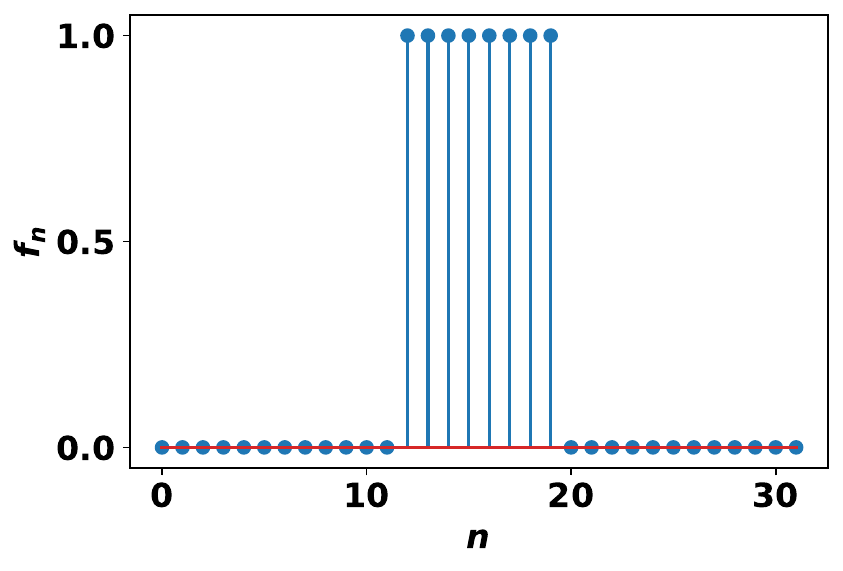}
		\caption{Signal}
		\label{fig:7}
	\end{subfigure}
	\begin{subfigure}{0.32\textwidth}
		\centering
		\includegraphics[width=\linewidth]{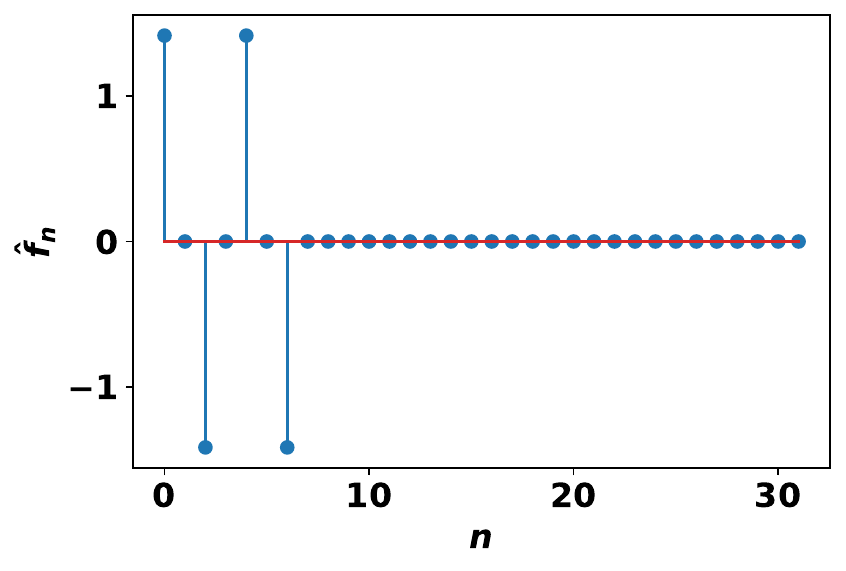}
		\caption{Sequency spectrum}
		\label{fig:8}
	\end{subfigure}
	\begin{subfigure}{0.32\textwidth}
		\centering
		\includegraphics[width=\linewidth]{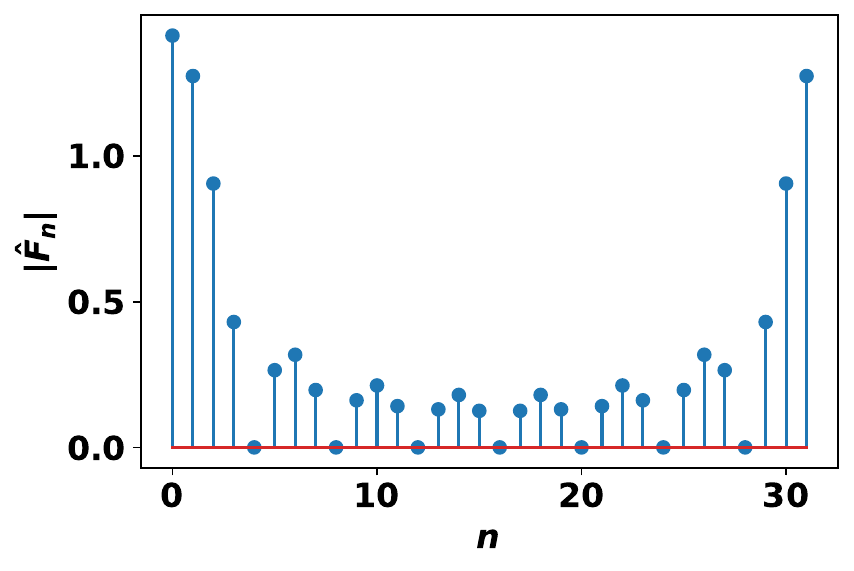}
		\caption{Frequency Spectrum}
		\label{fig:9}
	\end{subfigure}
  \\
    \vspace{0.15cm}
	\begin{subfigure}{0.32\textwidth}
		\centering
		\includegraphics[width=\linewidth]{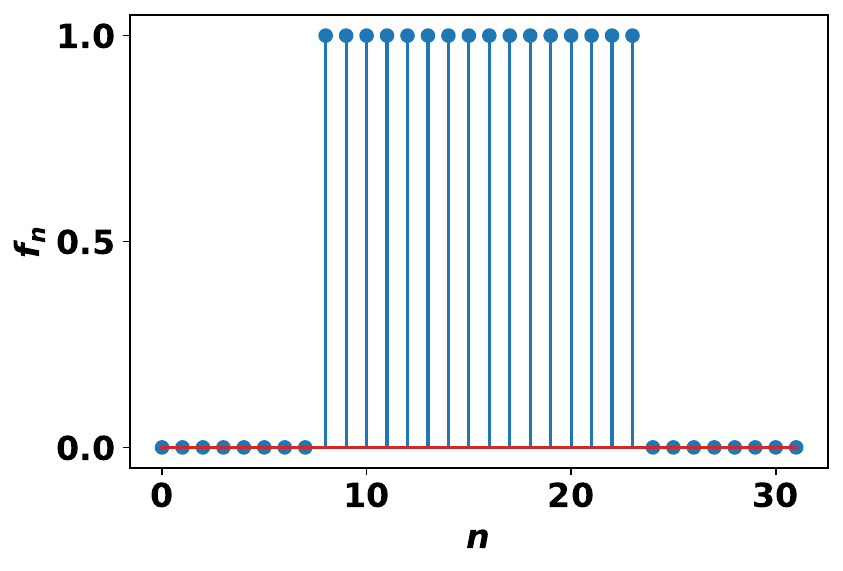}
		\caption{Signal}
		\label{fig:10}
	\end{subfigure}
	\begin{subfigure}{0.32\textwidth}
		\centering
		\includegraphics[width=\linewidth]{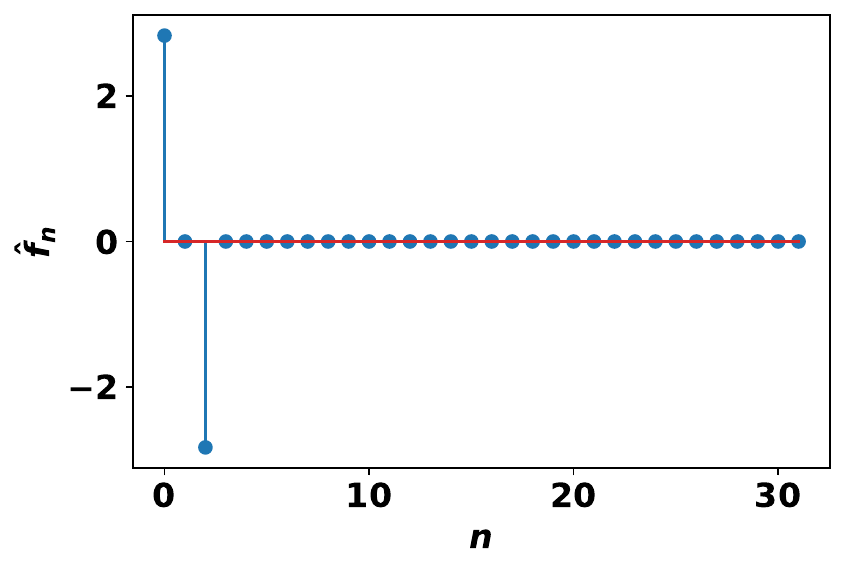}
		\caption{Sequency spectrum}
		\label{fig:11}
	\end{subfigure}
	\begin{subfigure}{0.32\textwidth}
		\centering
		\includegraphics[width=\linewidth]{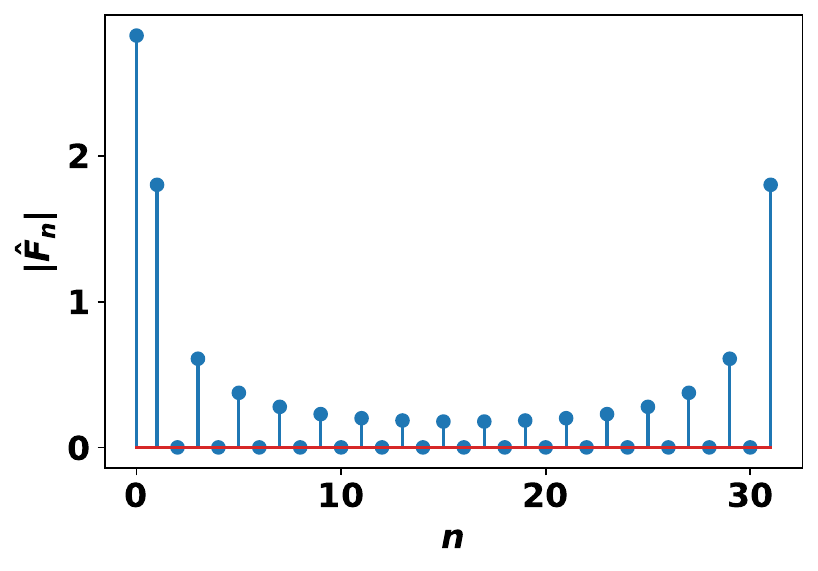}
		\caption{Frequency Spectrum}
		\label{fig:12}
	\end{subfigure}
  \caption{
  Rectangular pulses of different widths with corresponding sequency and Fourier spectra (along each row). 
}
	\label{fig_sequency_spectrum_rect_pulse}
\end{figure}

In \mfig{fig_sequency_spectrum}, several signals (ref.~Sub-figures \ref{Sig:1}, \ref{Sig:2}, \ref{Sig:3} and \ref{Sig:4}) and their corresponding sequency spectra (ref.~Sub-figures \ref{Seq:1}, \ref{Seq:2}, \ref{Seq:3} and \ref{Seq:4}) and frequency spectra (ref.~Sub-figures \ref{Freq:1}, \ref{Freq:2}, \ref{Freq:3} and \ref{Freq:4}) are shown for comparison. We observe from these figures that the Fourier series is particularly well suited for representing periodic functions,  which have continuous and smooth waveforms (ref.~Sub-figures \ref{Sig:1}, \ref{Seq:1}, \ref{Freq:1}). On the other hand, Walsh basis functions are more suitable for representing signals with sharp transitions and discontinuities, such as square waves (ref.~Sub-figures \ref{Sig:4}, \ref{Seq:4}, \ref{Freq:4}).

Rectangular pulses of different widths are shown in \mfig{fig_sequency_spectrum_rect_pulse}, along with corresponding sequency and Fourier spectra (along each row). Note that as the width of the rectangular pulse increases, its representation in the sequency domain requires fewer sequency components. It is clear from \mfig{fig_sequency_spectrum_rect_pulse} that for rectangular pulses, the signal representation in sequency domain is considerably simpler than that in the frequency domain (based on Fourier analysis).

\section{Quantum circuit for sequency ordered Walsh-Hadamard transforms} \label{Sec:seq_WH}

\begin{figure}
	\centering
	\includegraphics{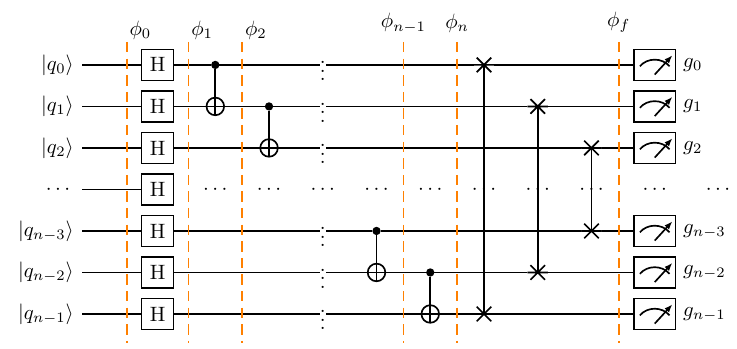}
	\caption{Quantum circuit for computing the Walsh-Hadamard transform in sequency ordering.} \label{fig:sequency}
\end{figure}
We note that some of the discussions in this section, including Lemma \ref{sec:Lemma}, are based on \cite{shukla2022quantum}.
A schematic quantum circuit for performing the Walsh-Hadamard transform in sequency order is shown in \mfig{fig:sequency}. 
In order to show that the quantum circuit in \mfig{fig:sequency} computes the Walsh-Hadamard transform in sequency ordering, we will compute the output when the input is a computational basis state $ \ket{\phi_0} =  \ket{j} $, with $ 0 \leq j  \leq N-1 $ and $ N =2^n $. 
We have 
\begin{align}
	\ket{\phi_1 } = H^{\otimes n} \ket{\phi_0} &=  \frac{1}{\sqrt{N}}  \sum_{s=0}^{N-1}   \, (-1)^{j \cdot s} \, \ket{s},  \nonumber \\
	&= \frac{1}{\sqrt{N}}  \sum_{s=0}^{N-1}   \,  \, (-1)^{ \sum_{k=0}^{n-1} \, j_k s_k} \, \ket{s}. \label{eq:phif}
\end{align}

Let us denote the action of the remaining part of the quantum circuit (i.e., between the barriers labeled $\phi_1$ to $\phi_f$) by $U_Z$. It will be shown in \mref{Sec:UZ} that for any computational basis state $\ket{s}$, the action of $U_z$ results in  $ U_z \ket{s} = \ket{g} $, where $ g = \sum_{k=0}^{n-1} \, g_k 2^k $ and 
$ g_{n-1} = s_0 $, $  g_k = \, s_0 \, \oplus \, s_1 \, \oplus \, \ldots   \, \oplus \, s_{n-1-k}$, for $ k = n-2, \, \ldots, \, 1,\, 0 $ (see  \meqref{eq:gk}  and \meqref{eq:g}). 
Further, it is easy to see that 
\begin{equation}\label{eq:stog}
	s_k = g_{n-k} \, \oplus \, g_{n-k-1}, \quad k=0,\, 1,\, \ldots,\, n-1,  
\end{equation}
where it is assumed that $ g_n = 0 $. Also, the map sending $ \ket{s} $ to $ \ket{g} $ defined in (see \meqref{eq:gk}  and  \meqref{eq:g}) is invertible. It can be easily checked and it also follows from the fact that the transformation that sends  $ \ket{s} $ to $ \ket{g} $ is implemented via unitary quantum gates. Therefore, the summation that runs through $ s=0 $ to $ s= n-1 $ in   \meqref{eq:phif}  can be replaced by the summation running through $ g =0 $ to $ g=n-1 $. It follows that
\begin{align} \label{eq:phif_final}
	\ket{\phi_f }  &= \frac{1}{\sqrt{N}}  \sum_{g=0}^{N-1}   \, (-1)^{\sum_{k=0}^{n-1} \, j_k (g_{n-k}  \, \oplus \,  g_{n-k-1} )}   \, \ket{g}  \nonumber \\
	&= \frac{1}{\sqrt{N}}  \sum_{g=0}^{N-1}   \, (-1)^{\sum_{r=0}^{n-1} \, j_{n-1-r} (g_{r} \,  \oplus \, g_{r+1} )}   \, \ket{g}. 
\end{align}
From  \meqref{eq:sequency:hadamard} and  \meqref{eq:phif_final},  it follows that the quantum circuit shown in \mfig{fig:sequency} can be used for computing the Walsh-Hadamard transform in sequency order.

\subsection{The action of $U_Z$} \label{Sec:UZ}

\begin{figure}
	\centering
	\includegraphics[scale=1.2]{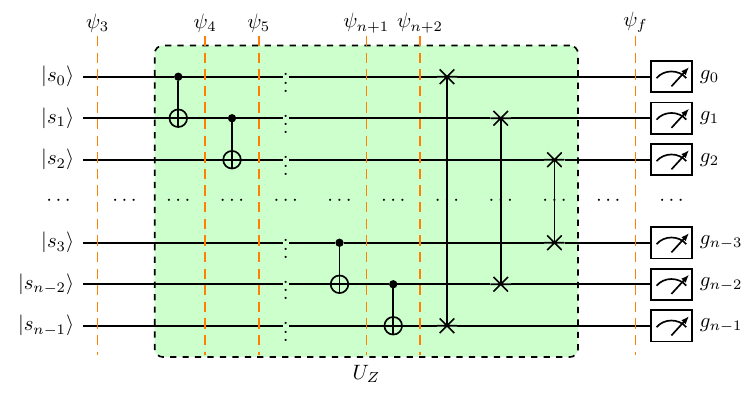}
	\caption{ The quantum circuit shown in this figure (whose action is denoted by the unitary $U_Z$) is part of the quantum circuit between the barriers labeled $\phi_1$ to $\phi_f$ in \mfig{fig:sequency}.
 } \label{fig:last-part}
\end{figure}

The quantum circuit shown in \mfig{fig:last-part} corresponds to the unitary $U_Z$ (i.e., part of of the quantum between the barriers labeled $\phi_1$ to $\phi_f$ in \mfig{fig:sequency}). 
Assume $ \ket{s} = \ket{s_{n-1}} \, \otimes \, \ket{s_{n-2}} \, \otimes \, \cdots \, \otimes \,  \ket{s_{1}} \, \otimes \ket{s_{0}}  $, i.e., assume the decimal representation of $ s $ to be  $ \sum_{m=0}^{n-1} \, s_m 2^m  $. 
The quantum circuit in \mfig{fig:last-part} contains $ n-1 $ CNOT and $ \floor{\frac{n}{2}} $ swap gates. The swap gates could be applied in parallel. However, for clarity, they are not shown to be parallel in  \mfig{fig:last-part}. 
The swap gates in \mfig{fig:last-part} simply reverse the order of qubits. The action of $ U_Z $ in the quantum circuit in \mfig{fig:last-part} can be described as  
 $ U_Z \ket{\psi_3} = \ket{g}$, where  \begin{equation}\label{eq:g}
 	g = \sum_{j=0}^{n-1} \, g_j 2^j,
 \end{equation}
where,
\begin{align*}
	g_{n-1} & = \, s_0, \\
	g_{n-2} & =  \, s_0 \, \oplus \, s_1, \\ 
	\cdots&  \,\cdots \\
	g_1 &=  \, s_0 \, \oplus \, s_1 \, \oplus \, s_2 \, \oplus \, \ldots  \, \oplus \, s_{n-2}, \\
	g_0 &= \, \, s_0 \, \oplus \, s_1 \, \oplus \, s_2 \, \oplus \, \ldots  \, \oplus \, s_{n-2} \, \oplus \, s_{n-1}.
\end{align*}  
It means,
\begin{align}\label{eq:gk}
	g_{n-1} = s_0,  \quad \text{and} \quad	g_k &= \, \, s_0 \, \oplus \, s_1 \, \oplus \, \ldots   \, \oplus \, s_{n-1-k}, \quad \text{for $ k = n-2 $ to $ 0 $}.
\end{align}
Finally, all the qubits are measured and stored in a classical register. The classical bits resulting from these measurements are labeled $ g_0 $ to $ g_{n-1} $ in \mfig{fig:last-part}. It is interesting to note that  $ g $ is the number of zero-crossings of the vector $\widetilde{{\bf W}}_{s}$ defined in \meqref{eq:defWS}. To see this, we note that using \meqref{Eq:defn_zero_crossings} one can compute the number of zero-crossings of the vector $\widetilde{{\bf W}}_{s}$ by the following expression,
\[
 \frac{1}{2} \sum_{k=0}^{N-2} \, \abs{   (-1)^{s \cdot (k+1)} - (-1)^{s \cdot k}}.
\]
Further, on using Corollary \ref{cor}, it is easy to see that $ g = Z_n(s) = \frac{1}{2} \sum_{k=0}^{N-2} \, \abs{(-1)^{s \cdot (k+1)} - (-1)^{s \cdot k}}$ is the number of number of zero-crossings of the vector $\widetilde{{\bf W}}_{s}$. It follows that  $ g $ is the number of zero-crossings (or sequency) of  $\widetilde{{\bf W}}_{s}$ (which is the $s$-th Walsh basis vector in natural ordering).

\subsection{A lemma} \label{sec:Lemma} 
 In this section, we will prove Lemma~\ref{lemma} and its corollary (Corollary \ref{cor}) to show that  $ g = \sum_{j=0}^{n-1} \, g_j 2^j $ (see  \meqref{eq:g}  and  \meqref{eq:gk}) gives the number of of zero-crossings of the vector $ \bm{\mathcal{S}} $ given by 
 \[
			 \bm{\mathcal{S}} = 	\left[\, F(0) \,\,\,  F(1) \,\,\, \cdots \cdots \,\,\, F(N-1) \,\right]^T,
 \]
 with $ F(x) = (-1)^{f(x)} $ and $ f(x) = s \cdot x $ (ref.\,\,\!\meqref{eq:def_sequence}). 
 
\begin{lem} \label{lemma}
	For an integer $ m $ with $ 1 < m \leq n $, and for 
	$ x = x_{n-1}\,x_{n-2} \,\ldots \, x_1\, x_0 $ (or equivalently, $ x $ with a decimal representation  $  x = \sum_{j=0}^{m-1} \, x_j 2^j  $),
	 with $ x_j \in \{0,1\} $,   define $ Z_m(x) $ as
	\begin{equation}\label{eq:def_Z_k}
		Z_m(x) := \frac{1}{2} \sum_{k=0}^{2^m-2} \, \abs{   (-1)^{x \cdot (k+1)} - (-1)^{x \cdot k}}.
	\end{equation}
Then, we have 
\begin{equation}\label{eq:lemma}
Z_m(s(m)) = 2 Z_{m-1} (s(m-1)) + (s_0 \, \oplus \, s_1 \, \oplus \, s_2 \, \oplus \, \ldots \, \oplus \,  s_{m-1}).	
\end{equation}
Here $ s(m) = s_{m-1}\,s_{m-2}\,\ldots \, s_1\, s_0 $
and $ s_0 \, \oplus \, s_1 \, \oplus \, s_2 \, \oplus \, \ldots \, \oplus \,  s_{m-1}= (s_0 + s_1 + s_2 + \ldots + s_{m-1}) \pmod 2 $. (ref.\,\,\!\mref{sec:notation}). 
\end{lem}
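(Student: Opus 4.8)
The plan is to interpret $Z_m(x)$ directly as a count of sign changes. For a fixed $m$-bit string, set $F(k) = (-1)^{x \cdot k}$; then each summand $\tfrac12\abs{(-1)^{x\cdot(k+1)} - (-1)^{x\cdot k}}$ equals $1$ precisely when $F(k+1) \neq F(k)$ and $0$ otherwise, so $Z_m(s(m))$ is the number of sign changes of the $\pm1$ sequence $F(0), F(1), \ldots, F(2^m-1)$. The whole argument then reduces to comparing the sign-change count of this length-$2^m$ sequence with that of the length-$2^{m-1}$ sequence attached to $s(m-1)$.

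First I would split the index range $\{0, 1, \ldots, 2^m-1\}$ according to the most significant bit $k_{m-1}$. On the lower half $k < 2^{m-1}$ we have $k_{m-1}=0$, so $s(m)\cdot k = s(m-1)\cdot k$; on the upper half, writing $k = 2^{m-1} + k'$ with $0 \le k' < 2^{m-1}$, we get $s(m)\cdot k = s_{m-1} \oplus (s(m-1)\cdot k')$. Consequently the upper half of $F$ is exactly the lower half, reindexed by $k'$, multiplied by the global sign $(-1)^{s_{m-1}}$. Since multiplying an entire block by a constant $\pm1$ does not alter any of its internal sign changes, each half contributes exactly $Z_{m-1}(s(m-1))$ sign changes, accounting for the $2 Z_{m-1}(s(m-1))$ term.

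The one delicate point is the single boundary between the two halves, i.e.\ the transition from $k = 2^{m-1}-1$ to $k = 2^{m-1}$, which is not counted inside either half. Here I would evaluate the two endpoints explicitly: since $F(0)=1$, the first term of the upper half is $F(2^{m-1}) = (-1)^{s_{m-1}}$, while $F(2^{m-1}-1) = (-1)^{s(m-1)\cdot(2^{m-1}-1)}$. Using that $2^{m-1}-1$ has binary expansion $\underbrace{1\cdots1}_{m-1}$, the exponent collapses to $s_0 \oplus s_1 \oplus \cdots \oplus s_{m-2}$. The boundary therefore contributes a sign change iff $s_0 \oplus \cdots \oplus s_{m-2} \neq s_{m-1}$, i.e.\ iff $s_0 \oplus s_1 \oplus \cdots \oplus s_{m-1} = 1$, which is exactly the extra additive term in \eqref{eq:lemma}. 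Adding the two interior counts to this boundary indicator yields the claimed recurrence.

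I expect the main obstacle to be bookkeeping rather than anything conceptual: making the half-splitting rigorous requires care that $k_{m-1}=0$ really does let $s(m)\cdot k$ drop to $s(m-1)\cdot k$, and that the reindexed upper block has identical internal sign-change structure. The computation of the boundary term, and in particular the identity $s(m-1)\cdot(2^{m-1}-1) = s_0 \oplus \cdots \oplus s_{m-2}$, is the only place where an explicit value must be pinned down, but it follows immediately from the all-ones binary form of $2^{m-1}-1$.
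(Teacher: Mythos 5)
Your proposal is correct and is essentially the paper's own proof: both split the $2^m-1$ transitions at the most significant bit into the lower half, the single boundary transition from $k=2^{m-1}-1$ to $k=2^{m-1}$, and the upper half; both evaluate the boundary term via the binary expansions $2^{m-1} = 10\cdots0$ and $2^{m-1}-1 = 01\cdots1$ to obtain the parity $s_0 \oplus s_1 \oplus \cdots \oplus s_{m-1}$; and both use the global factor $(-1)^{s_{m-1}}$, which the absolute value (equivalently, the sign-change count) ignores, to equate the upper half's contribution with the lower half's, each reducing to $Z_{m-1}(s(m-1))$ since only the $m-1$ least significant bits of $k$ enter there. Your combinatorial phrasing in terms of sign changes of the sequence $F(k) = (-1)^{s(m)\cdot k}$ is simply a repackaging of the paper's algebraic manipulation of the sum, so no substantive difference remains.
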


\begin{proof}
Let $ M = 2^m $. We put $ x=s(m) $ and split the summation on the right side of  \meqref{eq:def_Z_k}  into three different parts as follows.
\begin{align}	\label{eq:lem_all_terms}
	Z_m(s(m)) \nonumber  = &\frac{1}{2} \left(\sum_{k=0}^{\frac{M}{2}-2} \, \abs{   (-1)^{s(m) \cdot (k+1)} - (-1)^{s(m) \cdot k}}\right) + \frac{1}{2} \abs{  (-1)^{s(m) \cdot \left(\frac{M}{2}\right)} - (-1)^{s(m) \cdot \left(\frac{M}{2} -1 \right) }} \\&+ \frac{1}{2} \sum_{k=\frac{M}{2}}^{M-2} \, \abs{   (-1)^{s(m) \cdot (k+1)} - (-1)^{s(m) \cdot k}}.
\end{align} 
We note that $ \frac{M}{2} = 2^{m-1}$ represents the $ m $-bit string $ 10\ldots0 $. Therefore, $ s(m) \cdot \left(\frac{M}{2}\right) = s_{m-1} $. Similarly,  $ \frac{M}{2} -1 = 2^{m-1}-1$ represents the $ m $-bit string $ 011\ldots1 $, hence  $ s(m) \cdot \left(\frac{M}{2} -1 \right) =  s_0 \, \oplus \, s_1 \, \oplus \, s_2 \, \oplus \, \ldots \, \oplus \,  s_{m-2}$.
Therefore, the middle term in the above summation reduces to 
\begin{align}\label{eq:lem_second_term}
	\frac{1}{2} \abs{  (-1)^{s(m) \cdot \left(\frac{M}{2}\right)} - (-1)^{s(m) \cdot \left(\frac{M}{2} -1 \right) }} & = \frac{1}{2} \abs{  (-1)^{s_{m-1}} - (-1)^{ s_0 \, \oplus \, s_1 \, \oplus \, \ldots \, \oplus \,  s_{m-2} }} \nonumber \\ & = s_0 \, \oplus \, s_1 \, \oplus \, s_2 \, \oplus \, \ldots \, \oplus \,  s_{m-1}.
\end{align}
Next we show that the last and the first terms are equal. 
\begin{align}\label{eq:lem_last_term}
\frac{1}{2} \sum_{k=\frac{M}{2}}^{M-2} \, \abs{   (-1)^{s(m) \cdot (k+1)} - (-1)^{s(m) \cdot k}} & = 	\frac{1}{2} \sum_{k=0}^{\frac{M}{2}-2} \, \abs{   (-1)^{s(m) \cdot (\frac{M}{2} + k+1)} - (-1)^{s(m) \cdot (\frac{M}{2} + k)}}  \nonumber \\
  & =   \frac{1}{2} \sum_{k=0}^{\frac{M}{2}-2} \, \abs{   (-1)^{s_{m-1}}  \left((-1)^{s(m) \cdot (  k+1)} - (-1)^{s(m) \cdot (k)}  \right)  } \nonumber \\
  & =  \frac{1}{2} \sum_{k=0}^{\frac{M}{2}-2} \, \abs{    \left((-1)^{s(m) \cdot (  k+1)} - (-1)^{s(m) \cdot (k)}  \right)  }. 
\end{align}
From \meqref{eq:lem_all_terms}, \meqref{eq:lem_second_term} and \meqref{eq:lem_last_term} it follows that
\begin{align}
Z_m(s(m)) & = \left( \sum_{k=0}^{\frac{M}{2}-2} \, \abs{   (-1)^{s(m) \cdot (k+1)} - (-1)^{s(m) \cdot k}}\right) + (s_0 \, \oplus \, s_1 \, \oplus \, \ldots \, \oplus \,  s_{m-1}) \nonumber \\
& =  \left( \sum_{k=0}^{\frac{M}{2}-2} \, \abs{   (-1)^{s(m-1) \cdot (k+1)} - (-1)^{s(m-1) \cdot k}}\right) + (s_0 \, \oplus \, s_1 \, \oplus \, \ldots \, \oplus \,  s_{m-1}).  
\end{align}
The last step follows because as $ k $ runs through $ 0$ to  $ \frac{M}{2} -2 = 2^{m-1} -2 $, the computations of  $ s(m) \cdot (k+1) $ and $ s(m) \cdot k $  involve only the $ m-1 $ least significant bits of $ s $, allowing one to write $ s(m) \cdot (k+1) = s(m-1) \cdot (k+1) $ and $ s(m) \cdot k = s(m-1) \cdot k $. 
Hence, we obtain
\begin{align*}
Z_m(s(m)) = 2 Z_{m-1} (s(m-1)) + (s_0 \, \oplus \, s_1 \, \oplus \, s_2 \, \oplus \, \ldots \, \oplus \,  s_{m-1}),
\end{align*}
and the proof is complete.
\end{proof}

\begin{cor}\label{cor}
	Let  $ s = s_{n-1}\,s_{n-2}\,\ldots \, s_1\, s_0 $ with $ s_j \in \{0,1\} $. If $ 	Z_n(s) $ is defined as 
	\[
		Z_n(s) := \frac{1}{2} \sum_{k=0}^{2^n-2} \, \abs{   (-1)^{s \cdot (k+1)} - (-1)^{s \cdot k}},
	\]
	then 
	\begin{equation}\label{eq:cor}
		Z_n(s) = \sum_{k=0}^{n-1} \, g_k 2^k,
	\end{equation}
where $ g_{n-1} = s_0 $ and  $ g_k =  s_0 \, \oplus \, s_1 \, \oplus \, \ldots \,  \oplus \, s_{n-1-k}\,$ for $ k=n-2$ to $ k=0 $.
\end{cor}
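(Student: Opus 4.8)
The plan is to derive Corollary~\ref{cor} directly from the recurrence supplied by Lemma~\ref{lemma}, by unrolling it as a first-order linear recursion (equivalently, by induction on $n$). For brevity I would write $a_m := s_0 \oplus s_1 \oplus \cdots \oplus s_{m-1}$ for the parity of the $m$ least significant bits of $s$, so that Lemma~\ref{lemma} reads $Z_m(s(m)) = 2\,Z_{m-1}(s(m-1)) + a_m$ for every $m$ with $2 \leq m \leq n$.

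First I would dispose of the base case $m=1$, which falls outside the Lemma's hypothesis $m>1$ and so must be checked by hand. The sum defining $Z_1(s(1))$ has only the single term $k=0$, giving $Z_1(s(1)) = \tfrac{1}{2}\abs{(-1)^{s_0} - 1}$; this equals $0$ when $s_0=0$ and $1$ when $s_0=1$, so $Z_1(s(1)) = s_0 = a_1$. Next I would iterate the recurrence from $m=n$ down to $m=1$: repeated substitution of $Z_m = 2\,Z_{m-1} + a_m$, terminating at $Z_1 = a_1$, yields the closed form $Z_n(s) = \sum_{m=1}^{n} 2^{\,n-m}\, a_m$.

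The final step is a re-indexing to match the stated expression. Putting $k = n-m$, I would observe that $a_{n-k} = s_0 \oplus \cdots \oplus s_{n-1-k}$ is exactly $g_k$ for $0 \leq k \leq n-2$, while the top term $a_1 = s_0$ coincides with $g_{n-1}$. Substituting these identifications into $Z_n(s) = \sum_{m=1}^{n} 2^{\,n-m} a_m$ produces $Z_n(s) = \sum_{k=0}^{n-1} g_k\, 2^k$, which is \meqref{eq:cor}. Since Lemma~\ref{lemma} already performs all the analytic work (the splitting of the defining sum and the reflection symmetry about $M/2$), the computation here is essentially routine; the only points demanding attention are the separate verification of the base case and the index bookkeeping identifying the unrolled parities $a_m$ with the zero-crossing bits $g_k$. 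I do not expect any genuine obstacle beyond this bookkeeping.
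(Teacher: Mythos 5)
Your proposal is correct and takes essentially the same route as the paper: the paper's proof likewise unrolls the recurrence of Lemma~\ref{lemma} from the base value $Z_1(s(1)) = s_0 = g_{n-1}$ and concludes by a (skipped) induction with the observation $s(n)=s$. Your explicit verification of the base case $m=1$ directly from the definition --- needed since it falls outside the Lemma's hypothesis $m>1$ --- is a detail the paper glosses over, but otherwise the arguments coincide.
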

\begin{proof}
	It is easy to see from  \meqref{eq:lemma} that 
	\begin{align*}
		Z_1(s(1)) &  = s_0  = g_{n-1}\\ 
		Z_2(s(2)) & = 2 Z_1(s(1)) + (s_0 \, \oplus \,  s_1) = 2 g_{n-1} + g_{n-2} \\ 
		Z_3(s(3)) & = 2 Z_2(s(2)) + (s_0 \, \oplus \,  s_1 \, \oplus s_2) = 2^2 g_{n-1} + 2 g_{n-2} + g_{n-3}. 
	\end{align*}
A simple induction argument, whose details we skip, and the observation that $s(n) = s$  (see \mref{sec:notation}), shows that
\[
	Z_n(s) = \sum_{k=0}^{n-1} \, g_k 2^k.
\]
This completes the proof. 
\end{proof}
Clearly, $ Z_n(s) $ computes the number of zero-crossings of the vector $  \bm{\mathcal{S}} $ (ref.\,\,\!\meqref{eq:def_sequence} and  \meqref{Eq:defn_zero_crossings}).
In the following, we give an example to illustrate the steps for computing $ Z_n(s) $.
\begin{example}\label{ex:two}
	Let $ n = 3 $ and $ s = 5 $ (or equivalently $ s = 101 $ as a binary string). 
We have $$  \bm{\mathcal{S}} = \left[(-1)^{s\cdot 0} \,\,\, (-1)^{s\cdot 1} \,\,\, \ldots \ldots \,\,\, (-1)^{s\cdot 7} \right]^T =  
\begin{bmatrix}
     1 & -1 &  1 &  -1 & -1 & 1 & -1 & 1 
\end{bmatrix}^T .$$ 
Let $ N= 2^n = 8 $. 
As $ s = 101 $, we have $ s_0 = 1 $, $ s_1 =0 $ and $ s_2 = 1 $. The number of zero-crossings of the vector $  \bm{\mathcal{S}} $ is given by 
\begin{align}\label{eq:eaxample_one}
	Z_3(s)) = Z_3(s(3)) = 2 Z_2 (s(2)) + (s_0 \, \oplus \, s_1 \, \oplus \, s_2) = 2 Z_2(s(2)) + (1 \, \oplus \, 0 \, \oplus \, 1) = 2 Z_2(s(2)).
\end{align}
We have, $ s(2) = 01 $. Therefore,
\begin{equation}\label{eq:example_two}
	Z_2(s(2)) = 2 Z_1(s(1)) + (s_0 \, \oplus \, s_1 \,) = 2 Z_1(s_0) + ( 1 \, \oplus \, 0 \,) = 2 (1) + 1 = 3.
\end{equation}
From \meqref{eq:eaxample_one}  and  \meqref{eq:example_two},  one gets $ Z_3(s)) = 6 $.
\end{example}

\section{Application: Sequency Filtering} \label{Sec:application}
  
Signal filtering involves manipulating signals to extract information or enhance quality. DC, low-pass, and high-pass filtering are key in applications like audio processing, communication, and medical image analysis. Low-pass filters are often used to reduce noise by keeping low-frequency components of the signal while eliminating the high-frequency noise from it. High-pass filters eliminate baseline drift and low-frequency components. Sequency filtering, similar to frequency filtering, retains desired sequency components in the Walsh domain while attenuating or eliminating the unwanted sequency components. The Walsh domain refers to the representation of a signal obtained using the Walsh-Hadamard transform, as noted earlier. 

In \mfig{fig:signal_and_spectrum_f} and \mfig{fig:signal_and_spectrum_g}, example signals $f(t)$ (\mfig{fig:signal_f_test}) and $g(t)$ (\mfig{fig:signal_g}) and their corresponding  spectra $\widehat{f}(n)$ (\mfig{fig:spectrum_f_test})  and $\widehat{g}(n)$ (\mfig{fig:spectrum_g}) are shown. 
In the following, these two signals will be used to illustrate DC, low-pass, high-pass and band-pass sequency filtering using our proposed quantum approach for signal filtering. 

\begin{figure}[H] 
\centering
      \begin{subfigure}{0.48\textwidth}
		\centering
 \includegraphics[width=\textwidth]{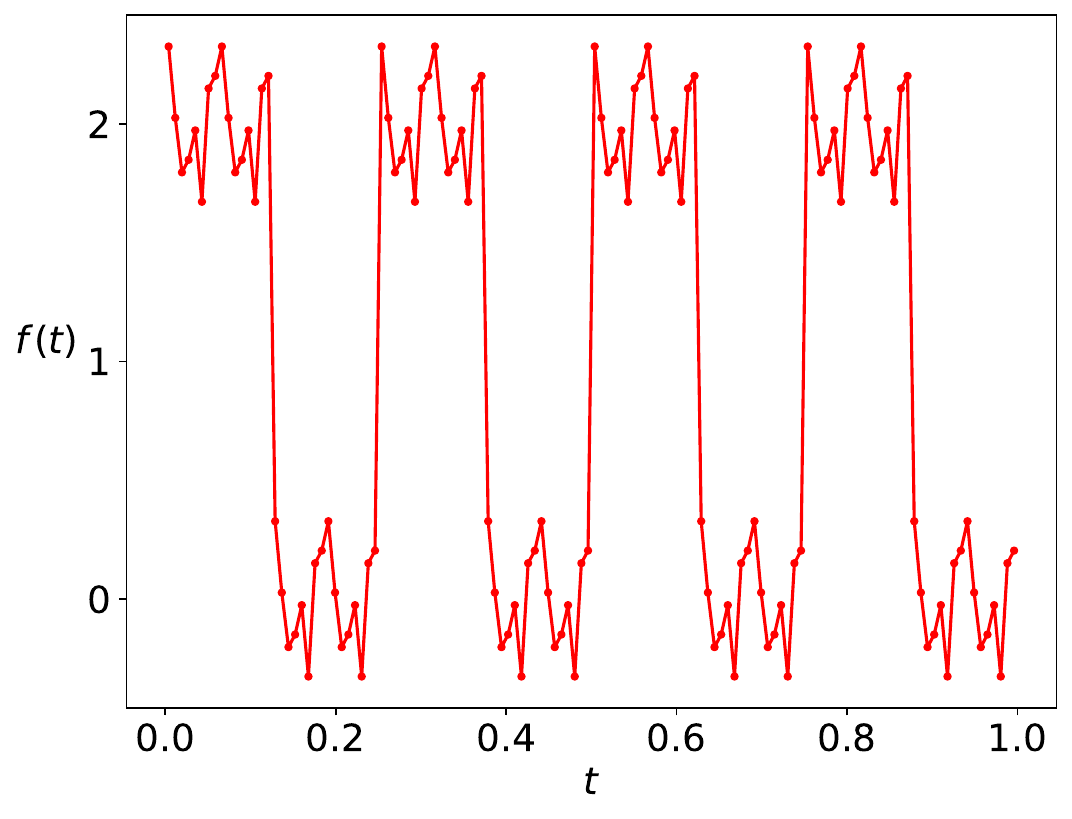}
		\caption{Signal}
		 \label{fig:signal_f_test}
	\end{subfigure}
\begin{subfigure}{0.48\textwidth}
		\centering
 \includegraphics[width=\textwidth]{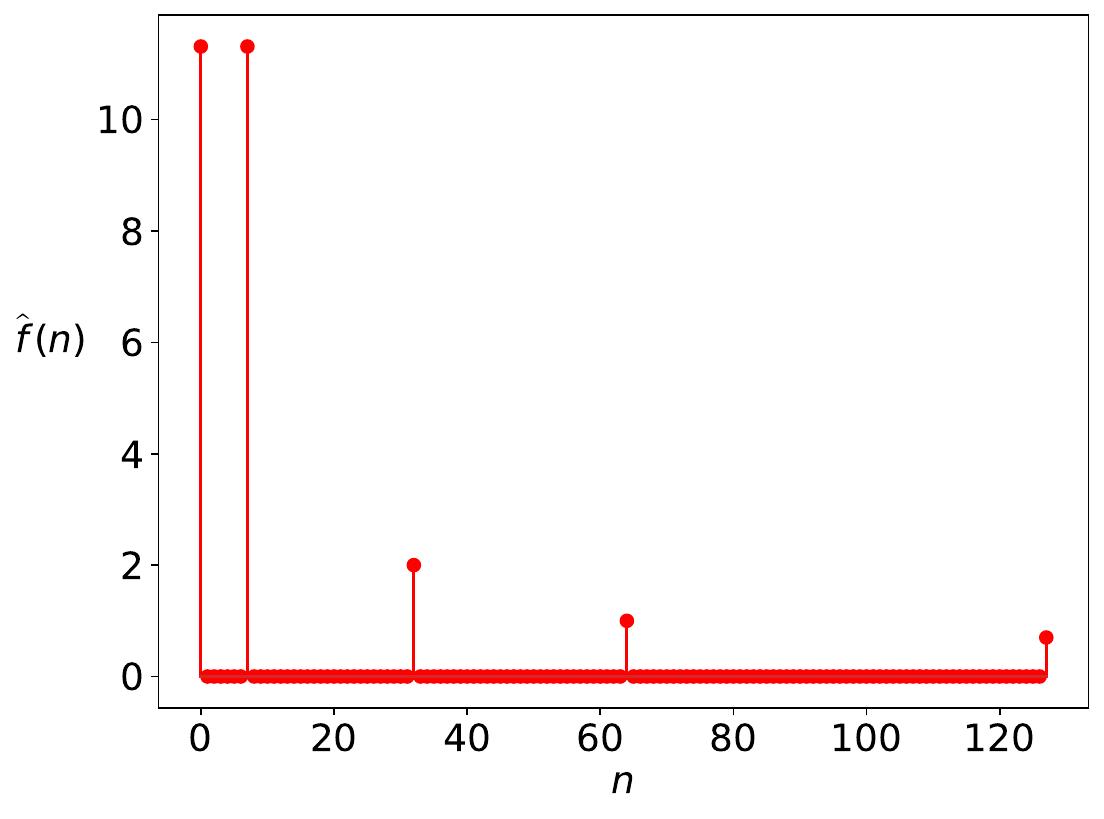}
		\caption{Sequency spectrum}
		 \label{fig:spectrum_f_test}
\end{subfigure}
    \caption{Signal $f(t)$ is shown on the left ((a)) and its sequency spectrum  $\widehat{f}(n)$ is shown on the right ((b)).  
    }
    \label{fig:signal_and_spectrum_f}
\end{figure}

\begin{figure}[H] 
\centering
      \begin{subfigure}{0.48\textwidth}
		\centering
 \includegraphics[width=\textwidth]{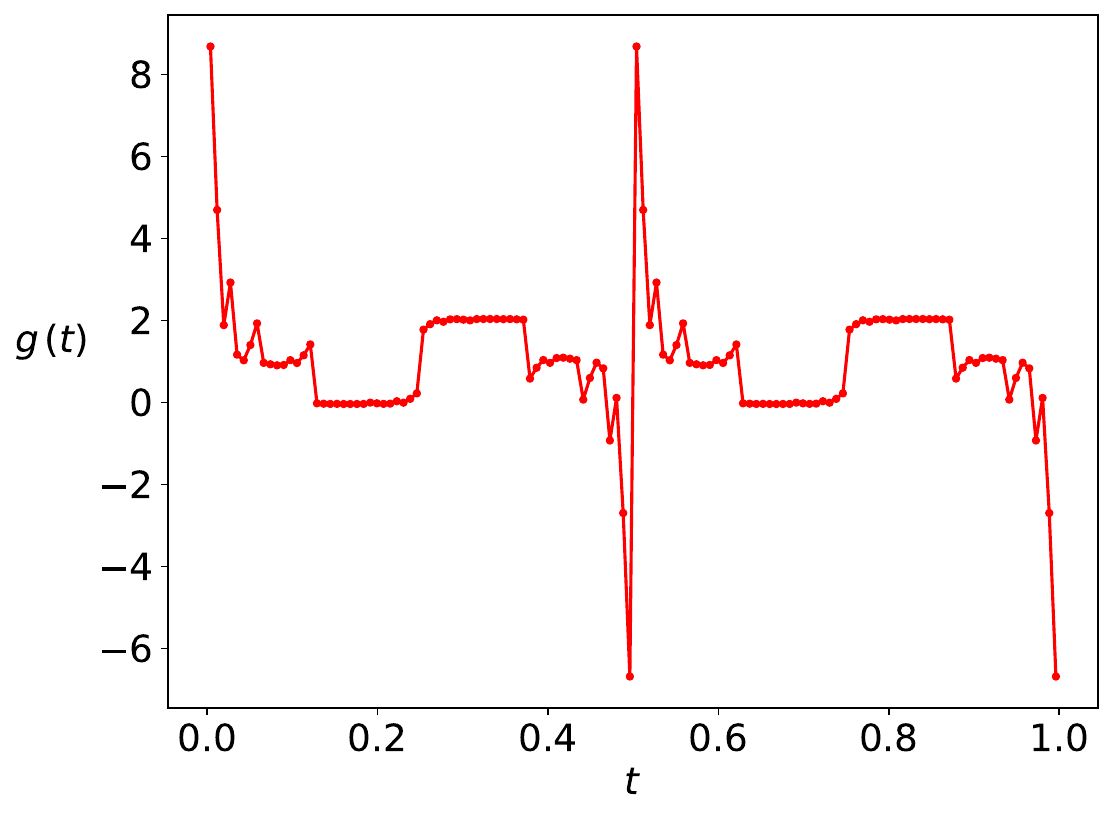}
		\caption{Signal}
		 \label{fig:signal_g}
	\end{subfigure}
\begin{subfigure}{0.48\textwidth}
		\centering
 \includegraphics[width=\textwidth]{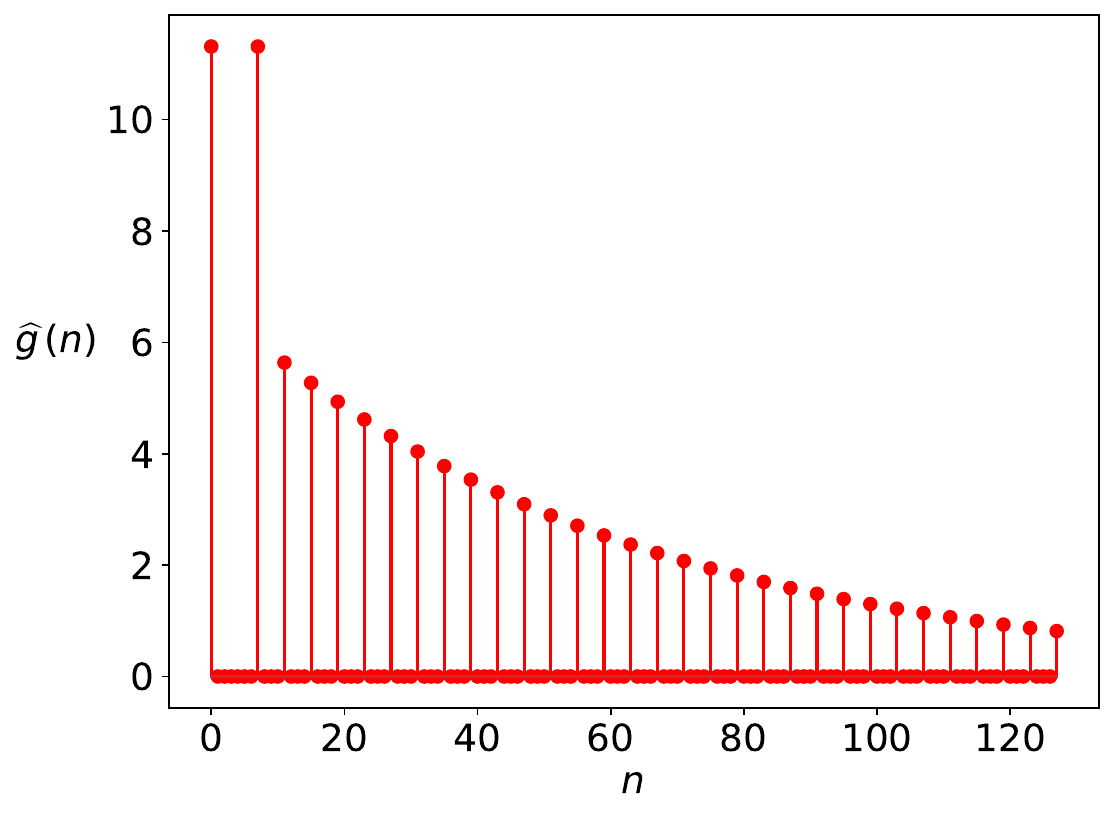}
		\caption{Sequency spectrum}
		 \label{fig:spectrum_g}
\end{subfigure}
    \caption{Signal $g(t)$ is shown on the left ((a)) and its sequency spectrum  $\widehat{g}(n)$ is shown on the right ((b)).  
    }
    \label{fig:signal_and_spectrum_g}
\end{figure}

\subsection{Low-pass and high-pass filtering} \label{Sec:low-pass-high-pass}
Let the sequency spectrum of the signal 
$
{\bf{f}} =\begin{bmatrix}
    f_0 & f_1 & \ldots & f_{N-1} 
\end{bmatrix}^T
$
be  
$
{\bf{f}} =\begin{bmatrix}
    \hat{f}_0 & \hat{f}_1 & \ldots & \hat{f}_{N-1}
\end{bmatrix}^T
$.
Suppose $c$ is the cutoff sequency such that the sequency components with sequency greater than or equal to  $c$ need to be removed. The low-pass filtered signal in the sequency domain is given by
\begin{align}
   {\hat{\bf{f}}}_{LP} =  \begin{bmatrix}
   \hat{f}_0 & \hat{f}_1 & \ldots & \hat{f}_{c-1} &  0 &  0 &  \ldots & 0
\end{bmatrix}^T.   
\end{align}
On the other hand, if sequency components below \( c \) are to be removed, then the corresponding high-pass filtered signal in the sequency domain is given by,
\begin{align}
   {\hat{\bf{f}}}_{HP} =  \begin{bmatrix}
   0 & 0 &\ldots & 0 & \hat{f}_{c} & \hat{f}_{c+1} & \ldots & \hat{f}_{N-1}
\end{bmatrix}^T.   
\end{align}

After removing the selected sequency components, the inverse Walsh-Hadamard transform is performed to convert the filtered sequency domain signals back to the time domain as
\begin{align}
    {\bf{\tilde{f}}}_{LP} =  H_N^S \,{\hat{\bf{f}}}_{LP},  \qquad {\bf{\tilde{f}}}_{HP} = H_N^S {\hat{\bf{f}}}_{HP},
\end{align}
where $  {\bf{\tilde{f}}}_{LP}$  and  $ {\bf{\tilde{f}}}_{HP} $ represent the low-pass and high-pass filtered signals in the time domain, respectively.

In the following, we describe (in Algorithm \ref{alg_filtering}) the steps of our proposed quantum approach for performing low-pass and high-pass filtering. One ancilla qubit is needed in this approach, and at the end of Algorithm \ref{alg_filtering}, the low-pass and high-pass filtered signals are obtained depending upon the state of the ancilla qubit. \\

	\begin{algorithm}[H] \label{alg_filtering}
			\DontPrintSemicolon
			\KwInput{The normalized input signal $\ket{\Psi}$. }
			\KwOutput{The signal $\ket{0} \otimes \ket{\widetilde{\Psi}_{l}} + \ket{1} \otimes \ket{\widetilde{\Psi}_{h}} $, where 
    $\ket{\widetilde{\Psi}_{l}}  $ and $\ket{\widetilde{\Psi}_{h}} $ are low-pass and high-pass filtered signals in the time domain, respectively.}
			\Fn{Filter $\ket{\Psi} $}{
                        Prepare the state $ \ket{0} \otimes \ket{\Psi} $ using $ n +1 $ qubits, where the ancilla qubit (the leftmost or the most significant qubit) is initialized to $\ket{0}$. \\
				    Apply  $ X \otimes H^{\otimes n}$ to the state  $\ket{0} \otimes \ket{\Psi}$ to obtain the state $\ket{1} \otimes \ket{\widehat{\Psi}}$, where $\ket{\widehat{\Psi}} = H^{\otimes n} \ket{\Psi}$ is the Walsh-Hadamard transform of the input signal $\ket{\Psi}$.  
        \\
                        Apply the unitary gate $I \otimes U_Z$ to $\ket{1} \otimes \ket{\widehat{\Psi}}$ to obtain the state $\ket{1} \otimes \ket{\widehat{\Psi_s}} $, where $\ket{\widehat{\Psi_s}} =  U_Z \ket{\widehat{\Psi}}$ is the Walsh-Hadamard transform of the input signal $\ket{\Psi}$ in sequency ordering. \\
                         Apply appropriate multi-controlled $X$ gates on the ancilla qubit, to split the state $\ket{1} \otimes \ket{\widehat{\Psi_s}}$ into low and high sequency components as shown below
                          \begin{align*}
                          \left(X \otimes \sum_{Z_n(k) < c } \ket{k}\bra{k} +  I \otimes \sum_{Z_n(k) \geq c  } \ket{k}\bra{k}\right) \left[\ket{1} \otimes \ket{\widehat{\Psi_s}}\right]
                          =   \ket{0} \otimes \ket{\widehat{\Psi}_{s,l}} + \ket{1} \otimes \ket{\widehat{\Psi}_{s,h}},
                        \end{align*}
                        where $\ket{\widehat{\Psi}_{s,l}} =  \sum_{Z_n(k) < c } \braket{k \, | \,\widehat{\Psi_s}} \ket{k}$ is the low sequency component and $ \ket{\widehat{\Psi}_{s,h}} = \sum_{Z_n(k) \geq c } \braket{k \, | \,\widehat{\Psi_s}} \ket{k}$ is the high sequency component, and $Z_n(k)$ denotes the sequency of $k$ (refer \meqref{eq:def_Z_k}). 
                        \\
                        This is an optional step. Although not part of the low-pass or high-pass filtering, the state $\ket{\widehat{\Psi}_{s,l}}$ and $\ket{\widehat{\Psi}_{s,h}}$ can be further processed by applying appropriate quantum gates as needed.  \\
                        Apply the unitary gate $ I \otimes U_Z^{-1}$ to obtain the filtered signal in natural ordering in the transformed domain.
                    \begin{align*}
                      \left( I \otimes U_Z^{-1}\right)  \left[ \ket{0} \otimes \ket{\widehat{\Psi}_{s,l}} + \ket{1} \otimes \ket{\widehat{\Psi}_{s,h}}\right] =  \ket{0} \otimes \ket{\widehat{\Psi}_{n,l}} + \ket{1} \otimes \ket{\widehat{\Psi}_{n,h}}.
                    \end{align*} \\
                        Apply Hadamard gates to $\ket{0} \otimes \ket{\widehat{\Psi}_{n,l}} + \ket{1} \otimes \ket{\widehat{\Psi}_{n,h}}$ to convert the  filtered signal $\ket{\widehat{\Psi}_{n,l}}$ (and $\ket{\widehat{\Psi}_{n,h}}$) back to the time domain,  
                        \begin{align*}
                            \left(  I   \otimes H^{\otimes n} \right)     \left[ \ket{0} \otimes \ket{\widehat{\Psi}_{n,l}} + \ket{1} \otimes \ket{\widehat{\Psi}_{n,h}} \right] =  \ket{0} \otimes \ket{\widetilde{\Psi}_{l}} + \ket{1} \otimes \ket{\widetilde{\Psi}_{h}},
                                \end{align*}
                            where $\ket{\widetilde{\Psi}_{l}} = H^{\otimes n}   \ket{\widehat{\Psi}_{n,l}} $ and $\ket{\widetilde{\Psi}_{h}} = H^{\otimes n}   \ket{\widehat{\Psi}_{n,h}} $ are low-pass and high-pass filtered signals in the time domain, respectively. \\
					\Return{  $\ket{0} \otimes \ket{\widetilde{\Psi}_{l}} + \ket{1} \otimes \ket{\widetilde{\Psi}_{h}} $.}
			}
			\caption{A quantum algorithm for low-pass and high-pass filtering.}
		\end{algorithm}
\begin{remark}
    We note that, if in Step 3 of Algorithm \ref{alg_filtering}, instead of  $\left(  X \otimes H^{\otimes n} \right)$, the unitary  $\left(  I \otimes H^{\otimes n} \right)$ is applied to the state  $\ket{0} \otimes \ket{\Psi}$, keeping all the remaining steps the same, then  Algorithm \ref{alg_filtering} returns the state  $\ket{0} \otimes \ket{\widetilde{\Psi}_{h}} + \ket{1} \otimes \ket{\widetilde{\Psi}_{l}} $. In other words, in this case, the high sequency component of the signal  $\ket{\widetilde{\Psi}_{h}}$ is associated with the state of ancilla qubit being $\ket{0} $ and the high sequency component of the signal is associated with the state of ancilla qubit being $\ket{1} $. 
\end{remark}
  
Some examples of low-pass and high-pass filtering, along with relevant quantum circuits, to illustrate Algorithm \ref{alg_filtering} are described below.

\subsubsection{Examples of low-pass and high-pass filtering}
Quantum circuits for low-pass and high-pass filtering for illustrating Algorithm \ref{alg_filtering}, for $n=7$ qubits, are shown in 
\mfig{fig:low_pass_circuit_filtering}. We note that the input to the circuit is $\ket{0} \otimes \ket{\Psi}$, where  $\ket{\Psi} = \ket{q_6} \otimes \ket{q_5} \otimes \cdots \otimes \ket{q_0} $ is the normalized signal and the ancilla qubit $\ket{q_7}$ is initialized to $\ket{0}$. This corresponds to Step 2 in Algorithm \ref{alg_filtering}. Next, $X \otimes H^{\otimes n} = X \otimes H^{\otimes 7}$ is applied to the state $\ket{0} \otimes \ket{\Psi}$ (this is shown before the barrier labeled A in the quantum circuit in \mfig{fig:low_pass_circuit_filtering}). This corresponds to Step 3 in Algorithm \ref{alg_filtering} and results in the state $\ket{1} \otimes \ket{\widehat{\Psi}}$, where $\ket{\widehat{\Psi}} = H^{\otimes 7} \ket{\Psi} $ is the Hadamard transform of the input signal $\ket{\Psi}$ in natural ordering. The part of the quantum circuit between barriers labeled A and B in \mfig{fig:low_pass_circuit_filtering}, implements the application of the unitary gate $I \otimes U_Z$ to $\ket{1} \otimes \ket{\widehat{\Psi}}$ in Step 4 of Algorithm \ref{alg_filtering}, and results in the quantum state $\ket{1} \otimes \ket{\widehat{\Psi_s}} $, where $\ket{\widehat{\Psi_s}} =  U_Z \ket{\widehat{\Psi}}$ is the Walsh-Hadamard transform of the input signal $\ket{\Psi}$ in sequency ordering. 

Next, the part of the quantum circuit between the barriers labeled B and C implements Step 5 of Algorithm \ref{alg_filtering}. In 
\mfig{fig:low_pass_circuit_filtering_fig_1}, a CNOT gate is applied on the target ancilla qubit $q_7$ when the control qubit $q_6$ is in the $\ket{0}$ state. We note that the condition $\ket{q_6} = \ket{0}$ corresponds to the selection of the sequency components of the signal that are less than $\frac{N}{2}$, where $N=2^n$. Therefore, the action of the CNOT gate results in flipping the target ancilla qubit to $\ket{0}$ state for sequency components of the signal that are less than $\frac{N}{2}$. 
As a result, the state  $\ket{1} \otimes \ket{\widehat{\Psi_s}}$ is split it into the low sequency component $\ket{\widehat{\Psi}_{s,l}}$ and the high sequency component $\ket{\widehat{\Psi}_{s,h}}$, as shown below
                          \begin{align*}
                          \left(X \otimes \sum_{Z_n(k) < c } \ket{k}\bra{k} +  I \otimes \sum_{Z_n(k) \geq c  } \ket{k}\bra{k}\right) \left[\ket{1} \otimes \ket{\widehat{\Psi_s}}\right]
                          =   \ket{0} \otimes \ket{\widehat{\Psi}_{s,l}} + \ket{1} \otimes \ket{\widehat{\Psi}_{s,h}},
                        \end{align*}
                        where $c = \frac{N}{2}$. 
                        Similarly, the quantum circuit in  
\mfig{fig:low_pass_circuit_filtering_fig_2} splits the signal into low and high sequency components with the cut-off sequency $c = \frac{N}{4}$, (i.e., where sequency components below the cut-off sequency  $c = \frac{N}{4}$ are regarded as low sequency components and above or equal to $c = \frac{N}{4}$ are regarded as high sequency components). We note that in \mfig{fig:low_pass_circuit_filtering_fig_2}, a multi-controlled $X$ gate is applied on the target ancilla qubit $q_7$ when both the control qubits $q_5$ and $q_6$ are in the $\ket{0}$ state. We note that this condition selects the sequency components of the signal that are less than $\frac{N}{4}$, where $N=2^n$. Therefore, in this case, the action of the multi-controlled $X$ gate results in flipping the target ancilla qubit to $\ket{0}$ state for sequency components of the signal that are less than $\frac{N}{4}$. 
In a similar way, one can observe that the quantum circuit in  
\mfig{fig:low_pass_circuit_filtering_fig_3} splits the signal into low and high sequency components with the cut-off sequency $c = \frac{3N}{4}$. (We note that in this case, the quantum circuit can be simplified by removing the two redundant $X$ gates used in the circuit, i.e., in this case, it is easier to start with the ancilla qubit $\ket{q_7}$ in the $\ket{0}$ state).

The next step, i.e., Step 6 in Algorithm \ref{alg_filtering} is optional and can be implemented between barriers labeled C and D. We observe that at this stage, the signal is already split into low and high sequency components in the Walsh (transformed) domain. Therefore, if needed, further processing and signal-shaping operations based on low and high sequency components can be carried out here. 

In Step 7 of Algorithm \ref{alg_filtering}, which is implemented between barriers labeled D and E,  
the unitary gate $ I \otimes U_Z^{-1}$ is applied to obtain the filtered signal in natural ordering in the Walsh (transformed) domain.
            \begin{align*}
                      \left( I \otimes U_Z^{-1}\right)  \left[ \ket{0} \otimes \ket{\widehat{\Psi}_{s,l}} + \ket{1} \otimes \ket{\widehat{\Psi}_{s,h}}\right] =  \ket{0} \otimes \ket{\widehat{\Psi}_{n,l}} + \ket{1} \otimes \ket{\widehat{\Psi}_{n,h}}.
            \end{align*} 
Finally, in Step 8 of Algorithm \ref{alg_filtering}, implemented after the barrier labeled E, Hadamard gates are applied to $\ket{0} \otimes \ket{\widehat{\Psi}_{n,l}} + \ket{1} \otimes \ket{\widehat{\Psi}_{n,h}}$ to convert the  filtered signal $\ket{\widehat{\Psi}_{n,l}}$ (and $\ket{\widehat{\Psi}_{n,h}}$) back to the time domain,  
                    \begin{align*}
                            \left(  I   \otimes H^{\otimes n} \right)     \left[ \ket{0} \otimes \ket{\widehat{\Psi}_{n,l}} + \ket{1} \otimes \ket{\widehat{\Psi}_{n,h}} \right] =  \ket{0} \otimes \ket{\widetilde{\Psi}_{l}} + \ket{1} \otimes \ket{\widetilde{\Psi}_{h}},
                    \end{align*}
                            where $\ket{\widetilde{\Psi}_{l}} = H^{\otimes n}   \ket{\widehat{\Psi}_{n,l}} $ and $\ket{\widetilde{\Psi}_{h}} = H^{\otimes n}   \ket{\widehat{\Psi}_{n,h}} $ are low-pass and high-pass filtered signals in the time domain, respectively.

\begin{remark} \label{Remark_complexity}
\leavevmode
\begin{enumerate}[(a)]
    \item  We observe that for the general case when $N=2^n$, Algorithm \ref{alg_filtering} requires $n+1$ qubits, say $q_n q_{n-1} \ldots q_0$, where the most significant qubit $q_n$ is the ancilla qubit.  
     \item In Step 5 of Algorithm \ref{alg_filtering}, when the cutoff sequency is $\frac{N}{2^r}$, with $1 \leq r \leq n $, one multiple control $X$ gate with $r$ open controls is needed. The control qubits are $q_{n-1}$, $q_{n-2}$, $\ldots$, $q_{n-r}$ and the target qubit is $q_n$ in this case. We recall that the use of open controls means that the target ancilla qubit state is flipped only when each of the control qubits is in the $\ket{0}$ state).
    We note that one multiple control $X$ gate with $r$ controls can be implemented using $O(r)$ CNOT gates  (see Lemma 7.2 in \cite{barenco1995elementary}).
    \item Similarly, it is easy to verify that, if the cutoff sequency is $N - \frac{N}{2^r}$, with $1 \leq r \leq n $, one multiple control $X$ gate with $r$ closed controls is needed in Step 5 of Algorithm \ref{alg_filtering}. The control qubits are $q_{n-1}$, $q_{n-2}$, $\ldots$, $q_{n-r}$ and the target qubit is $q_n$.
    Here the use of closed controls imply that the target ancilla qubit state is flipped only when each of the control qubits is in the $\ket{1}$ state.
\end{enumerate}
\end{remark}

\begin{figure}[H] 
\centering
      \begin{subfigure}{0.9\textwidth}
		\centering
	\includegraphics[width=\textwidth]{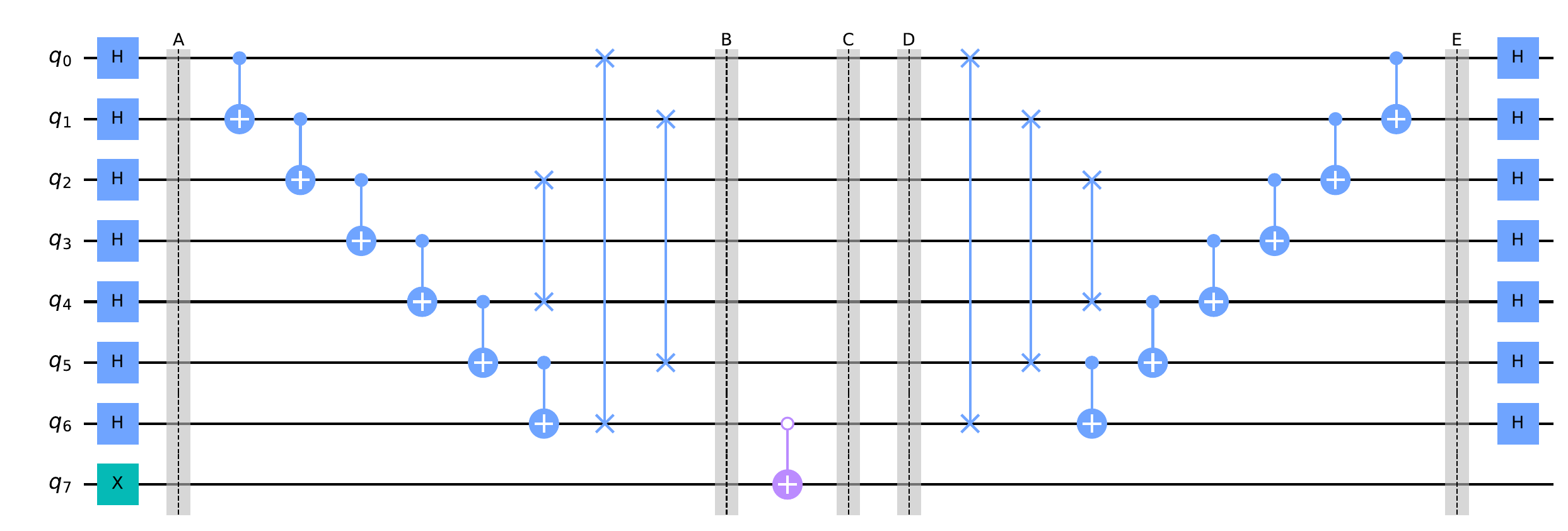}
		\caption{Quantum circuit: Cut-off sequency $= N/2$}
		\label{fig:low_pass_circuit_filtering_fig_1}
	\end{subfigure}
\\
\begin{subfigure}{0.9\textwidth}
		\centering
	\includegraphics[width=\textwidth]{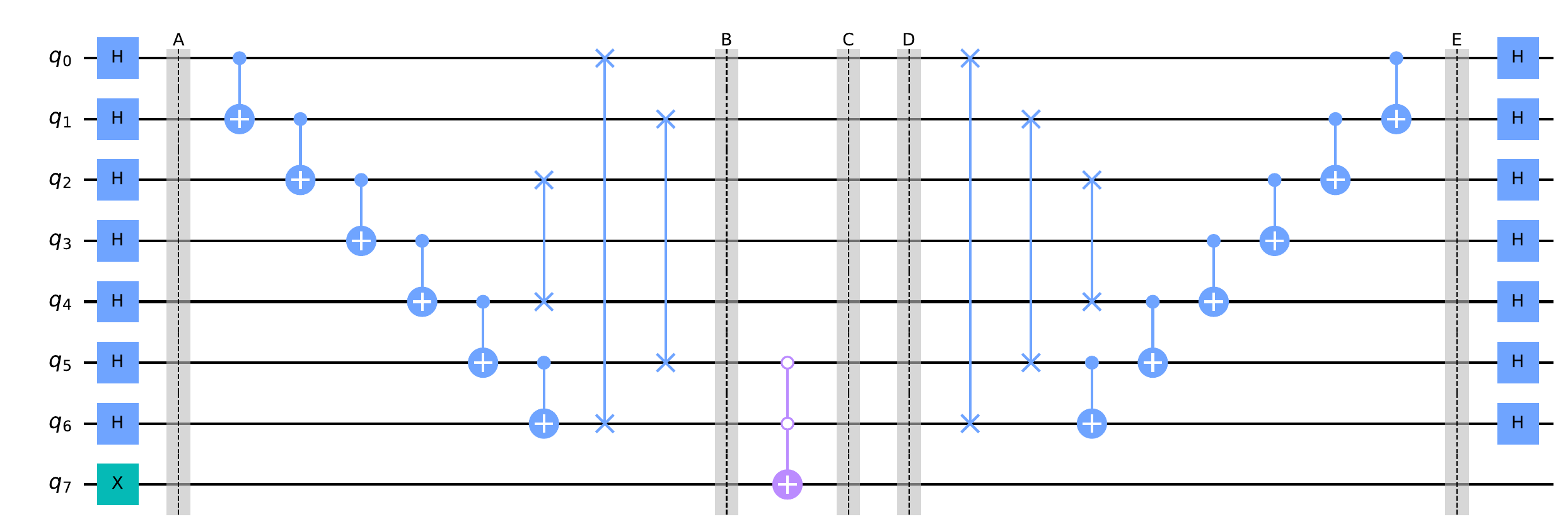}
		\caption{Quantum circuit: Cut-off sequency $= N/4$}
		\label{fig:low_pass_circuit_filtering_fig_2}
	\end{subfigure}\\
 \begin{subfigure}{0.9\textwidth}
		\centering
	\includegraphics[width=\textwidth]{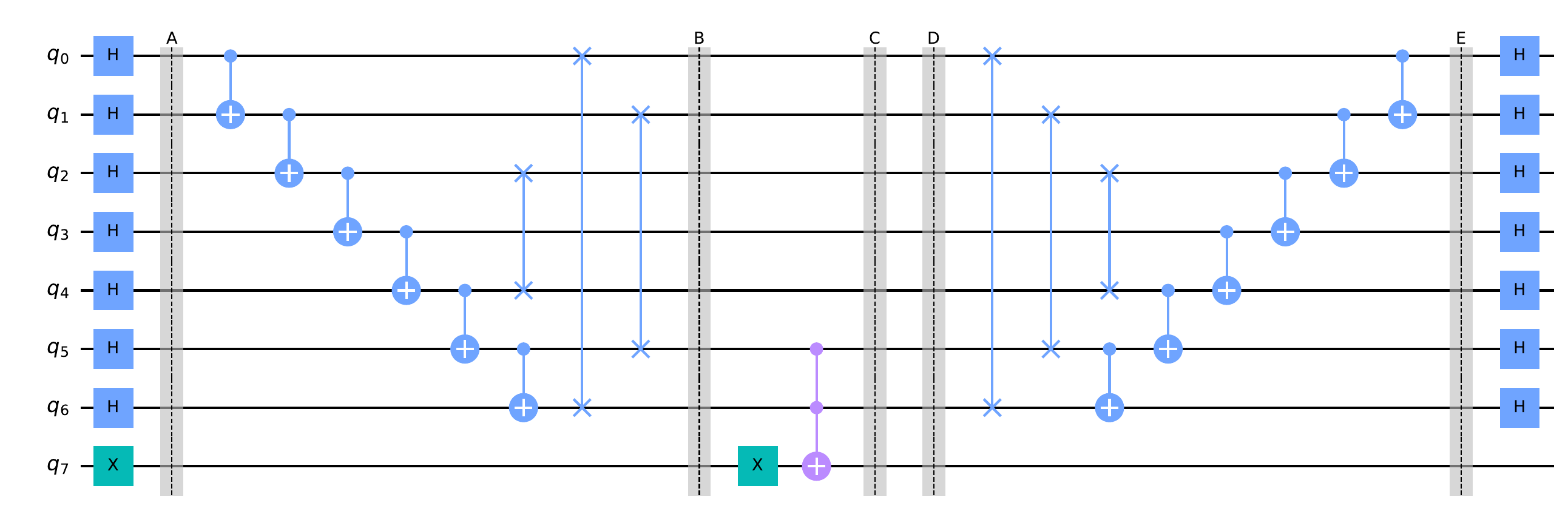}
		\caption{Quantum circuit: Cut-off sequency $= 3N/4$}
		\label{fig:low_pass_circuit_filtering_fig_3}
	\end{subfigure}
    \caption{Quantum circuits for low-pass (high-pass) filtering with cut-off sequencies of $N/2$ (top), $N/4$ (middle) and $3N/4$ (bottom). Information on low-pass (high-pass) filtered signal can be obtained by further measurements/processing when the ancilla (i.e., the most significant) qubit is in the $\ket{0}$ ($\ket{1}$ state). 
    }
    \label{fig:low_pass_circuit_filtering}
\end{figure}

In the following,  we present the results for low-pass filtering of the signals  $f(t)$  and $g(t)$ (shown in \mfig{fig:signal_and_spectrum_f} and \mfig{fig:signal_and_spectrum_g}, respectively) obtained by using Algorithm \ref{alg_filtering}. The quantum circuits shown in \mfig{fig:low_pass_circuit_filtering_fig_1}, \mfig{fig:low_pass_circuit_filtering_fig_2} and \mfig{fig:low_pass_circuit_filtering_fig_3} are applied to the normalized input states $\ket{\Psi_f}$ and $\ket{\Psi_g}$ corresponding to the discretized versions of the signals  $f(t)$  and $g(t)$, respectively. This results in low-pass (or high-pass) filtering of the input signals  $f(t)$  and $g(t)$, if the state of the ancilla qubit $\ket{q_7}$ is $\ket{0}$ (or $\ket{1}$) after the barrier labeled E. 

Application of the quantum circuit shown in \mfig{fig:low_pass_circuit_filtering_fig_2} to the normalized input state $\ket{\Psi_f}$ corresponding to the discretized version of the signal  $f(t)$ results in the low-pass filtered signal with cut-off sequency of $\frac{N}{4}$. The low-pass filtered signal and its spectra are shown in \mfig{fig:Low-pass_filtered_signal_f_N_by_4} and \mfig{fig:Low-pass_filtered_spectrum_f_N_by_4}, respectively. 
Similarly, application of the quantum circuit shown in \mfig{fig:low_pass_circuit_filtering_fig_1} to $\ket{\Psi_f}$ results in the low-pass filtered signal with cut-off sequency of $\frac{N}{2}$. The low-pass filtered signal and its spectra are shown in \mfig{fig:Low-pass_filtered_signal_f_N_by_2} and \mfig{fig:Low-pass_filtered_spectrum_f_N_by_2}, respectively. 

Application of the quantum circuits shown in \mfig{fig:low_pass_circuit_filtering_fig_2}, \mfig{fig:low_pass_circuit_filtering_fig_1} and \mfig{fig:low_pass_circuit_filtering_fig_3} to the normalized input state $\ket{\Psi_g}$ corresponding to the discretized version of the signal  $g(t)$ results in the low-pass (high-pass) filtered signals with cut-off sequencies of $\frac{N}{4}$, $\frac{N}{2}$ and $\frac{3N}{4}$, respectively, depending on the state of the ancilla qubit as discussed earlier. The low-pass filtered signals are shown in \mfig{fig:Low-pass_filtered_signal_g_N_by_4}, \mfig{fig:Low-pass_filtered_signal_g_N_by_2}, and \mfig{fig:Low-pass_filtered_signal_g_3_N_by_4} for cut-off sequencies of $\frac{N}{4}$, $\frac{N}{2}$ and $\frac{3N}{4}$, respectively. Their corresponding spectra are shown in \mfig{fig:Low-pass_filtered_spectrum_g_N_by_4}, \mfig{fig:Low-pass_filtered_spectrum_g_N_by_2} and \mfig{fig:Low-pass_filtered_spectrum_g_3_N_by_4}, respectively. Further, the high-pass filtered signals are shown in \mfig{fig:High-pass_filtered_signal_g_N_by_4}, \mfig{fig:High-pass_filtered_signal_g_N_by_2}, and \mfig{fig:High-pass_filtered_signal_g_3_N_by_4} for cut-off sequencies of $\frac{N}{4}$, $\frac{N}{2}$ and $\frac{3N}{4}$, respectively. Their corresponding spectra are shown in \mfig{fig:High-pass_filtered_spectrum_g_N_by_4}, \mfig{fig:High-pass_filtered_spectrum_g_N_by_2} and \mfig{fig:High-pass_filtered_spectrum_g_3_N_by_4}, respectively. These results for low-pass and high-pass filtering are in agreement with the expected results. We note that these results were obtained using simulations on IBM's open-source platform Qiskit by directly reading the state vectors. On an actual quantum computer, one can not directly obtain the state vector, but the global features of the filtered signals can be extracted efficiently.

\begin{figure}[H]
    \centering
    \begin{subfigure}{0.48\textwidth}
		\centering
\includegraphics[width=\textwidth]{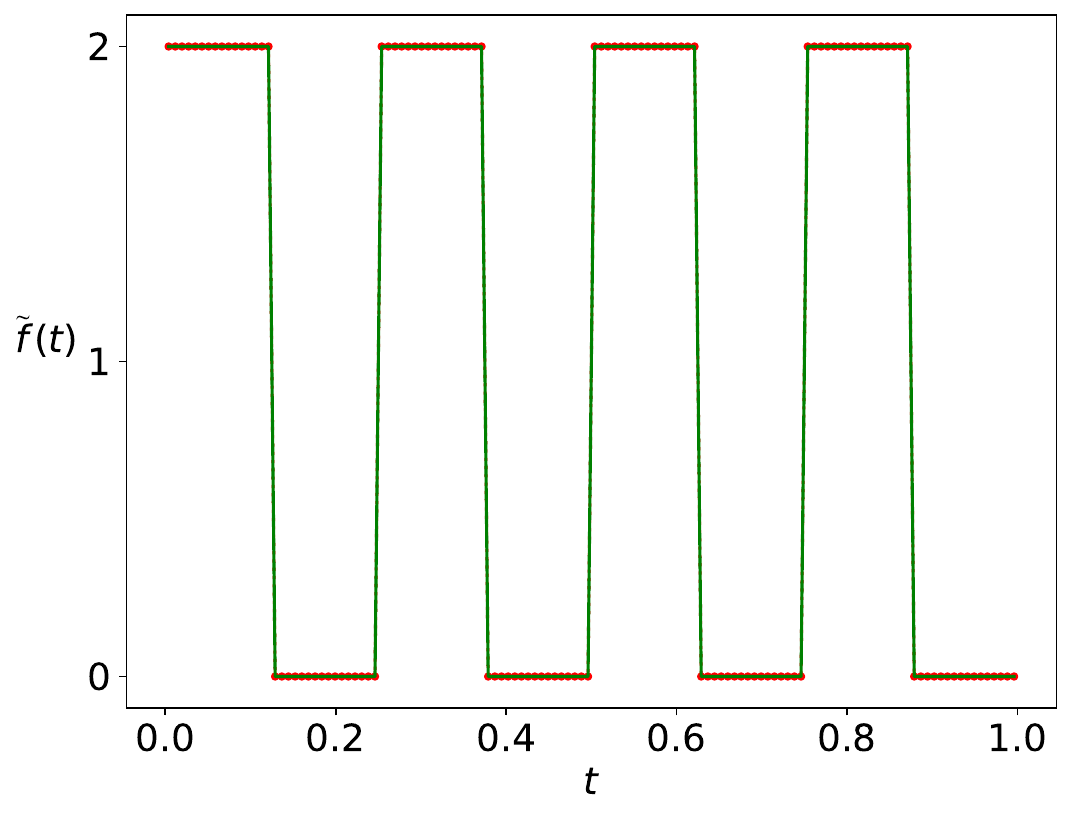}
		\caption{Low-pass filtered signal: \\ Cut-off sequency $= N/4$}
		\label{fig:Low-pass_filtered_signal_f_N_by_4}
	\end{subfigure}
  \begin{subfigure}{0.48\textwidth}
		\centering
\includegraphics[width=\textwidth]{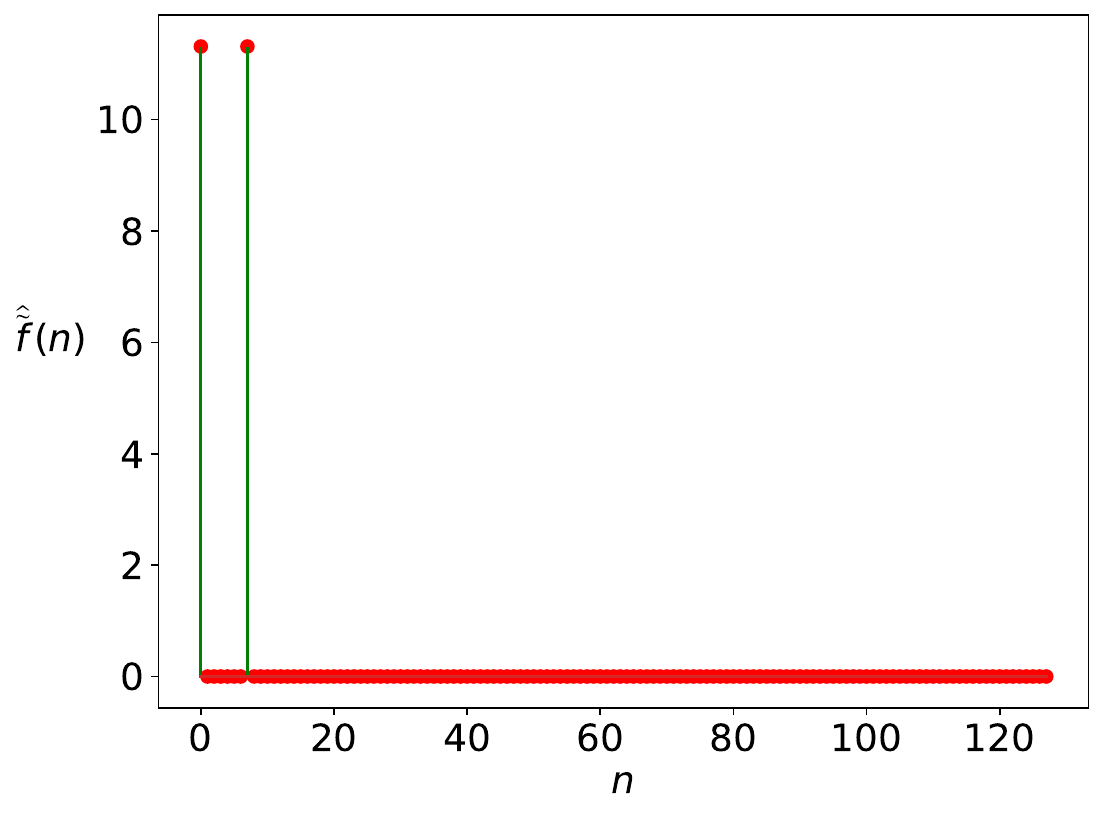}
		\caption{Low-pass filtered spectrum: \\ Cut-off sequency $= N/4$}
		\label{fig:Low-pass_filtered_spectrum_f_N_by_4}
	\end{subfigure}
\\
  \vspace{0.15cm}
   \begin{subfigure}{0.48\textwidth}
		\centering
	\includegraphics[width=\textwidth]{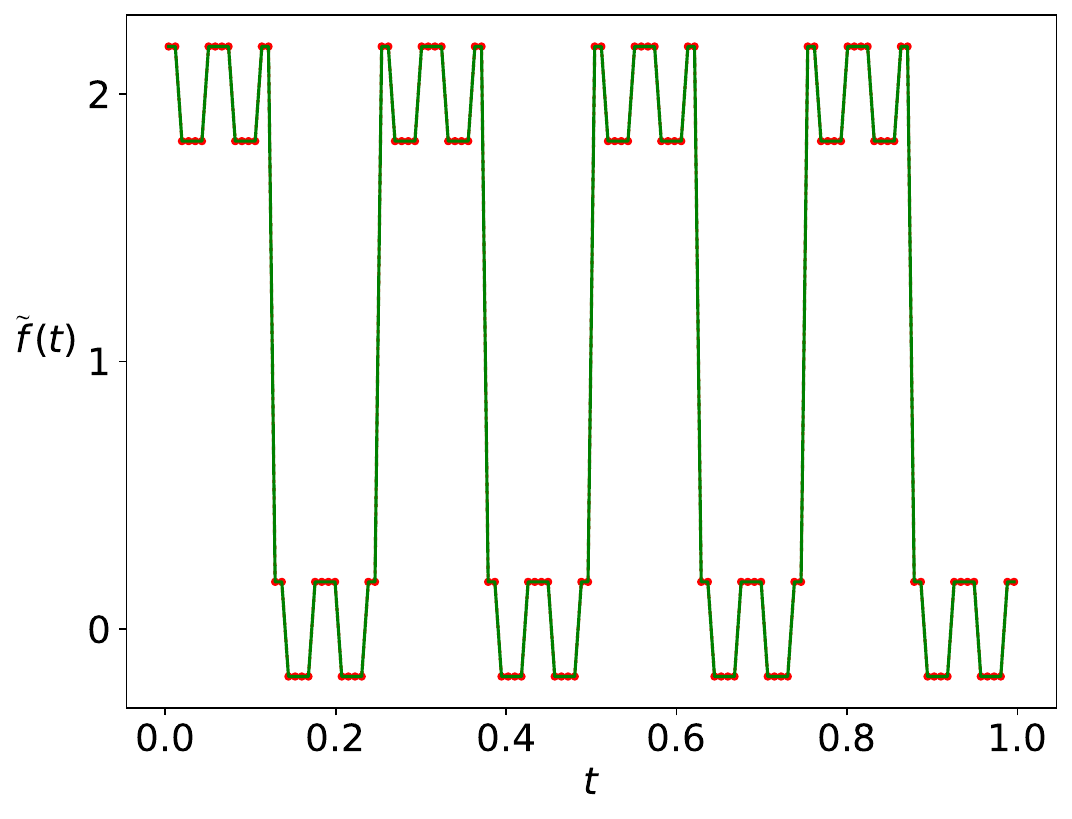}
		\caption{Low-pass filtered signal: \\ Cut-off sequency $= N/2$}
		\label{fig:Low-pass_filtered_signal_f_N_by_2}
\end{subfigure}
 \begin{subfigure}{0.48\textwidth}
		\centering
\includegraphics[width=\textwidth]{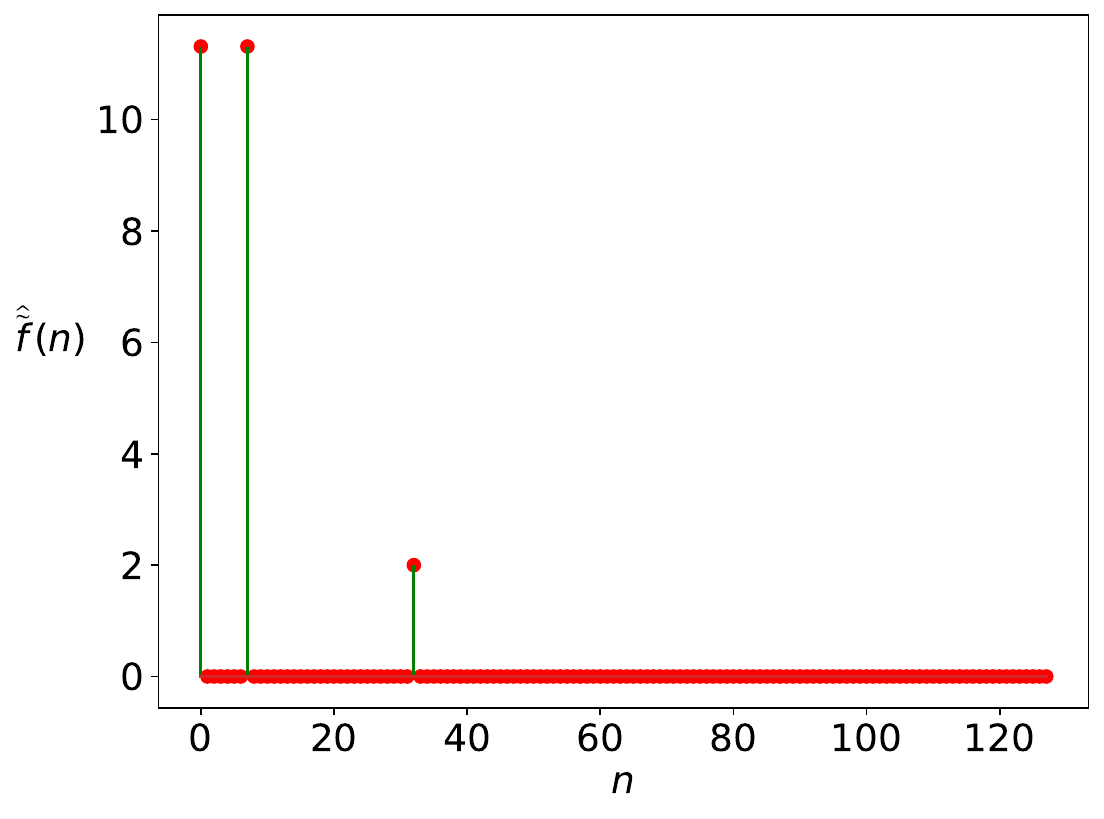}
		\caption{Low-pass filtered spectrum: \\ Cut-off sequency $= N/2$}
	\label{fig:Low-pass_filtered_spectrum_f_N_by_2}
\end{subfigure}
    \caption{Low-pass filtered signals ((a) and (c)) with cut-off sequencies of N/4 ((a)) and  N/2 ((c)) and corresponding spectra are shown on the right column ((b) and (d)). Filtered signals and corresponding spectra obtained from our proposed quantum approach (shown in green) match the expected results (shown in red).
    The input signal $f(t)$ and its sequency spectrum are shown in \mfig{fig:signal_and_spectrum_f}. 
    }
    \label{fig:examples_f_low_pass_filtering}
\end{figure}

\begin{figure}[H]
    \centering
    \begin{subfigure}{0.48\textwidth}
		\centering
\includegraphics[width=\textwidth]{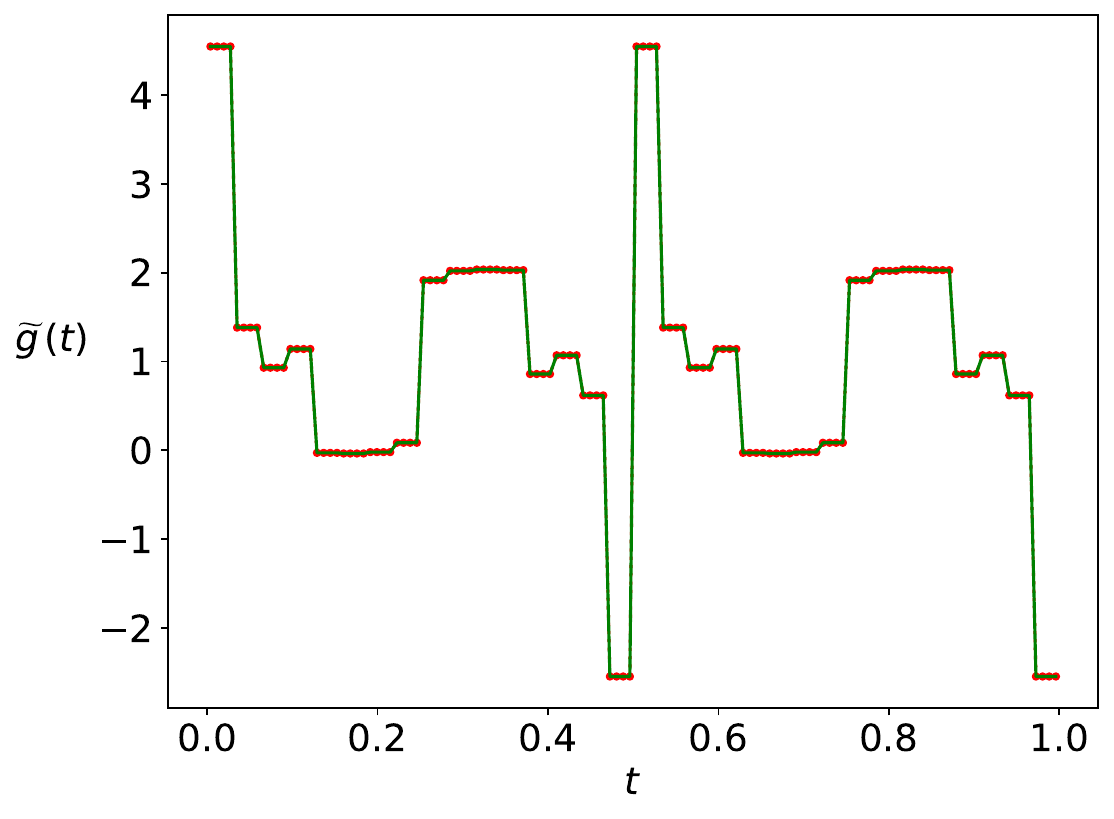}
		\caption{Low-pass filtered signal: \\ Cut-off sequency $= N/4$}
		\label{fig:Low-pass_filtered_signal_g_N_by_4}
	\end{subfigure}
  \begin{subfigure}{0.48\textwidth}
		\centering
\includegraphics[width=\textwidth]{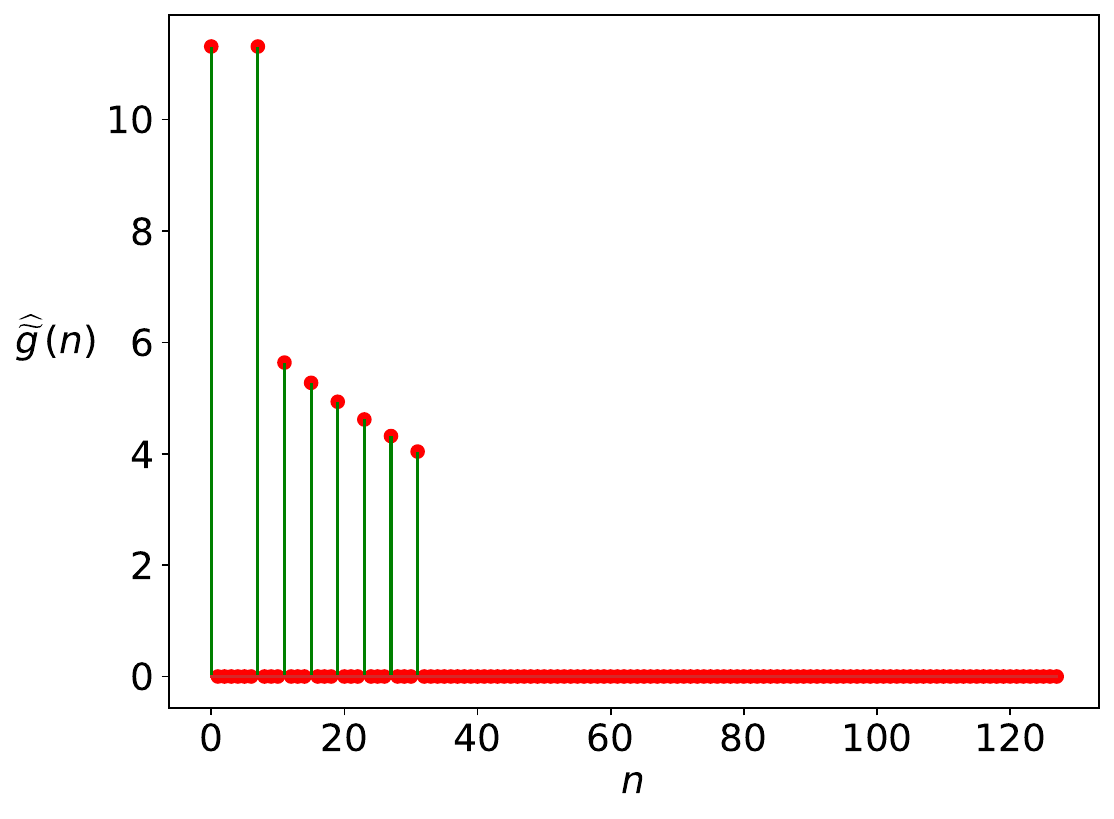}
		\caption{Low-pass filtered spectrum: \\ Cut-off sequency $= N/4$}
		\label{fig:Low-pass_filtered_spectrum_g_N_by_4}
	\end{subfigure}
\\
  \vspace{0.15cm}
   \begin{subfigure}{0.48\textwidth}
		\centering
	\includegraphics[width=\textwidth]{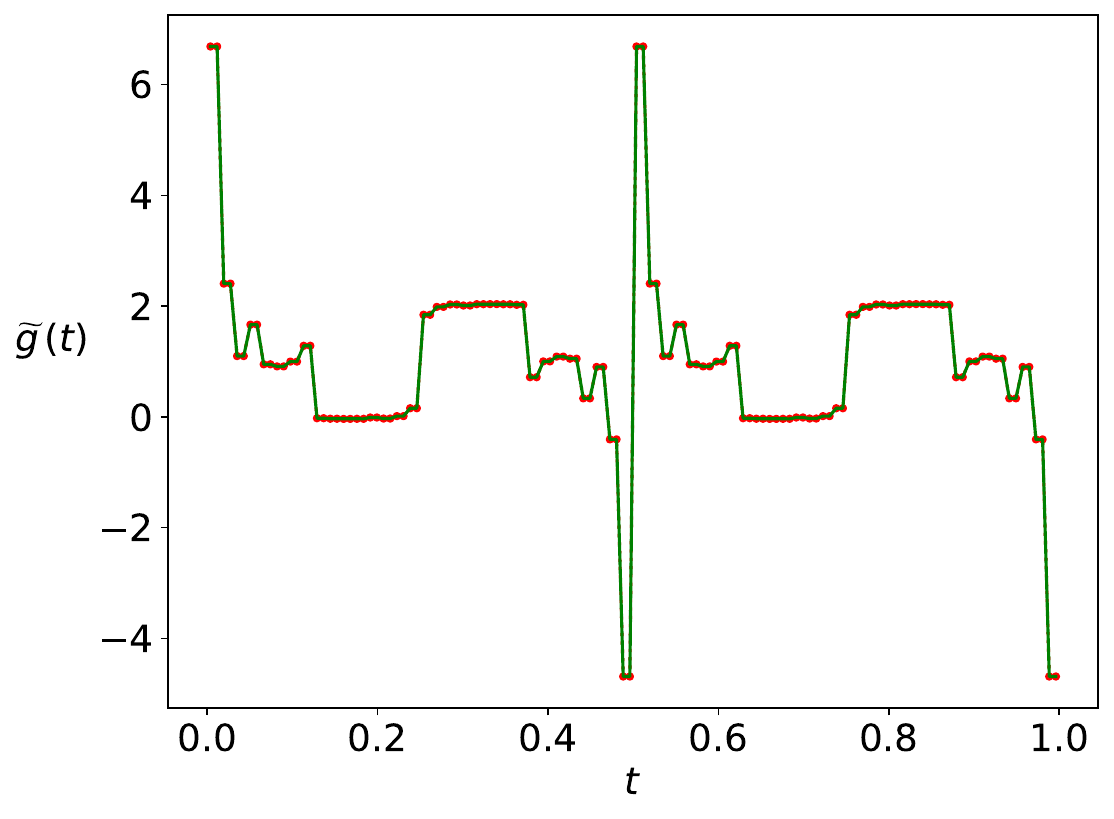}
		\caption{Low-pass filtered signal: \\ Cut-off sequency $= N/2$}
		\label{fig:Low-pass_filtered_signal_g_N_by_2}
\end{subfigure}
 \begin{subfigure}{0.48\textwidth}
		\centering
\includegraphics[width=\textwidth]{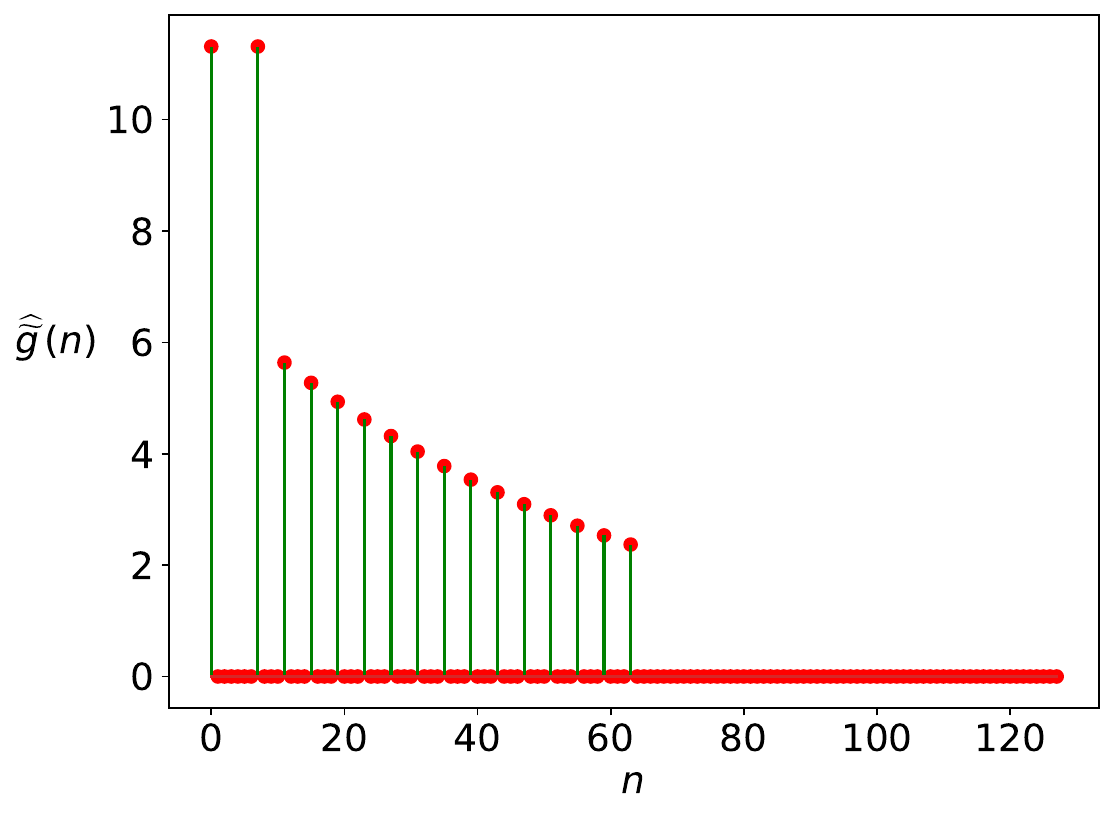}
		\caption{Low-pass filtered spectrum: \\ Cut-off sequency $= N/2$}
	\label{fig:Low-pass_filtered_spectrum_g_N_by_2}
\end{subfigure}
\\
  \vspace{0.15cm}
   \begin{subfigure}{0.48\textwidth}
		\centering
	\includegraphics[width=\textwidth]{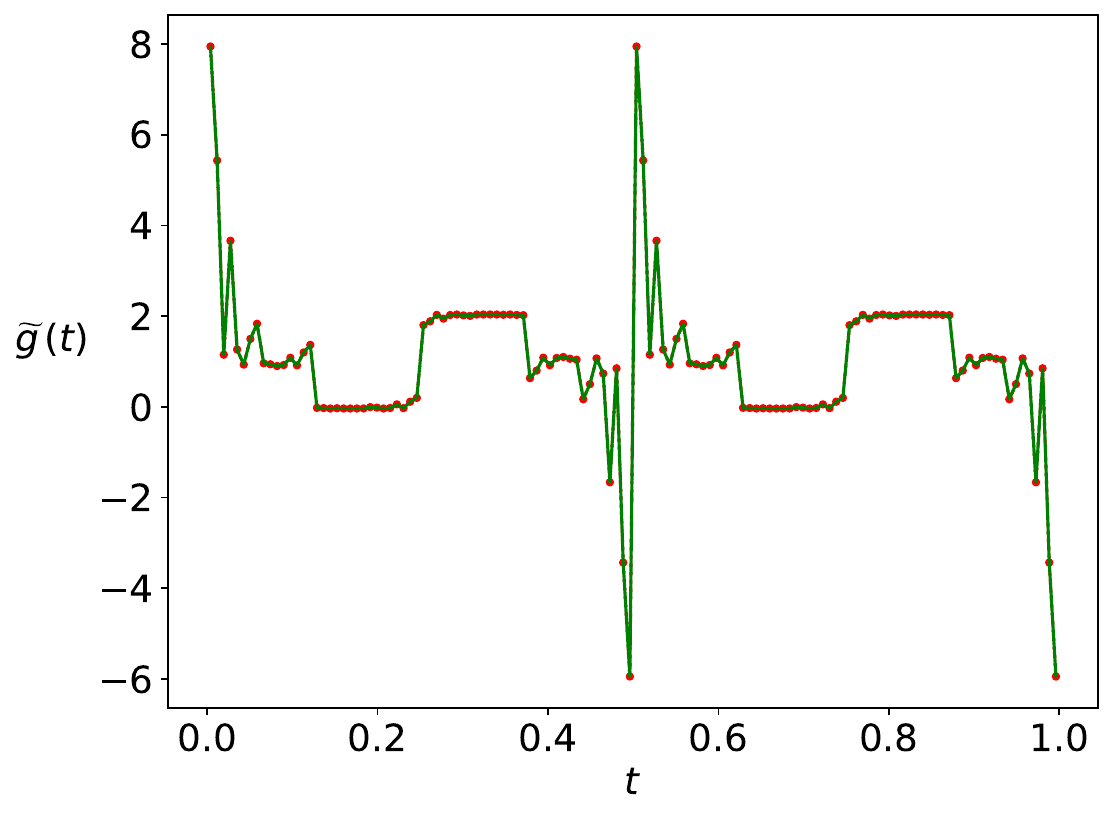}
		\caption{Low-pass filtered signal: \\ Cut-off sequency $= 3N/4$}
		\label{fig:Low-pass_filtered_signal_g_3_N_by_4}
\end{subfigure}
 \begin{subfigure}{0.48\textwidth}
		\centering
\includegraphics[width=\textwidth]{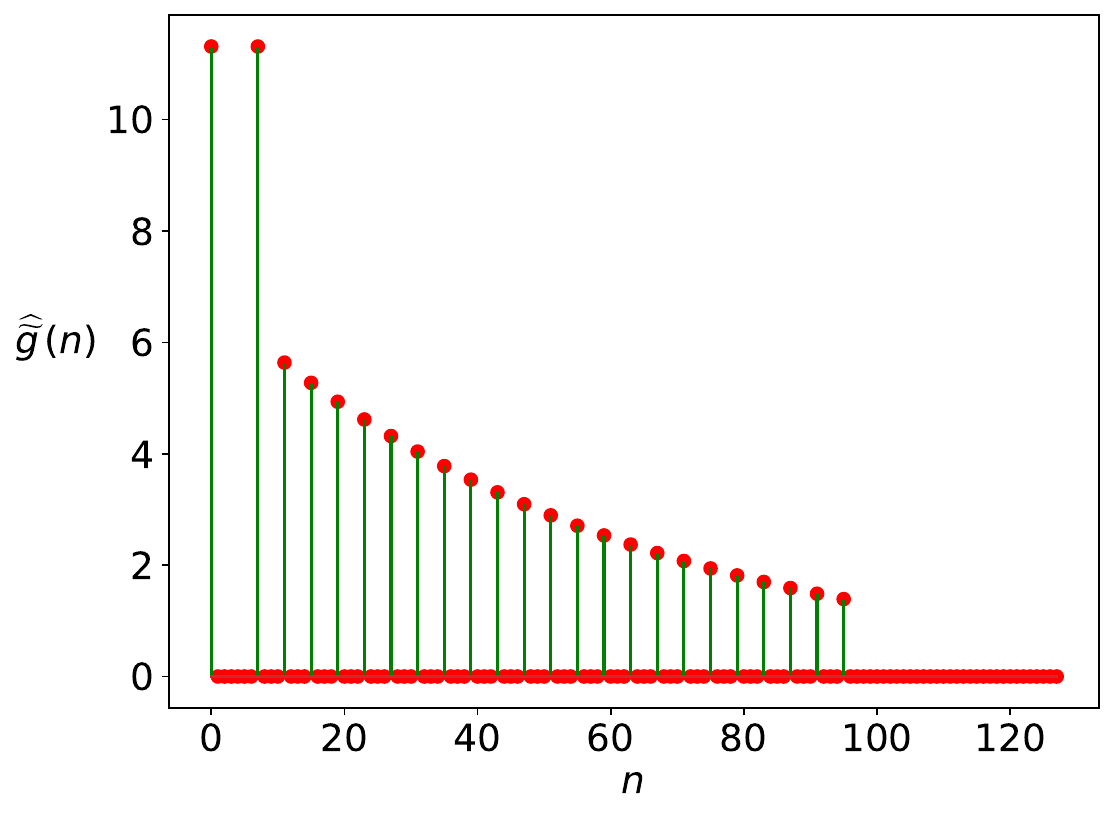}
		\caption{Low-pass filtered spectrum: \\ Cut-off sequency $= 3N/4$}
	\label{fig:Low-pass_filtered_spectrum_g_3_N_by_4}
\end{subfigure} 
    \caption{Low-pass filtered signals ((a) and (c)) with cut-off sequencies of N/4 ((a)) and  N/2 ((c)) and corresponding spectra are shown on the right column ((b) and (d)). Filtered signals and corresponding spectra obtained from our proposed quantum approach (shown in green) match the expected results (shown in red). The input signal $g(t)$ and its sequency spectrum is shown in \mfig{fig:signal_and_spectrum_g}. 
    }
    \label{fig:examples_g_low_pass_filtering}
\end{figure}

\begin{figure}[H]
    \centering
    \begin{subfigure}{0.48\textwidth}
		\centering
\includegraphics[width=\textwidth]{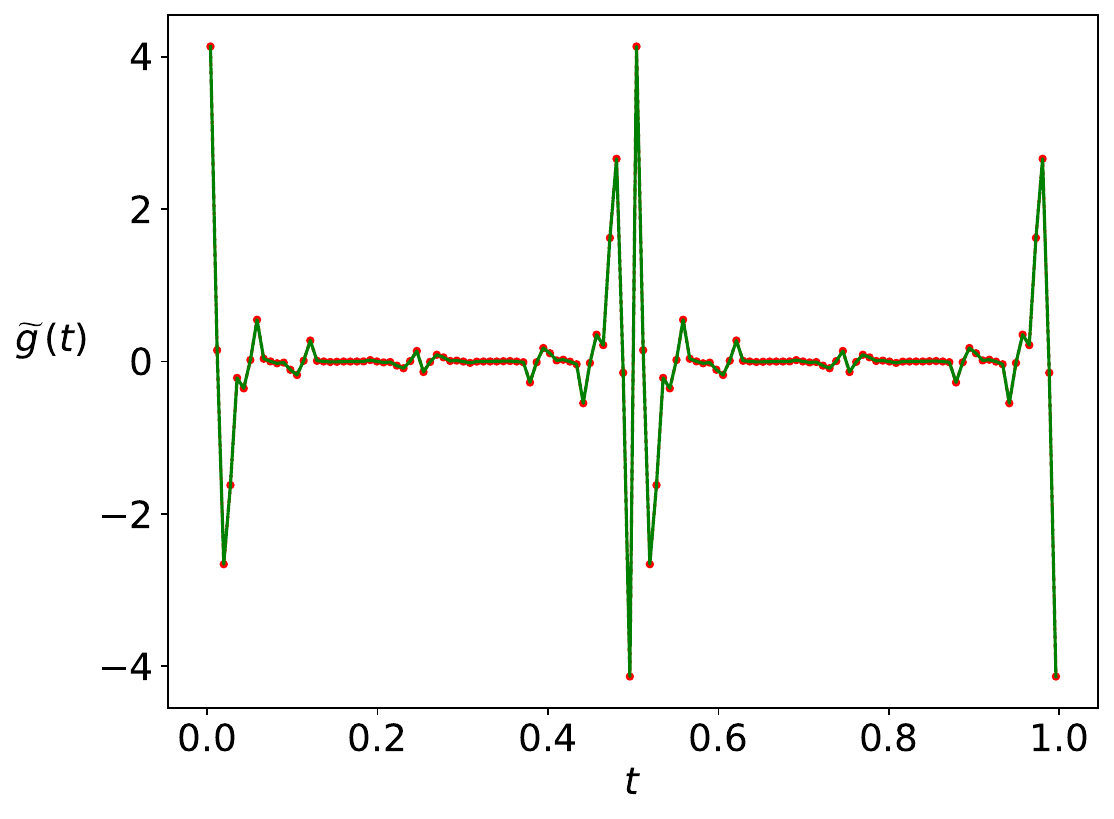}
		\caption{High-pass filtered signal: \\ Cut-off sequency $= N/4$}
		\label{fig:High-pass_filtered_signal_g_N_by_4}
	\end{subfigure}
  \begin{subfigure}{0.48\textwidth}
		\centering
\includegraphics[width=\textwidth]{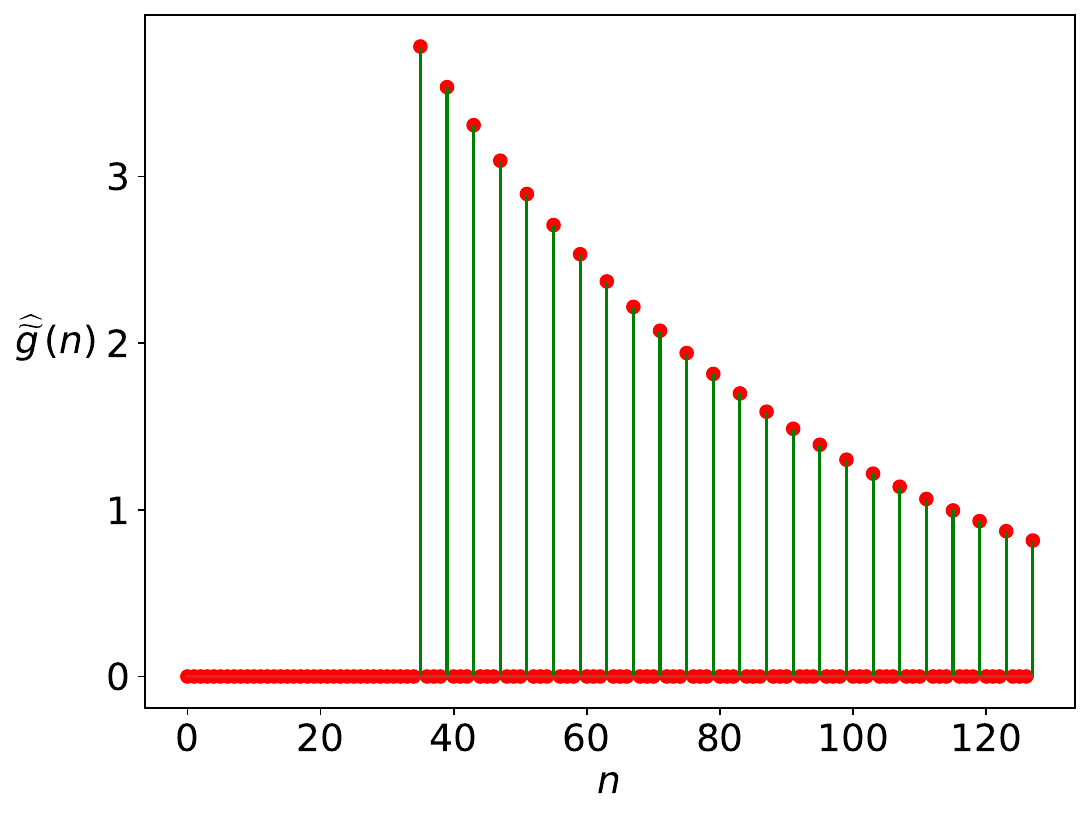}
		\caption{High-pass filtered spectrum: \\ Cut-off sequency $= N/4$}
		\label{fig:High-pass_filtered_spectrum_g_N_by_4}
	\end{subfigure}
\\
   \begin{subfigure}{0.48\textwidth}
		\centering
	\includegraphics[width=\textwidth]{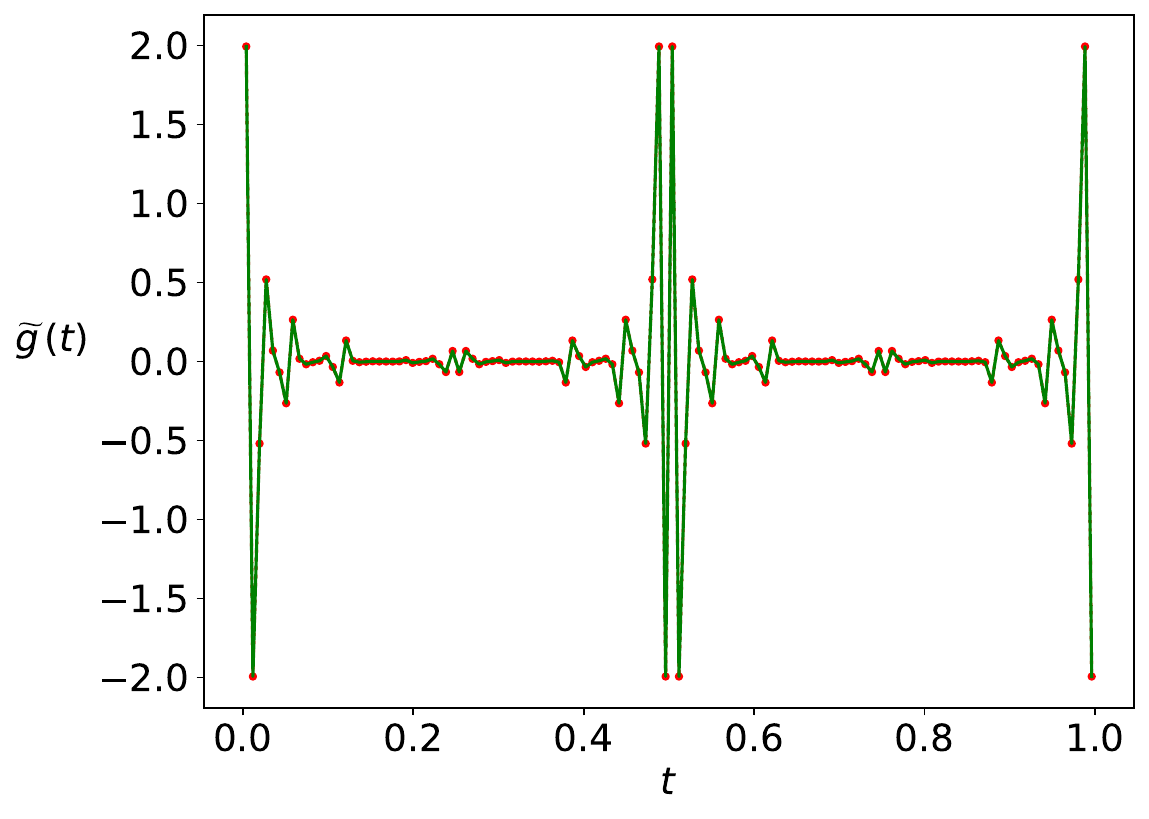}
		\caption{High-pass filtered signal: \\ Cut-off sequency $= N/2$}
		\label{fig:High-pass_filtered_signal_g_N_by_2}
\end{subfigure}
 \begin{subfigure}{0.48\textwidth}
		\centering
\includegraphics[width=\textwidth]{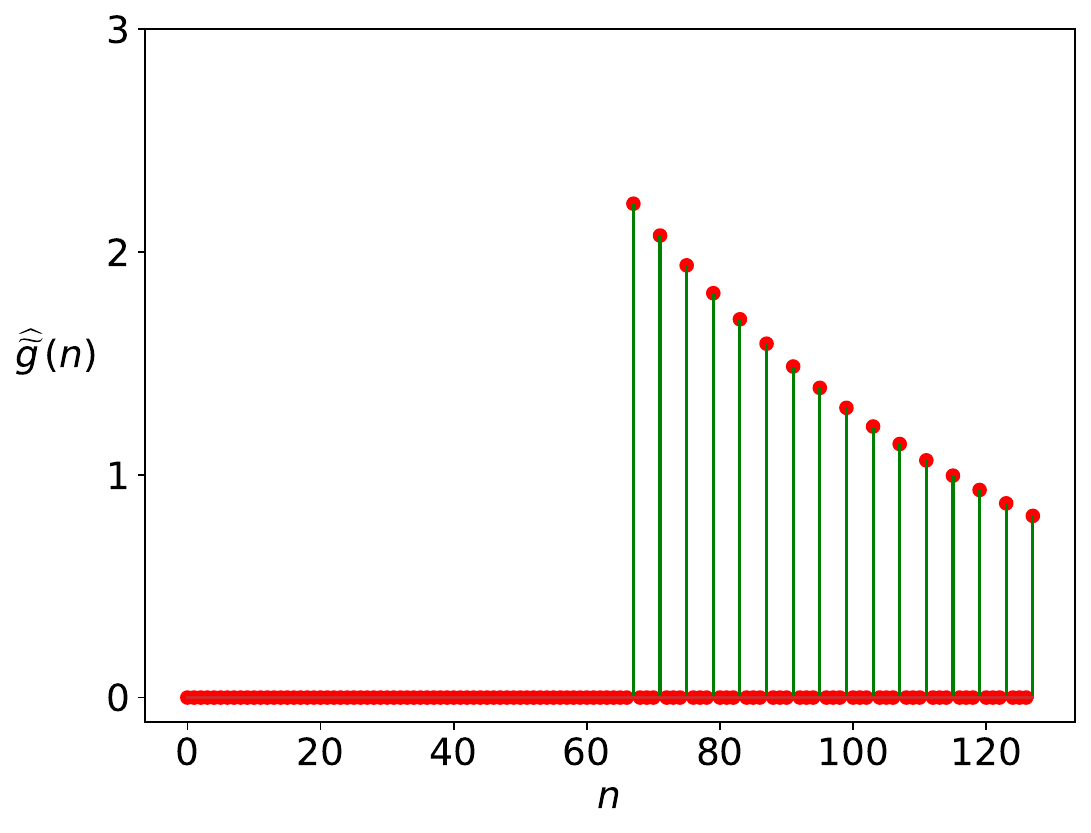}
		\caption{High-pass filtered spectrum: \\ Cut-off sequency $= N/2$}
	\label{fig:High-pass_filtered_spectrum_g_N_by_2}
\end{subfigure}
\\
   \begin{subfigure}{0.48\textwidth}
		\centering
	\includegraphics[width=\textwidth]{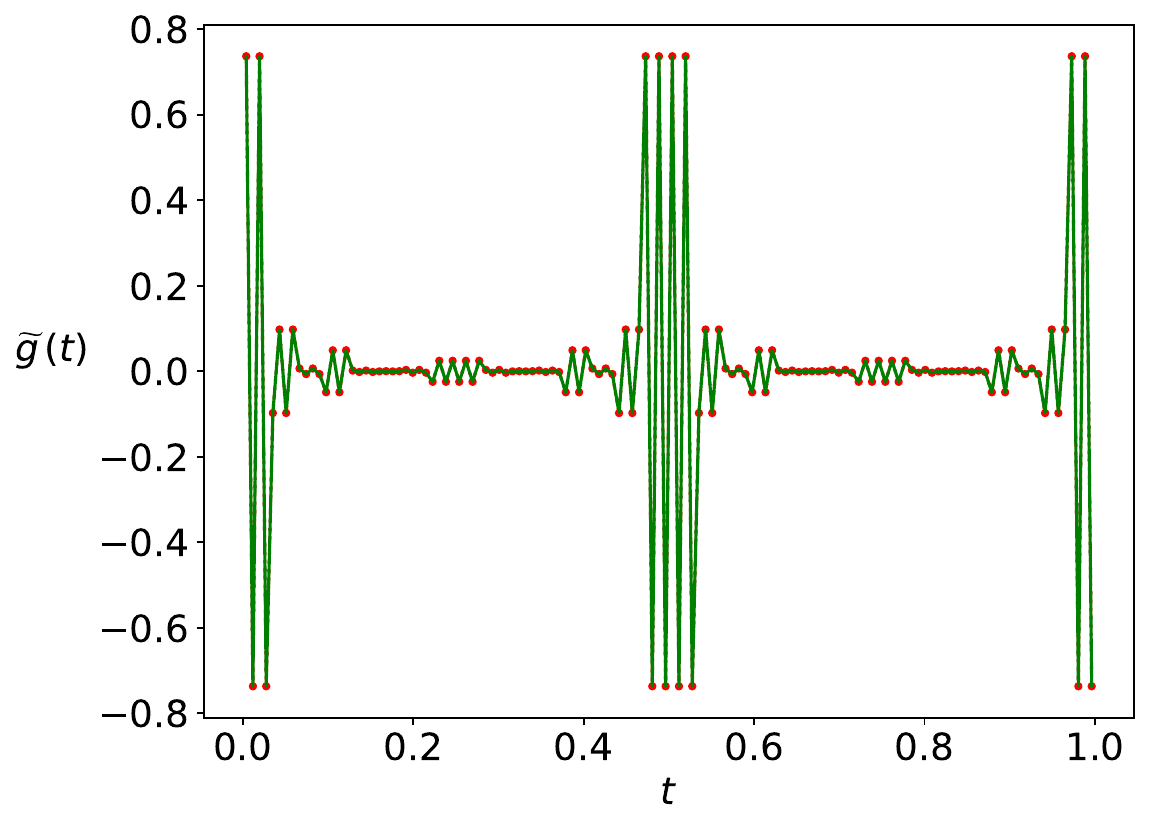}
		\caption{High-pass filtered signal: \\ Cut-off sequency $= 3N/4$}
		\label{fig:High-pass_filtered_signal_g_3_N_by_4}
\end{subfigure}
 \begin{subfigure}{0.48\textwidth}
		\centering
\includegraphics[width=\textwidth]{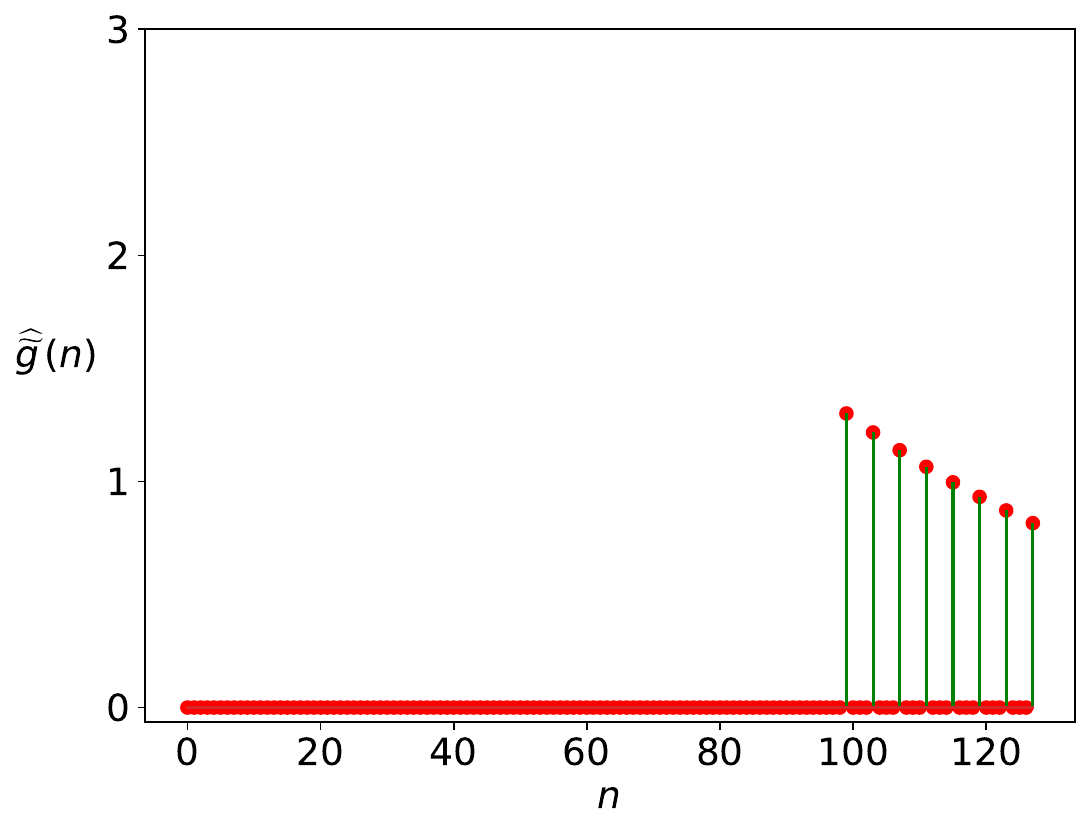}
		\caption{High-pass filtered spectrum: \\ Cut-off sequency $= 3N/4$}
	\label{fig:High-pass_filtered_spectrum_g_3_N_by_4}
\end{subfigure}
    \caption{High-pass filtered signals ((a) and (c)) with cut-off sequencies of N/4 ((a)) and  N/2 ((c)) and corresponding spectra are shown on the right column ((b) and (d)). Filtered signals and corresponding spectra obtained from our proposed quantum approach (shown in green) match the expected results (shown in red). The input signal $g(t)$ and its sequency spectrum are shown in \mfig{fig:signal_and_spectrum_g}.
    }
    \label{fig:examples_g_low_pass_filtering}
\end{figure}


\subsection{DC filtering} \label{Sec:DC}
DC filtering, also known as baseline correction or zero-offset removal, is a process used to remove or minimize the DC (direct current) component from a discrete signal. The DC component represents the average or constant value of the signal, and filtering it out is often necessary in signal processing applications to focus on the variations and fluctuations in the signal.

We note that the DC component of the signal corresponds to the zero sequency component of the signal. One can use Algorithm \ref{alg_filtering}, with the cutoff sequency  $c = 1$, to carry out high-pass filtering, which is equivalent to removing the DC component from the input signal, i.e., the DC filtering of the input signal. However, since the DC component (i.e., the zero sequency) component of the signal is the same in both sequency and natural orderings, computing Hadamard transform in natural order is sufficient for DC filtering, and the conversion from natural to sequency ordering (and its inverse) is not needed. Therefore, Steps $4$ and $7$ in Algorithm \ref{alg_filtering} may be skipped for DC filtering of an input signal. 

 A quantum circuit for DC filtering is shown in \mfig{fig:circuit_DC}. Here the qubit $\ket{q_7}$ is used as an ancilla qubit, and the remaining qubits (from $\ket{q_0}$ to $\ket{q_6}$) are initialized to the normalized input signal.  
Note that the two redundant X gates can be removed from the quantum circuit shown on the left to obtain an equivalent circuit shown on the right in \mfig{fig:circuit_DC}.
  This quantum circuit is used for DC filtering of discretized versions of the input signals $f(t)$ and $g(t)$. 
When the ancilla (i.e. the most significant) qubit is in $\ket{0}$ state, all sequency components other than the DC (i.e., sequency 0) component are allowed to pass through and the DC component is eliminated.
  The resulting DC-filtered signals (and their spectra) are shown in 
 \mfig{fig:dc_filtered_signal_and_spectrum_f} and \mfig{fig:dc_filtered_signal_and_spectrum_g}. The obtained results based on our quantum circuits are in good agreement with the expected results.

\begin{figure}[H]
    \centering
\includegraphics[width=0.48\textwidth]
{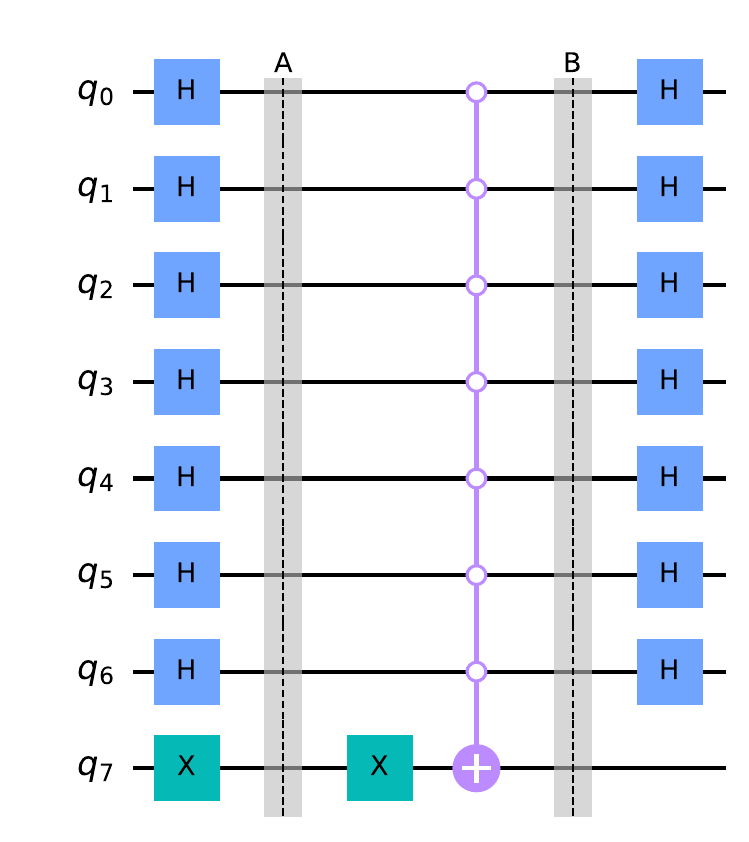}
\begin{tikzpicture}
%
\draw[draw opacity=0] (0,0) -- (1,4) node[above] {$\simeq$};
\end{tikzpicture}
\includegraphics[width=0.42\textwidth]{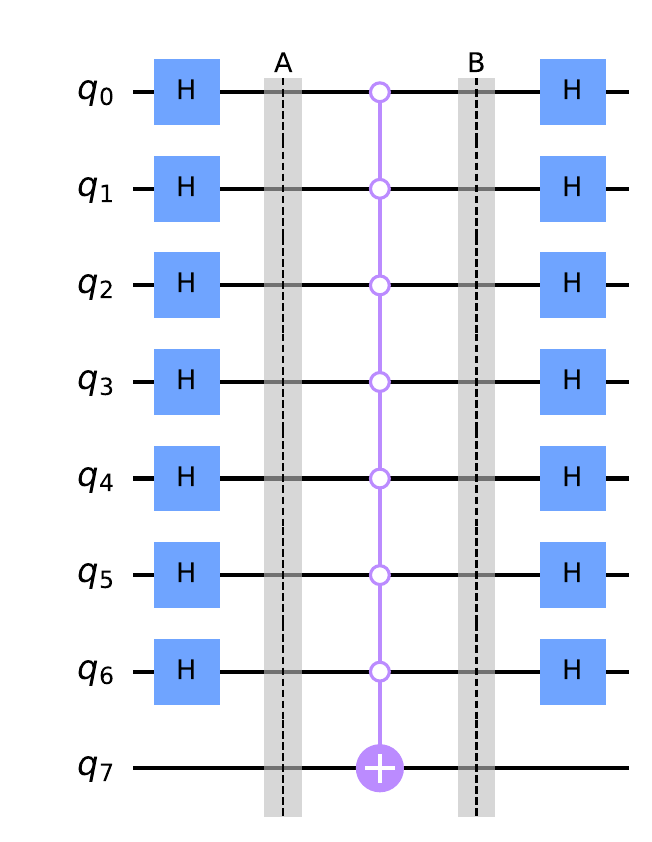}
    \caption{
    Quantum circuits for DC filtering. These circuits selectively allow all sequency components other than the DC (i.e., sequency 0) component when the ancilla (i.e. the most significant) qubit is in $\ket{0}$ state. The quantum circuit on the left is equivalent to the quantum circuit on the right (with two X gates removed).}
    \label{fig:circuit_DC}
\end{figure}

\begin{figure}[H] 
\centering
      \begin{subfigure}{0.48\textwidth}
		\centering
 \includegraphics[width=\textwidth]{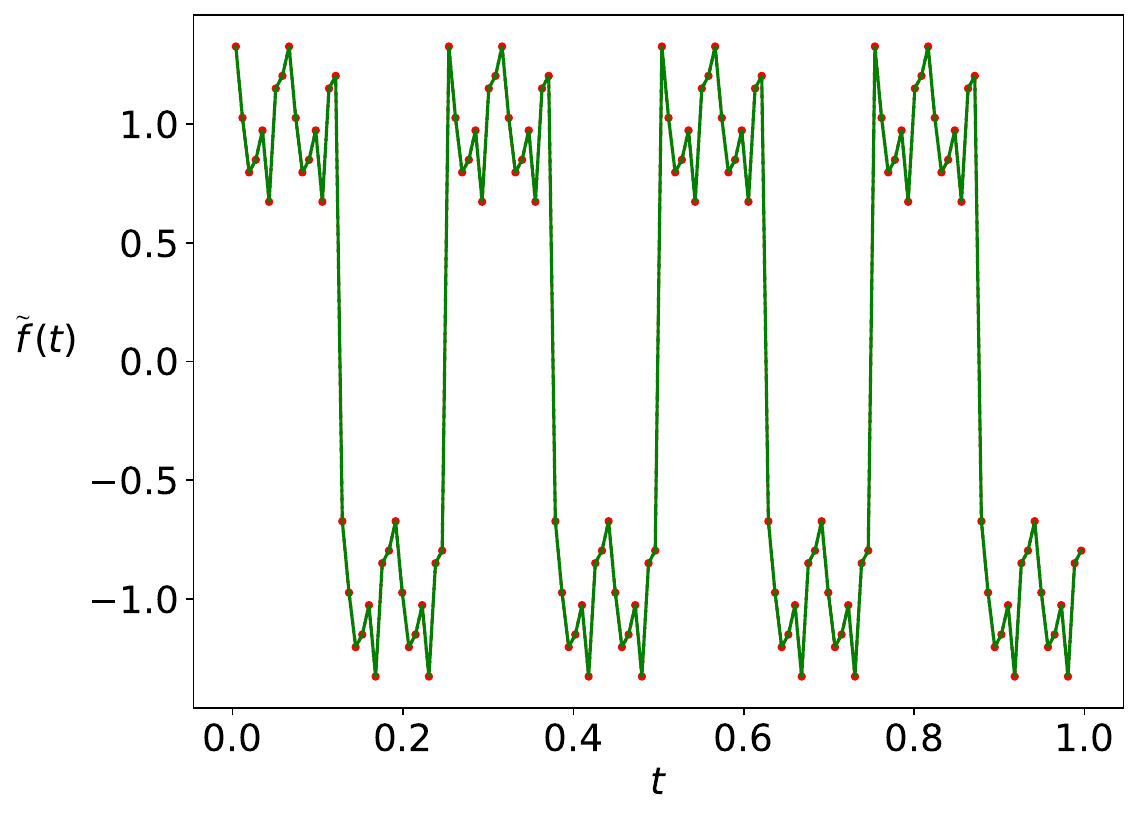}
		\caption{Signal}
		 \label{fig:signal_f}
	\end{subfigure}
\begin{subfigure}{0.48\textwidth}
		\centering
 \includegraphics[width=\textwidth]{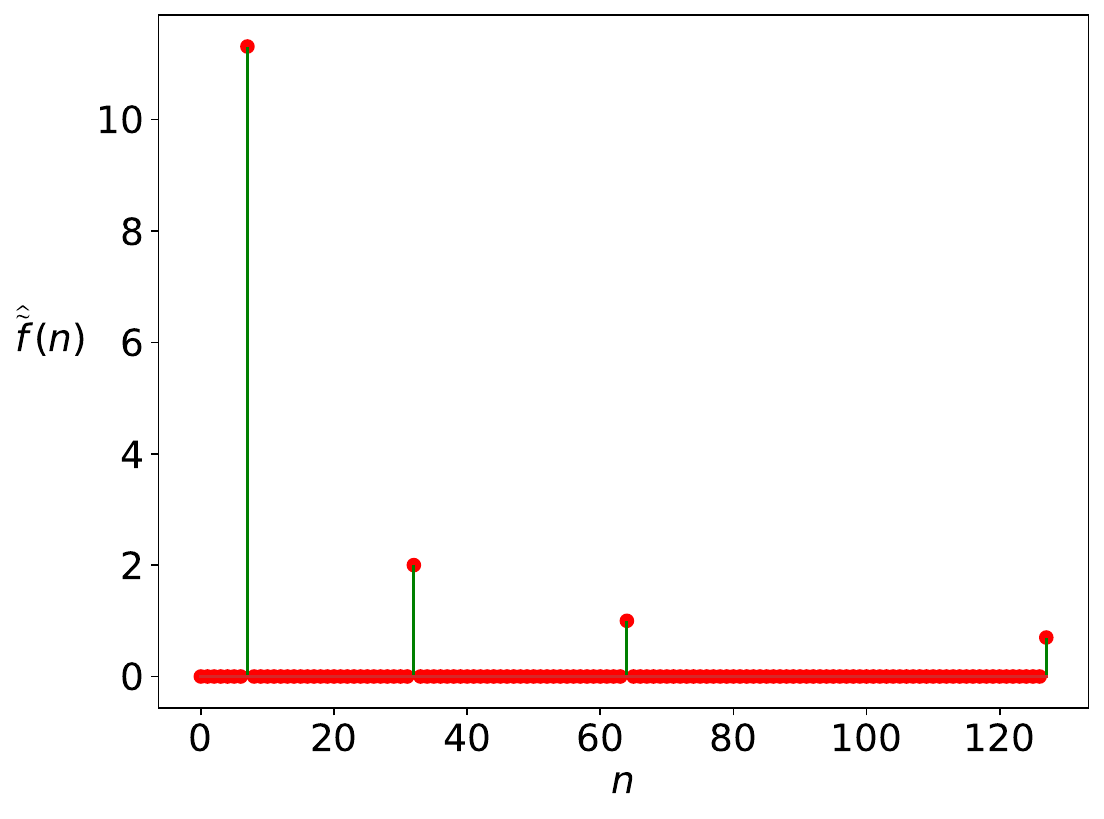}
		\caption{Sequency spectrum}
		 \label{fig:spectrum_f}
\end{subfigure}

    \caption{DC filtered signal $\widetilde{f}(t)$ is shown on left ((a)) and its sequency spectrum  $\widehat{\widetilde{f}}(n)$ is shown on right ((b)).  Filtered signals and corresponding spectra obtained from our proposed quantum approach (shown in green) match the expected results (shown in red). The input signal $f(t)$ and its sequency spectrum are shown in \mfig{fig:signal_and_spectrum_f}.
    }
    \label{fig:dc_filtered_signal_and_spectrum_f}
\end{figure}

\begin{figure}[H] 
\centering
      \begin{subfigure}{0.48\textwidth}
		\centering
 \includegraphics[width=\textwidth]{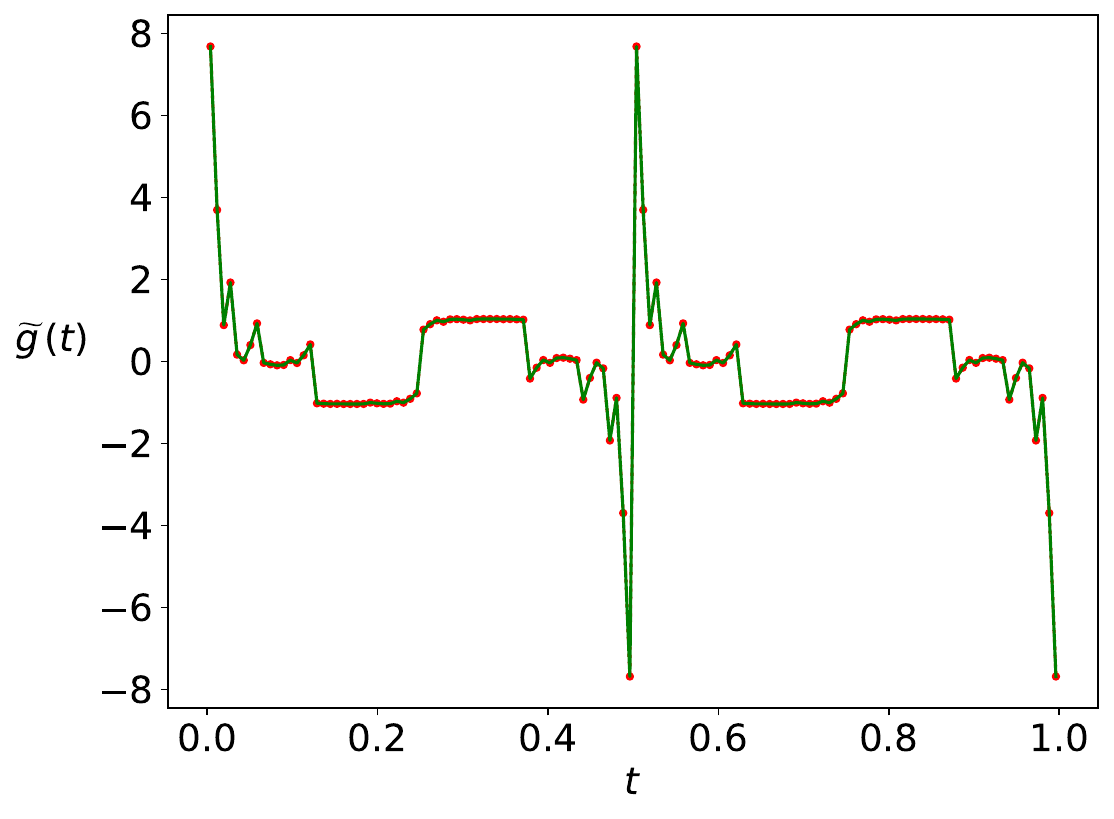}
		\caption{Signal}
		 \label{fig:signal_f_dc}
	\end{subfigure}
\begin{subfigure}{0.48\textwidth}
		\centering
 \includegraphics[width=\textwidth]{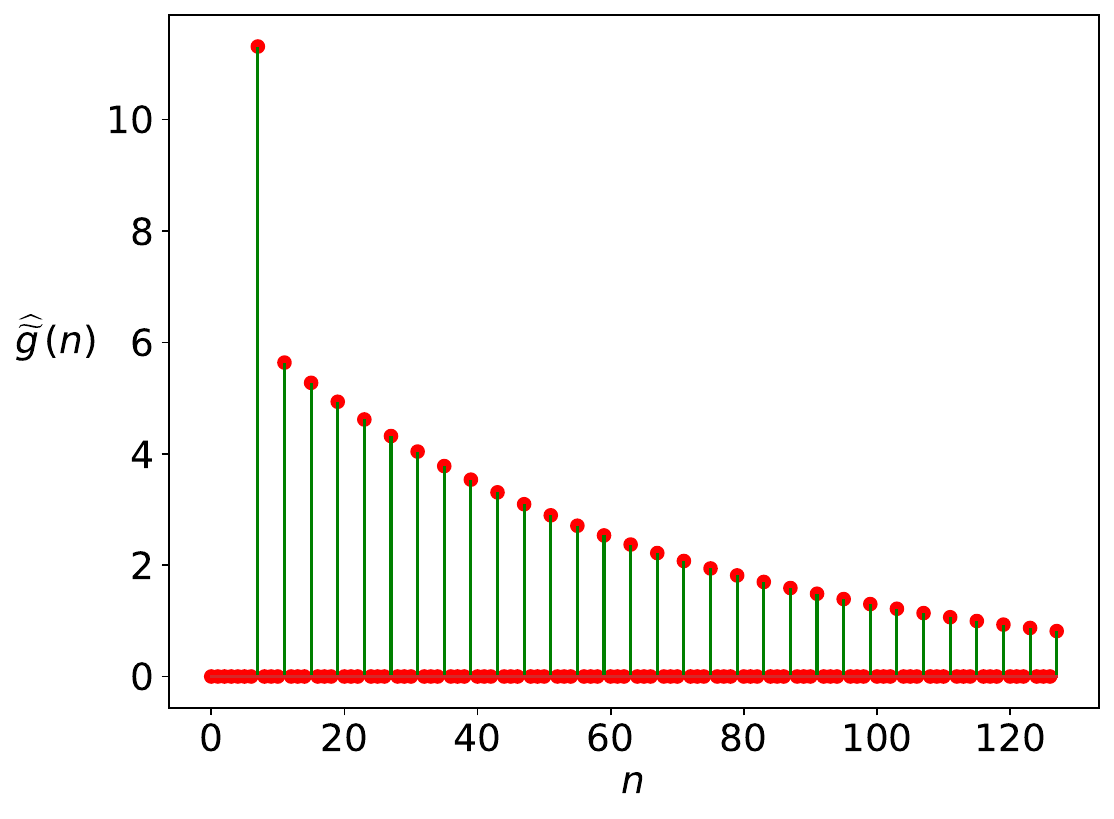}
		\caption{Sequency spectrum}
		 \label{fig:spectrum_f_dc}
\end{subfigure}
    \caption{DC filtered signal $\widetilde{g}(t)$ is shown on left ((a)) and its sequency spectrum  $\widehat{\widetilde{g}}(n)$ is shown on right ((b)).  Filtered signals and corresponding spectra obtained from our proposed quantum approach (shown in green) match the expected results (shown in red). The input signal $g(t)$ and its sequency spectrum are shown in \mfig{fig:signal_and_spectrum_g}. 
    }    \label{fig:dc_filtered_signal_and_spectrum_g}
\end{figure}

\subsection{Band-pass filtering} \label{Sec:bandpass}
In band-pass filtering, sequencies in a certain range are allowed to pass through, while sequencies outside that range are blocked. A band-pass filter is designed to have two cutoff sequencies: a lower cutoff frequency ($c_l$) and an upper cutoff sequency ($c_h$). Sequencies below $c_l$ and above $c_h$ are blocked, while sequencies within the range of $c_l$ to $c_h$ are allowed to pass.

One can perform band-pass filtering by using a combination of low-pass and high-pass filters in a cascade. The idea is to use a high-pass filter to block low sequency components and a low-pass filter to block high sequency components, effectively allowing only the desired band of sequencies to pass through. By adjusting the cutoff frequencies of the low-pass and high-pass filters, one can control the range of sequencies that are allowed to pass through, effectively achieving the desired band-pass filtering effect.

In \mfig{fig:circuit_band_pass_filtering}, a quantum circuit for band-pass filtering is shown. Similar to the quantum circuit diagram shown in \mfig{fig:circuit_DC}, the two redundant $X$ gates are removed from this circuit. The output of this circuit is 
 $\ket{0} \otimes \ket{\widetilde{\Psi}_{b}} + \ket{1} \otimes \ket{\widetilde{\Psi}_{b^{\prime}}}$, where $\widetilde{\Psi}_{b}$  denotes the filtered signal with sequency components within the interval $[N/4, 3N/4)$, and $\widetilde{\Psi}_{b^{\prime}}$ denotes the signal with sequency components outside the range $[N/4, 3N/4)$. We note that in the quantum circuit shown in \mfig{fig:circuit_band_pass_filtering}, within the barriers labeled B and C, the first multiple control $X$ gate with open controls flips the state of the target ancilla qubit to $\ket{1}$ for sequencies less than  $N/4$. The second multiple control $X$ gate with closed controls flips the state of the target ancilla qubit to $\ket{1}$ for sequencies greater than or equal to $3N/4$.

The quantum circuit shown in \mfig{fig:circuit_band_pass_filtering} is used for band-pass filtering of discretized versions of the input signal $g(t)$. The resulting band-pass filtered signal and its sequency spectrum are shown in \mfig{fig:signal_band_pass_filtering}. The obtained results are as expected.

\begin{figure}[H]
    \centering
\includegraphics[width=0.99\textwidth]{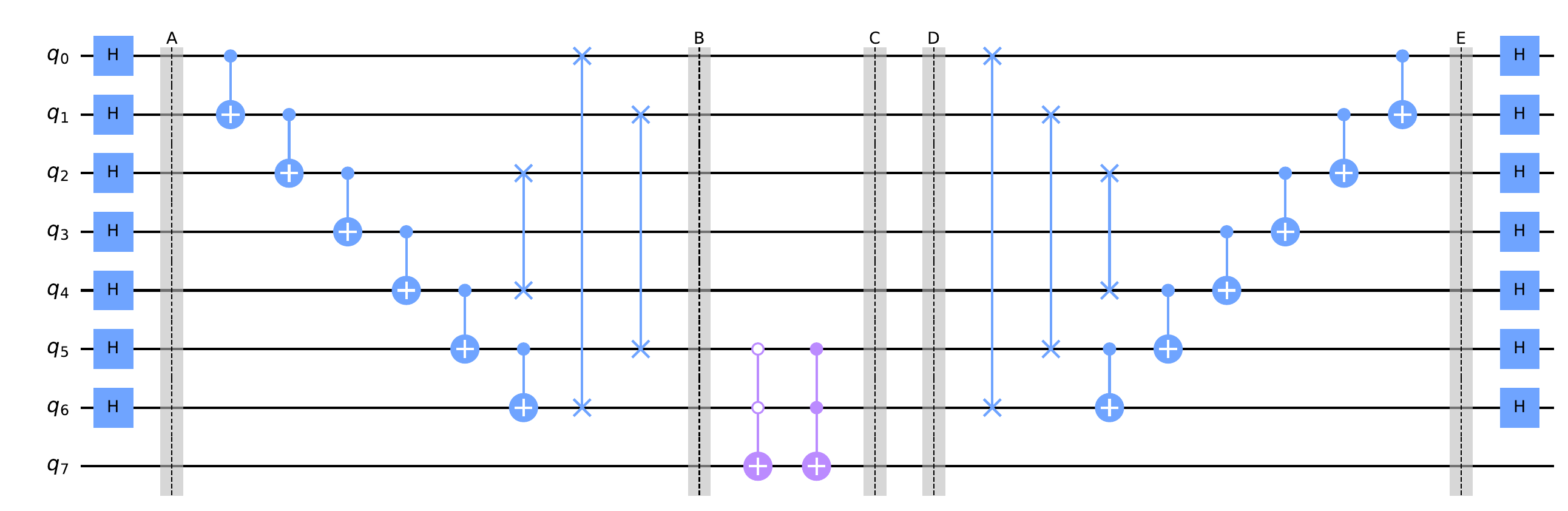}
    \caption{Quantum Circuit for band-pass filtering: This circuit selectively allows to pass sequency components within the range $[N/4,\, 3N/4)$, if the output state of the ancilla qubit $\ket{q_7}$ is $\ket{0}$.}
    \label{fig:circuit_band_pass_filtering}
\end{figure}

\begin{figure}[H]
    \centering
\includegraphics[width=0.48\textwidth]{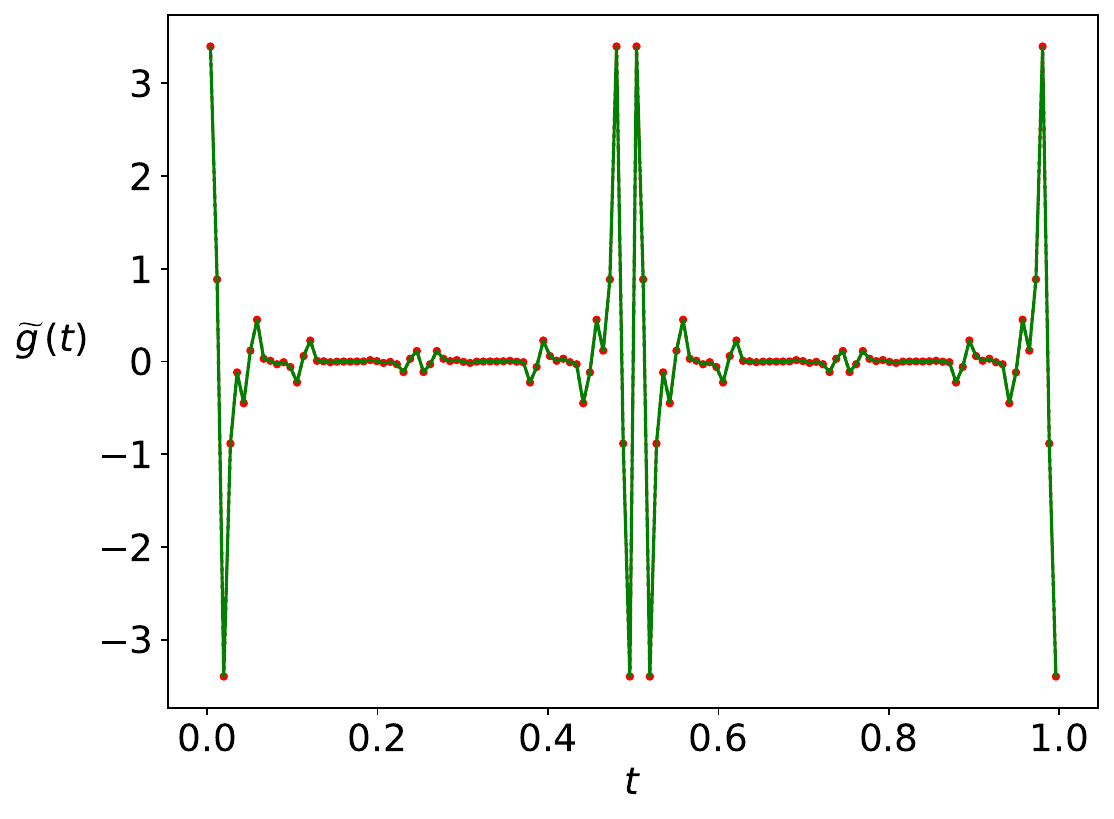}
\includegraphics[width=0.48\textwidth]{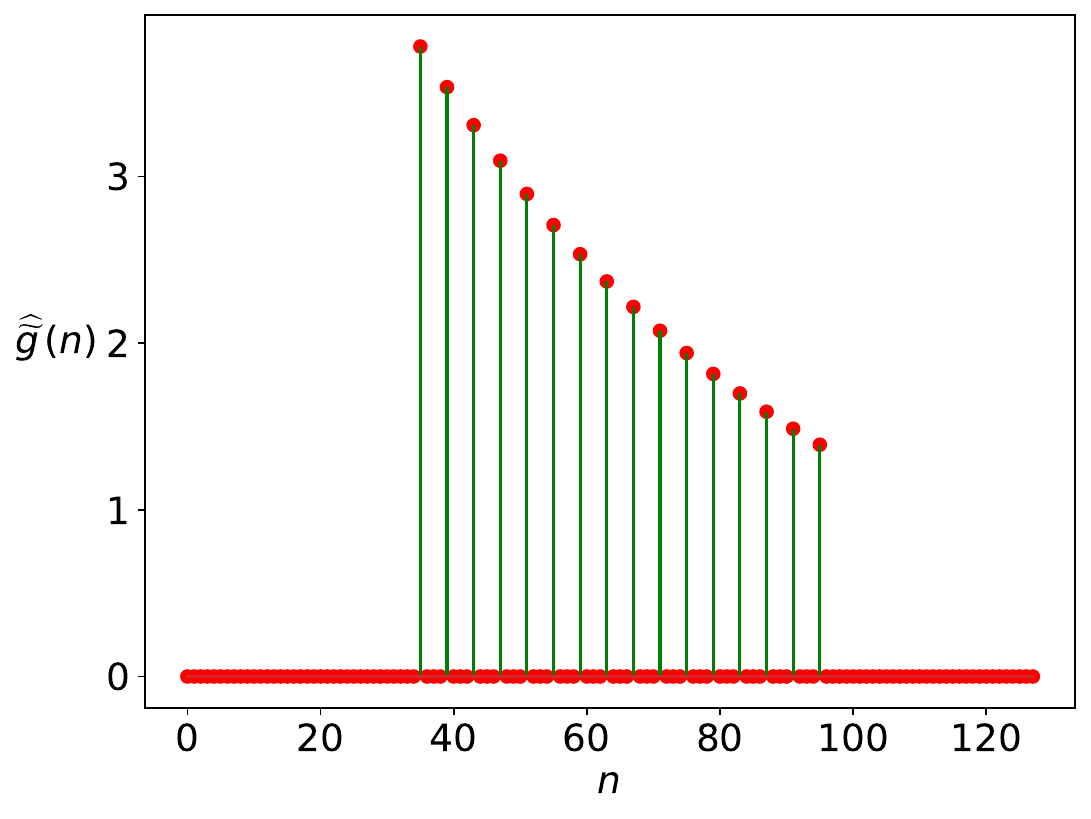}
    \caption{Band-pass filtered signal (left) with sequencies in the range [N/4, 3N/4) and corresponding spectrum (right). Filtered signals and corresponding spectra obtained from our proposed quantum approach (shown in green) match the expected results (shown in red). The input signal $g(t)$ and its sequency spectrum are shown in \mfig{fig:signal_and_spectrum_g}.
    }
    \label{fig:signal_band_pass_filtering}
\end{figure}

\subsection{Complexity Analysis} \label{Sec:complexity}
Assume that the discretized version of the input signal is represented using a vector of length $N=2^n$. Algorithm \ref{alg_filtering}
requires $ (\log_2 N)+1$ qubits for low-pass and high-pass filtering, where the most-significant qubit is used as the ancilla qubit. Step 3 in Algorithm \ref{alg_filtering}, needs $\log_2 N $ Hadamard gates and one $X$ gate (shown before the barrier labeled A in the quantum circuits given in \mfig{fig:low_pass_circuit_filtering}). Steps 4 and 7 in  Algorithm \ref{alg_filtering}, used for conversion from natural ordering to sequency ordering and 
from sequency ordering to natural ordering, respectively, require $O (\log_2 N) $ CNOT and Swap gates (shown between the barriers labeled A and B, and between the barriers labeled D and E, in \mfig{fig:low_pass_circuit_filtering}). Step 6 in  Algorithm \ref{alg_filtering} is optional and can be ignored for the present discussion. The gate complexity and the circuit depth of Step 5 of Algorithm \ref{alg_filtering} depends upon the desired cutoff sequency (this step corresponds to the part of the quantum circuit shown between the barriers labeled B and C in \mfig{fig:low_pass_circuit_filtering}). 
We note that for the examples considered in \mfig{fig:low_pass_circuit_filtering}, for the cutoff sequencies of $\frac{N}{2}$ and $\frac{N}{4}$, one CNOT gate with an open control and one multiple controlled $X$ gate with two open controls are needed. 
It is easy to see that for cutoff sequencies of $\frac{N}{2^r}$, with $1 \leq r \leq n $, one multiple control $X$ gate with $r$ open controls is needed in Step 5 of Algorithm \ref{alg_filtering}. For cutoff sequences of the form  $N - \frac{N}{2^r}$, with $1 \leq r \leq n $, one multiple control $X$ gate with $r$ closed controls is needed in Step 5 of Algorithm \ref{alg_filtering}. This was noted earlier in Remark \ref{Remark_complexity} (b) and (c).
We assume that the cutoff sequency is such that the Step 5 of Algorithm \ref{alg_filtering} can be implemented using  $O (\log_2 N) $ gates. 
Finally, Step 7 in  Algorithm \ref{alg_filtering}, which converts the filtered signal in the sequency domain back to the time domain, requires  $\log_2 N$ Hadamard gates. 

It is clear from the above discussion that the gate complexity of Algorithm \ref{alg_filtering} is  $O (\log_2 N ) $. Here the assumption is that Step 5 of Algorithm \ref{alg_filtering} can be implemented using  $O (\log_2  N ) $ gates, which is the case for the cutoff sequencies of the form $\frac{N}{2^r}$, with $1 \leq r \leq n $ or $N - \frac{N}{2^r}$, with $1 \leq r \leq n $. It can be easily checked that the depth of the corresponding quantum circuit needed  is also $O (\log_2 N ) $.

We note that the filtering approach presented in Algorithm \ref{alg_filtering} can be modified to use QFT (Quantum Fourier Transform) instead of WHT (Walsh-Hadamard Transform) by considering filtering in the frequency domain in place of the sequency domain. If we neglect the cost of state preparation and measurement, the QFT-based filtering approach will have the gate complexity of $O (\left(\log_2 N\right)^2) $ (as the gate complexity of QFT circuit with $n$ qubits is $O (n^2)$, where $n=\log_2 N $). The circuit depth for the QFT-based filtering approach will also be $O (\left(\log_2 N \right)^2) $. 
Since the gate complexity and also the circuit depth of our proposed approach for filtering in the sequency domain is $O (\log_2 N ) $, our proposed approach presented in Algorithm \ref{alg_filtering} (based on Walsh-Hadamard transforms) offers a quadratic improvement in gate complexity and circuit depth in comparison to the QFT-based filtering approach. 

We note that the classical FFT-based approach for filtering is of $O (N \log_2 N ) $. The QFT-based filtering approach already provides an exponential improvement compared to the classical FFT-based filtering approach (where the cost of state preparation and measurement is not considered). As noted above, our proposed approach gives a quadratic improvement in gate complexity and circuit depth compared to the QFT-based filtering approach. 

For many quantum algorithms, including Algorithm \ref{alg_filtering} presented in this work, state preparation and measurement costs remain key challenges in realizing their quantum advantage over their corresponding classical counterparts. However, the proposed algorithm based on Walsh-Hadamard transforms (or the QFT-based filtering approach) will be helpful in situations where the input comes from some other quantum subroutine (as its output), making state preparation unnecessary. Moreover, one can extract global features of the filtered signal without incurring the measurement costs associated with the complete determination of the output filtered signal.

\section{Conclusion}\label{sec:conclusion}

In this work, we introduced a novel quantum approach, including Algorithm \ref{alg_filtering}, for signal filtering, using the Walsh-Hadamard transform in sequency ordering. We effectively addressed DC, low-pass, high-pass, and band-pass filtering tasks using our proposed approach. A quantum circuit for obtaining the sequency-ordered Walsh-Hadamard transform was presented, along with a proof of its correctness. This circuit was used as a key component of quantum circuits for performing low-pass, high-pass, and band-pass filtering using Algorithm \ref{alg_filtering}. Through computational examples, the performance and accuracy of the proposed quantum approach was demonstrated. By converting time domain signals to the sequency domain, selectively preserving desired sequency components, and transforming the filtered signal back to the time domain, the proposed approach consistently achieved the expected results. 

One of the notable advantages of the proposed algorithm lies in its computational efficiency. The gate complexity and circuit depth of the quantum circuit based on Algorithm \ref{alg_filtering} were shown to be  $O (\log_2 N)$, representing a significant improvement over alternative quantum filtering methods based on QFT. Traditional QFT-based signal filtering methods require at least $O ((\log_2 N)^2)$ gates and circuit depth (associated with performing QFT, when the state preparation and measurement costs are ignored), while classical FFT-based approaches exhibit a complexity of $O (N \log_2 N )$. 
Our proposed approach holds promise for faster and more resource-efficient computations in a wide range of signal processing applications owing to the reduced gate complexity and circuit depth.

%

	\end{document}